\newcommand*\samethanks[1][\value{footnote}]{\footnotemark[#1]}
\newcommand{\AutoAdjust}[3]{\mathchoice{ \left #1 #2  \right #3}{#1 #2 #3}{#1 #2 #3}{#1 #2 #3} }
\newcommand{\Xcomment}[1]{{}}
\newcommand{\InBrackets}[1]{\AutoAdjust{[}{#1}{]}}
\newcommand{\Ex}[2][]{\operatorname{\mathbf E}_{#1}\InBrackets{#2}}
\newcommand{\Prx}[2][]{\operatorname{\mathbf{Pr}}_{#1}\InBrackets{#2}}
\def\prob{\Prx}
\def\expect{\Ex}
\newcommand{\dd}{\mathrm{d}}  
\newcommand{\eps}{\epsilon}
\DeclarePairedDelimiter\abs{\lvert}{\rvert}
\DeclareMathOperator{\Ber}{Ber}
\DeclareMathOperator{\Binom}{Binom}
\newcommand{\R}{\mathbb{R}}
\newcommand{\cA}{\mathcal{A}}
\newcommand{\noaccents}[1]{#1}
\newcommand{\newagentvar}[3][\noaccents]{%
\expandafter\newcommand\expandafter{\csname #2\endcsname}{#1{#3}}%
\expandafter\newcommand\expandafter{\csname #2s\endcsname}{#1{\boldsymbol{#3}}}%
\expandafter\newcommand\expandafter{\csname #2smi\endcsname}[1][i]{#1{\boldsymbol{#3}}_{-##1}}%
\expandafter\newcommand\expandafter{\csname #2i\endcsname}[1][i]{#1{#3}_{##1}}%
\expandafter\newcommand\expandafter{\csname #2ith\endcsname}[1][i]{#1{#3}_{(##1)}}%
}
\newcommand{\bval}{\vals}
\newcommand{\balloc}{\mathbf{\alloc}}
\newcommand{\bpay}{\mathbf{\pay}}
\DeclareMathOperator*{\argmax}{arg\,max}
\DeclareMathOperator{\Rev}{Rev}
\DeclareMathOperator{\OPT}{OPT}
\DeclareMathOperator{\VCG}{VCG}
\DeclareMathOperator{\LA}{LA}
\newcommand{\sechigh}{\vali[\mathrm{sh}]}
\newcommand{\res}{r}
\newcommand{\monres}{\res^*}
\newcommand{\unknown}{s}
\newtheorem{theorem}{Theorem}[section]%
\newtheorem{lemma}[theorem]{Lemma}%
\newtheorem{fact}[theorem]{Fact}
\newtheorem{claim}[theorem]{Claim}
\newtheorem{definition}[theorem]{Definition}
\newtheorem{example}[theorem]{Example}
\newtheorem{remark}[theorem]{Remark}
\newcommand{\ClaimName}[1]{\label{claim:#1}}
\newcommand{\Claim}[1]{Claim~\ref{claim:#1}}
\newcommand{\EquationName}[1]{\label{eqn:#1}}
\newcommand{\Equation}[1]{Equation~\eqref{eqn:#1}}
\newcommand{\FactName}[1]{\label{fact:#1}}
\newcommand{\Fact}[1]{Fact~\ref{fact:#1}}
\newcommand{\LemmaName}[1]{\label{lem:#1}}
\newcommand{\Lemma}[1]{Lemma~\ref{lem:#1}}
\newcommand{\TheoremName}[1]{\label{thm:#1}}
\newcommand{\Theorem}[1]{Theorem~\ref{thm:#1}}
\newcommand{\Subsection}[1]{Subsection~\ref{subsec:#1}}
\newcommand{\SubsectionName}[1]{\label{subsec:#1}}
\newcommand{\Section}[1]{Section~\ref{sec:#1}}
\title{The Vickrey Auction with a Single Duplicate Bidder Approximates the Optimal Revenue}
\date{\today}
\author{
	Hu Fu\thanks{Department of Computer Science, University of British Columbia. \tt{\{hufu, cvliaw, srand\}@cs.ubc.ca}}
	\and
	Christopher Liaw\samethanks	
	\and
    Sikander Randhawa\samethanks
}
\begin{document}

\maketitle


Bulow and Klemperer's well-known result states that, in a single-item auction where the $n$ bidders' values are independently and identically drawn from a regular distribution, the Vickrey auction with one additional bidder (a duplicate) extracts at least as much revenue as the optimal auction without the duplicate.  Hartline and Roughgarden, in their influential 2009 paper, removed the requirement that the distributions be identical, at the cost of allowing the Vickrey auction to recruit $n$ duplicates, one from each distribution, and relaxing its revenue advantage to a $2$-approximation.

In this work we restore Bulow and Klemperer's number of duplicates in Hartline and Roughgarden's more general setting with a worse approximation ratio.  We show that recruiting a duplicate from one of the distributions suffices for the Vickrey auction to $10$-approximate the optimal revenue.
We also show that in a $k$-items unit demand auction, recruiting $k$ duplicates suffices for the VCG auction to $O(1)$-approximate the optimal revenue.

As another result, we tighten the analysis for Hartline and Roughgarden's Vickrey auction with $n$ duplicates for the case with two bidders in the auction.  We show that in this case the Vickrey auction with two duplicates obtains at least $3/4$ of the optimal revenue.  This is tight by meeting a lower bound by Hartline and Roughgarden.  En route, we obtain a transparent analysis of their $2$-approximation for $n$~bidders, via a natural connection to Ronen's lookahead auction.

\newpage
\cleardoublepage
\setcounter{page}{1}

\section{Introduction}
\label{sec:intro}

Bulow and Klemperer's theorem \cite{BK96} is a fundamental result in the auction literature, drawing a connection between two basic auctions for selling a single item.  The second price auction, also known as the Vickrey auction \citep{Vic61}, lets the highest bidder win the item at the price of the second highest bid.  It always sells to the highest bidder, thereby maximizing the market's efficiency, requires no prior information on the bidders, and is easy to implement in practice.  On the other hand, revenue optimal auctions, those that generate the most revenue for the auctioneer, are usually more intricate.  They may sometimes not sell to anyone, or sell to a bidder who does not place the highest bid, and their rules in general depend on prior information on bidders' values \citep{M81}.  \citeauthor{BK96} showed that, in the case where bidders' values are drawn independently and identically from a distribution that satisfies a commonly assumed property known as regularity, the simple Vickrey auction with $n+1$ bidders extracts no less revenue than the revenue optimal auction with $n$ bidders, for any $n \geq 1$.  This fundamental result makes an elegant comparison between the revenue impact brought by enhanced competition and that attributed to more precise market information and tailored auction design.  It holds for a large family of distributions, as long as the bidders' values are i.i.d.\@ drawn.

\citet{HR09}, in their seminal work that started a fruitful line of research on the revenue of simple auctions (as exemplified by the Vickrey auction), extended \citeauthor{BK96}'s result, with approximation, when one removes the assumption of symmetry, i.e., when bidders' values are drawn independently but not identically from regular distributions.  
This extension is approximate for two changes it makes in the comparison between the two auctions. First, the Vickrey auction, instead of recruiting only one additional bidder, is now allowed to recruit $n$ additional bidders (called duplicates), one from each distribution. Second, the revenue of this Vickrey auction with $n$ duplicates, instead of being lower bounded by the optimal revenue without duplicates, is shown to be at least half as much.  
Bidders with different value distributions are often interpreted as coming from different populations or possessing different characteristics.  Under this interpretation, the Vickrey auction is shown to approximate the optimal revenue if one is allowed to recruit at least one additional bidder from each population.
\citeauthor{HR09} also gave an example with two bidders where the Vickrey auction, even with two duplicate bidders, one from each distribution, extracts only $3/4$ of the revenue of the optimal auction without duplicates.

%
The two changes made in \citeauthor{HR09}'s result leave two gaps between the behavior of auctions with i.i.d.\@ bidders and those with independent but non-identical ones.
The first gap is a necessary one: \citeauthor{HR09}'s example shows that, absent symmetry among bidders, a constant fraction of revenue has to be lost by using the simpler Vickrey auction, even when all bidders are duplicated. 
The other gap has been an open question: \emph{how many duplicate bidders} are needed for the Vickrey auction to give a constant approximation to the optimal revenue?  Whereas for the i.i.d.\@ case a single duplicate bidder suffices, it has remained unknown whether $n$ duplicates are necessary when the value distributions are not identical.

\subsection{Main Results}

\paragraph{Vickrey auction with a single duplicate.} In this work, we close this second gap.  We show that, in a single-item auction with $n$ bidders, whose values are drawn independently from possibly different regular distributions, the Vickrey auction with \emph{one} additional bidder with value drawn from \emph{one of the existing distributions} obtains at least $\tfrac 1 {10}$ of the optimal revenue extractable from the original bidders.\footnote{In fact, it extracts at least $\frac{1}{10}$ of the optimal \emph{ex ante} revenue from the original bidders.}  Therefore, to retain a constant fraction of the optimal revenue, the number of duplicate bidders needed in the non-i.i.d.\@ setting is in fact no more than in the i.i.d.\@ setting.

Our analysis parts ways from the very beginning with that by Hartline and Roughgarden, who use the Myersonian ``virtual surplus'' as an (exact) upper bound on the optimal revenue without duplicates.  Instead, we work with the looser bound given by the ex ante optimal revenue.  The looseness of this bound is compensated by the ease with which it allows connections to be drawn between the optimal revenue and values (or prices) in bidders' distributions, and all such quantities across bidders to be manipulated in a linear fashion.  With this, we identify a set of bidders that, in total, contribute a constant fraction of the ex ante optimal revenue by using high prices (so high as to be comparable to the optimal revenue).  Two observations are in order: (a) if a bidder accepts a high price with a constant probability, then duplicating this bidder alone suffices for the Vickrey auction to extract a constant fraction of the optimal revenue; and (b) if there is a high price such that the \emph{sum} of probabilities with which bidders take the price is a constant, then the VCG auction even \emph{without} duplicates is a constant approximation.  A technical lemma (Lemma~\ref{lem:single-technical}) shows that we must, in fact, be in one of these scenarios.  The lemma makes use of regularity and the set of bidders we identified in the previous step.  


\paragraph{Choosing the distribution to duplicate.}
Naturally, the distribution to be duplicated needs to be carefully chosen; for instance, if most bidders' values are constantly $0$ or negligible, duplicating an arbitrary or a random bidder helps little with the revenue.  
We show several forms of minimal information that the auctioneer could acquire in order to choose the distribution to duplicate: to guarantee a constant approximation achieved by the Vickrey auction with a duplicate, it suffices that the auctioneer knows (approximately) the value of each bidder at a specified quantile; that is,  the auctioneer only needs to know, for each bidder~$i$,  what is roughly the value~$\vali$ such that with probability~$q$ bidder~$i$ bids above~$\vali$, where $q$ is a fixed probability, say, $0.5$.
Note that the approximation ratio may depend on how well the auctioneer can query the value of a bidder at the specified quantile $q$ as well as the quantile itself.
It also suffices if the auctioneer has a single sample from each distribution albeit with a different constant approximation (see \Theorem{single-beta-exact}).
Our result may be seen therefore as describing a tradeoff between reducing the number of duplicate bidders and acquiring some minimal knowledge on the value distributions.  

We remark that we do not claim that the best use of such prior information is to find a duplicate and run the Vickrey auction.  In fact, with the said types of information, one may run other auctions for comparable revenue guarantees.
For example, with a single sample from each distribution, one may use the maximum of these samples as an anonymous reserve in the Vickrey auction
and guarantee a $4$-approximation \cite{HR09}.
We see the main contribution of this work as conceptually closing the long-standing gap in the understanding on the number of duplicates needed for the Vickrey auction to be approximately revenue optimal.  
The Vickrey auction is fundamental and ubiquitous, and, partly owing to this, Bulow and Klemperer's original result is a central link in the auction literature; our result extends it to non-i.i.d.\@ settings in a spirit close to the original.  
 The results on choosing the duplicate are natural consequences of our techniques used to prove the main result.

On the other hand, as we discuss at the end of the paper, the upper bounds we give are unlikely to be tight; for all that we know, it is possible that a tighter analysis could show the VCG auction with a duplicate to be a comparable or even better way to utilize the kind of partial information we discussed.  We leave this intriguing question to future work.

\paragraph{$k$ duplicates for $k$-items auctions with unit demands.}  We extend our result to $k$-items auctions with unit demands, where there are $k$ identical items for sale and each bidder needs only one of them. The generalization of the Vickrey auction here is the VCG auction, which sells to the highest $k$ bidders at the price of the $(k+1)$-st highest bid.  
A simple example shows that one needs more duplicates to generate \emph{any} revenue in the VCG auction: 
let there be one bidder whose value is~$1$ and all the other bidders have value~$0$.  Here even if we duplicate all bidders, the VCG auction still has revenue~$0$, whereas the optimal revenue is~$1$.
To address this, we need to be allowed to make \emph{multiple} duplicates of the same bidder.  The above example shows that $k$ duplicates are necessary for the VCG auction to give any approximation.
We show that this is tight: as long as the bidders' values are drawn independently from regular distributions, there exists a bidder such that one may add $k$ duplicates of her to the VCG auction and guarantee a constant fraction of the optimal revenue without duplicates.

\citet{HR09} addressed the issue in the example in a different way: the VCG auction may still add only one duplicate for each bidder, but each original bidder and her duplicate can win at most one item.  With this restriction, they show that the VCG auction with a duplicate for every bidder is again a $2$-approximation.  We extend our analysis to this scenario as well, and show that duplicating $k$ bidders suffices for the VCG auction to be a constant approximation, albeit with a smaller constant.
The following example shows the tightness of our result: let $k / 3$ bidders have value~$1$ and all other bidders have value~$0$, then one must duplicate $\Omega(k)$ bidders to guarantee a constant approximation.


\paragraph{Tighter analysis of the Vickrey auction with $n$ duplicates.}  

Since \citeauthor{HR09} first gave the upper bound of~$2$ and lower bound of $\tfrac 4 3$ on the approximation factor of the Vickrey auction with $n$ duplicates, it has remained an open question what this factor is in the worst case.  
We make progress towards resolving this by showing that for $n = 2$, $\tfrac 4 3$ is the tight bound.  We first give an alternate proof of \citeauthor{HR09}'s original $2$-approximation.  Our proof uses the geometry of the so-called revenue curves and lower bounds the revenue of the Vickrey auction with $n$ duplicates by that of Ronen's \emph{lookahead auction} \cite{R01}, which is known to $2$-approximate the optimal revenue.
We then further exploit properties of the Vickrey auction and also switch to a \emph{stronger} benchmark, the so-called \emph{ex ante} optimal revenue.  This allows us to identify a small family of distributions as scenarios; a thorough analysis of these reveals the tight ratio of $\tfrac 4 3$.

\paragraph{Related literature.}

From the large literature on ``simple versus optimal auctions'' that followed \citet{HR09}, we point out a few that are closer to our work.  
\citet{SS13} extended \citeauthor{HR09}'s result to distributions that are convex combinations of regular ones (and the duplicate bidders are drawn from each component regular distribution).
\citet{RTY12} showed a Bulow-Klemperer type result for matching environments. \citet{EFFTW17} and \citet{FFR18} studied, for auctions selling multiple \emph{heterogenous} items, the number of duplicate bidders needed for the VCG auction's revenue to approximate the optimal without duplicates.  The bidders' preferences are \emph{i.i.d.\@} drawn, and the non-trivial number of duplicates needed is a consequence of multi-dimensional preferences rather than asymmetry among bidders.

Our results on the minimal information needed to choose the bidder to duplicate also place the work in the literature on prior-independent mechanisms \citep[e.g.][]{AB18, DRY15, FILS15, CHMS13}.  In particular, we show in Theorem~\ref{thm:single-duplicate} that a \emph{single sample} in our setting also suffices to choose the duplicate.

\paragraph{Organization of the paper}

In Section~\ref{sec:prelim} we set up the model formally and give preliminaries.  In Section~\ref{sec:single} we show our result on duplicating few bidders, first for single-item auctions and then for $k$-items auctions with unit demands in \Section{k-item}.  In Section~\ref{sec:2approx} we show our tighter analysis for the auction with $n$ duplicated bidders.  We conclude with some discussion in Section~\ref{sec:conclusion}.

\section{Preliminaries}
\label{sec:prelim}
\paragraph{Single-item auctions.}
A single item is to be sold to $n$ bidders.  
Each bidder $i$ values the item at~$\vali$, which is private to~$i$ and independently drawn from a publicly known distribution~$F_i$.  
We use $F_i$ interchangably as both the distribution of bidder $i$'s value as well as the cumulative density function (cdf) of the distribution, i.e.~$F_i(w) = \prob{v_i \leq w}$.
We will also use the notation $\bval = (\val_1, \ldots, \val_n)$ for the vector containing the values of all bidders and $\bval_{-i} = (\val_1, \ldots, \val_{i-1}, \val_{i+1}, \ldots, \val_n)$ for the vector \emph{excluding}
bidder $i$'s value.

A selling mechanism consists of an allocation rule $\balloc(\bval) = (\alloc_1(\bval), \ldots, \alloc_n(\bval))$, and a payment rule $\bpay(\bval) = (\pay_1(\bval), \ldots, \pay_n(\bval))$,
where $\alloc_i(\bval) \in [0,1]$ indicates the probability that bidder~$i$ receives the item when the bid vector is~$\vals$,
 and $\pay_i(\bval) \in \R$ indicates the amount that bidder $i$ must pay.
The utility of bidder $i$ is quasi-linear, given by $\util_i = \alloc_i(\bval) \val_i - \pay_i(\bval)$.
Note that we must enforce the condition $\sum_i \alloc_i(\bval) \leq 1$ for all $\bval$ as there is only one item for sale.

A mechanism is said to be \emph{dominant strategy incentive compatible} (DSIC) if for all~$i$, any value profile~$\bval$, it holds that for any deviation $\val_i'$, we have
\begin{equation}
    \EquationName{DSIC}
    \alloc_i(\bval) \val_i - \pay_i(\bval) \geq \alloc_i(\val_i', \bval_{-i}) \val_i - \pay_i(\val_i', \bval_{-i}).
\end{equation}
A mechanism is said to be \emph{Bayesian incentive compatible} (BIC) if for any~$i$, any value~$\vali$ and any deviation $\val_i'$, we have
\begin{equation}
    \EquationName{BIC}
    \Ex[\valsmi]{\alloci(\vali, \valsmi) \vali - \payi(\vali, \valsmi)} \geq \Ex[\valsmi]{\alloci(\vali', \valsmi) \vali - \payi(\vali, \valsmi)}.
\end{equation}
A mechanism is said to be \emph{ex-post individually rational} (IR) if a bidder's utility is always nonnegative, i.e.~for all $i$ and $\bval$, we have
\begin{equation}
    \EquationName{IR}
    \alloc_i(\bval) \val_i - \pay_i(\bval) \geq 0.
\end{equation}

As will be clear in the summary of \citeauthor{M81}'s work below, for our purpose there is no need to make a distinction between DSIC and BIC.  Without loss of generality, we focus in this work on DSIC and ex-post IR mechanisms.  In such mechanisms, one may assume that bidders bid truthfully.  The \emph{revenue} of a mechanism is then $\Ex[\vals]{\sum_i \payi(\vals)}$.

\paragraph{Myerson's characterization of optimal mechanisms.}
\citet{M81} completely characterized the revenue of incentive compatible auctions:
\begin{lemma}[\citet{M81}]
    \LemmaName{Myerson}
    \begin{enumerate}
	    \item 
            The payment rule of any BIC mechanism is uniquely determined by its allocation rule up to a constant.  With this, a mechanism is BIC if and only if its allocation rule $\alloci(\vali, \valsmi)$ for each~$i$ is monotone nondecreasing in~$\vali$ for any $\valsmi$.  
	    \item For any BIC mechanism with allocation rule $\allocs$, its expected revenue, up to a constant, is given by $\Ex[\vals]{\sum_i \alloci(\vali) \virti(\vali)}$, where $\virti (\vali) \coloneqq \vali - \frac {1 - F_i(\vali)}{f_i(\vali)}$ is called the \emph{virtual value} of~$\vali$.
	    \item There exists a DSIC and ex-post IR mechanism that is revenue optimal amongst all \emph{BIC} and ex-post IR mechanisms.
    \end{enumerate}
\end{lemma}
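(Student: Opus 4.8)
The plan is to follow the classical argument via interim quantities and the envelope theorem. Fix a bidder~$i$ and, for a BIC mechanism, define the \emph{interim} allocation, payment, and utility $\bar x_i(\vali) \coloneqq \Ex[\valsmi]{\alloci(\vali,\valsmi)}$, $\bar p_i(\vali) \coloneqq \Ex[\valsmi]{\payi(\vali,\valsmi)}$, and $\bar u_i(\vali) \coloneqq \bar x_i(\vali)\vali - \bar p_i(\vali)$. The BIC constraint~\eqref{eqn:BIC} says precisely that $\bar u_i(\vali) = \max_{\vali'}\bigl(\bar x_i(\vali')\vali - \bar p_i(\vali')\bigr)$, i.e.\ $\bar u_i$ is a pointwise supremum of affine functions of~$\vali$, hence convex, hence differentiable outside a countable set, with $\bar u_i'(\vali) = \bar x_i(\vali)$ wherever the derivative exists (the envelope theorem, using that $\vali$ is its own best report). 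Integrating gives $\bar u_i(\vali) = \bar u_i(0) + \int_0^{\vali}\bar x_i(z)\,\dd z$, which pins down $\bar p_i(\vali) = \bar x_i(\vali)\vali - \bar u_i(0) - \int_0^{\vali}\bar x_i(z)\,\dd z$ up to the additive constant~$\bar u_i(0)$; for a DSIC mechanism the same identity holds pointwise in~$\valsmi$. This gives the payment-determination half of item~1. For the monotonicity characterization, convexity of~$\bar u_i$ forces $\bar x_i$ to be nondecreasing (and for DSIC, $\alloci(\cdot,\valsmi)$ nondecreasing for every~$\valsmi$); conversely, if $\bar x_i$ is nondecreasing and payments are set by the displayed formula, a one-line check shows no report beats truth-telling, so the mechanism is BIC.

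For item~2, the expected revenue is $\sum_i \Ex[\vali]{\bar p_i(\vali)}$. Substituting the formula for $\bar p_i$ and applying Fubini (equivalently, integration by parts) to $\Ex[\vali]{\int_0^{\vali}\bar x_i(z)\,\dd z} = \int \bar x_i(z)\,(1-F_i(z))\,\dd z$ yields $\Ex[\vali]{\bar p_i(\vali)} = \Ex[\vali]{\bar x_i(\vali)\,\virti(\vali)} - \bar u_i(0)$. Summing over~$i$ and using $\Ex[\vali]{\bar x_i(\vali)\,\virti(\vali)} = \Ex[\bval]{\alloci(\bval)\,\virti(\vali)}$ gives the stated revenue identity, with additive constant $\sum_i \bar u_i(0)$.

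For item~3, ex-post IR forces $\bar u_i(0)\ge 0$, so revenue is maximized by $\bar u_i(0)=0$ and then equals the \emph{virtual surplus} $\Ex[\bval]{\sum_i\alloci(\bval)\,\virti(\vali)}$; moreover every BIC, ex-post IR mechanism has revenue at most the expected virtual surplus of its own allocation rule. One would like to maximize this pointwise in~$\bval$ subject to $\alloci\ge 0$ and $\sum_i\alloci(\bval)\le 1$, which prescribes awarding the item to a bidder of maximum positive virtual value and to no one if all virtual values are negative. When every $F_i$ is regular, each $\virti$ is nondecreasing, so this rule is monotone in~$\vali$ for fixed~$\valsmi$; by item~1 it is implementable by a DSIC mechanism with the Myerson payments, and it is ex-post IR since the winner pays at most her value. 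As it attains the pointwise maximum of virtual surplus, it is revenue optimal among all BIC, ex-post IR mechanisms.

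The main obstacle is item~3 without regularity: if $\virti$ is not monotone, the pointwise maximizer of virtual surplus need not be a monotone — hence need not be an implementable — allocation rule. The standard fix is \emph{ironing}: replace each $\virti$ by the monotone ``ironed'' version $\bar\varphi_i$ obtained from the convex hull of the revenue curve, which satisfies $\Ex[\bval]{\sum_i\alloci(\bval)\,\virti(\vali)} \le \Ex[\bval]{\sum_i\alloci(\bval)\,\bar\varphi_i(\vali)}$ for every monotone $\bar x_i$ (this is where the convexity of the revenue curve together with the monotonicity from item~1 is used); maximizing the ironed surplus pointwise yields a monotone, DSIC-implementable, ex-post IR rule achieving equality. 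Since the present paper invokes this lemma only for regular distributions and chiefly uses item~2, one could alternatively state and prove just the regular case and omit the ironing machinery altogether.
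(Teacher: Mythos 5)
Your proposal is the standard and correct proof of Myerson's lemma: interim allocation/payment/utility, convexity of the interim utility and the envelope identity giving the payment formula up to a constant, Fubini/integration by parts to convert expected payments into expected virtual surplus, and ironing to handle item~3 beyond the regular case. The paper itself offers no proof of this statement---it is quoted from Myerson (1981)---so there is nothing to compare against; your closing remark that only the regular case (where ironing is unnecessary) is actually used in this paper is accurate.
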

Throughout the paper, we fix the constant so that the lowest type in the support has expected utility $0$.

A distribution is \emph{regular} if the virtual value $\virt(\val)$ is monotone nondecreasing in~$\val$.  Given the characterization, it is immediate that, for regular distributions, the revenue-optimal auction allocates to the bidder with the highest virtual non-negative value; if all virtual values are negative, the item is not sold.  We will refer to the revenue of Myerson's optimal auction as the \emph{optimal revenue}.

\paragraph{Revenue curves.} 
\citet{BR89} gave an influential reinterpretation of \citeauthor{M81}'s characterzation, and we will use heavily this interpretation.  Given a bidder whose value is drawn from distribution~$F$, for every take-it-or-leave-it price~$p$, she will buy with probability $1 - F(p)$, thereby generating revenue $p(1 - F(p))$.  Let $q(p) \coloneqq 1 - F(p)$ be the probability of selling at~$p$, and call it the \emph{quantile} of~$p$.  The plot of revenue $\Rev(q) \coloneqq q\cdot F^{-1}(1 - q)$ against~$q \in [0, 1]$ is called the \emph{revenue curve} of the distribution.  With a slight overloading of notation, for a quantile~$q$ we denote by $\val(q) \coloneqq F^{-1}(1 - q)$ its corresponding value.

It is without loss of generality to assume that $\Rev(0) = \Rev(1) = 0$.\footnote{See, e.g.\@ Appendix~A of \citet{FILS15} for an argument.}  The most important fact about the revenue is that the derivative of $\Rev(q)$ at quantile~$q$ is equal to the virtual value of~$\val(q)$.  In this paper, we do not appeal to this connection other than the curve's concavity for regular distributions.  

The highest point of a revenue curve gives the maximum revenue extractable from a bidder by posting a take-it-or-leave-it price.  This price is called the \emph{monopoly reserve} for the bidder.  For bidder~$i$ we denote this by $\monres_i$.  When there is a single bidder in an auction, the revenue yielded by posting $\monres$ is in fact the optimal revenue, by Myerson's characterization.

\paragraph{The Vickrey Auction.}  
In the Vickrey auction (a.k.a.\@ the second price auction), the item is allocated to the bidder with the highest value, who is then charged the bid of the second highest bidder.  \citet{Vic61} showed that this auction is DSIC and ex-post IR.

\paragraph{Auctions with duplicate bidders and Bulow-Klemperer type results.}
In general, the revenue of the Vickrey auction can be far from the optimal revenue.
\citet{BK96} showed that, when bidders' values are drawn i.i.d.\@ from a regular distribution, the second price auction, by recruiting one additional bidder (called a \emph{duplicate bidder}), could reverse the comparison:
\begin{theorem}[\citet{BK96}]
    \TheoremName{BK}
    Let $\OPT$ be the optimal revenue for an $n$ bidder, single-item auction where each bidder's value is drawn independently from an identical regular distribution.
    Then the expected revenue of the second price auction with $n+1$ independent bidders from the same distribution is at least $\OPT$.
\end{theorem}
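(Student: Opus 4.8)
The plan is to use \citeauthor{M81}'s characterization (\Lemma{Myerson}) to rewrite revenue as expected virtual surplus, and then to observe that, thanks to regularity, the second price auction with $n+1$ bidders is precisely the revenue-maximizing mechanism among all DSIC and ex-post IR mechanisms on the $n+1$ bidders that \emph{always allocate the item}. The theorem then follows by exhibiting one such always-allocating mechanism whose revenue already equals $\OPT$: run \citeauthor{M81}'s optimal auction on bidders $1,\dots,n$ and, whenever it leaves the item unsold, hand the item to bidder $n+1$ for free.

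First I would prove the ``SPA is optimal among always-allocating mechanisms'' claim. By the second part of \Lemma{Myerson}, the expected revenue of any BIC, ex-post IR mechanism with allocation rule $\allocs$ equals, with the normalization fixing the lowest type's utility at $0$, the quantity $\Ex[\bval]{\sum_i \alloci(\bval)\,\virt(\vali)}$, where $\virt$ is the common virtual value function of the regular distribution. Among allocation rules with $\sum_i \alloci(\bval)=1$ for every $\bval$, this expectation is pointwise maximized by putting all the allocation weight on a bidder attaining $\max_i \virt(\vali)$; since the distribution is regular, $\virt$ is nondecreasing, so this is the same as putting all the weight on a highest-value bidder (ties broken by any fixed rule). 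This allocation rule is monotone nondecreasing in each $\vali$, hence by the first part of \Lemma{Myerson} it is implementable, and its revenue-optimal (lowest-type-utility-zero) payment rule charges the winner the second-highest value --- i.e.\ it is exactly the $(n+1)$-bidder second price auction. So no always-allocating DSIC, ex-post IR mechanism on the $n+1$ bidders beats it in revenue.

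Next I would set up the comparison mechanism $M$ on the $n+1$ bidders: discard bidder $n+1$, run the revenue-optimal auction for bidders $1,\dots,n$ (which by \Lemma{Myerson} and regularity allocates to the bidder of highest nonnegative virtual value and otherwise keeps the item), and in the event the item would go unsold, allocate it to bidder $n+1$ at price $0$. This mechanism always allocates the item; it is DSIC and ex-post IR, since bidder $n+1$ cannot affect the outcome and bidders $1,\dots,n$ face \citeauthor{M81}'s DSIC, IR auction; and its revenue is exactly $\OPT$, because bidder $n+1$ never pays. Combining with the previous paragraph, the revenue of the $(n+1)$-bidder second price auction is at least the revenue of $M$, which is $\OPT$, as desired.

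I expect the point needing the most care is the claim that the always-allocating virtual-surplus maximizer is both implementable and identical to the second price auction: this is exactly where regularity is essential, letting us replace ``highest virtual value'' by ``highest value'' and thereby obtain a monotone allocation rule; one also has to check that the payment constant is normalized consistently across the two mechanisms --- both are pinned down by giving the lowest type in the support expected utility $0$, as fixed throughout the paper --- and to handle ties among equal bids, which have probability $0$ when $F$ is atomless and can otherwise be broken arbitrarily without affecting any step.
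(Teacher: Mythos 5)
Your proposal is correct: it is the standard modern proof of Bulow--Klemperer via Myerson's virtual-surplus characterization (showing the $(n+1)$-bidder Vickrey auction is revenue-optimal among always-allocating mechanisms, then exhibiting the ``run the $n$-bidder optimal auction and dump the unsold item on bidder $n+1$'' mechanism as a feasible competitor with revenue $\OPT$). The paper itself does not prove this theorem --- it is imported directly from \citet{BK96} --- so there is no in-paper argument to compare against; your write-up correctly isolates where the i.i.d.\ and regularity hypotheses enter (a common, monotone $\virt$ is what makes ``highest virtual value'' coincide with ``highest value''). One cosmetic remark: your comparison mechanism need not give bidder $n+1$'s lowest type zero utility when the support's minimum is positive, but this only makes its revenue at most its expected virtual surplus, which is the direction of inequality you need, so nothing breaks.
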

\citet{HR09} extended this theorem in an approximate manner to settings where bidders' values are independently but not identically drawn:
\begin{theorem}[\citet{HR09}]
    \TheoremName{HR}
    Let $\OPT$ be the optimal revenue extractable in an auction with $n$ bidders where bidder~$i$'s value is drawn independently from a regular distribution $F_i$.  
    The expected revenue of the second price auction with $2n$ bidders, where the values of bidder~$i$ and $n + i$ are independently drawn from~$F_i$,  is at least $\tfrac 1 2 \OPT$.
\end{theorem}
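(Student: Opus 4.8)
The plan is to bypass Hartline and Roughgarden's virtual-surplus bound entirely and instead route the comparison through Ronen's lookahead auction. Recall that the lookahead auction $\LA$ on the $n$ original bidders sells only to the highest bidder: if bidder $i$ is highest and $r$ is the second-highest bid, it offers $i$ the price that is revenue-optimal for $F_i$ subject to being at least $r$, which for a regular distribution is $\max(r,\monres_i)$. Since $p\,(1-F_i(p))$ is unimodal with peak at $\monres_i$, the revenue of $\LA$ equals $\sum_i \Ex[\bval_{-i}]{\max_{p\ge r_{-i}} p\,(1-F_i(p))}$, where $r_{-i}:=\max_{j\ne i}\vali[j]$. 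Ronen's theorem \cite{R01} gives $\LA\ge \tfrac12\OPT$, so it suffices to show that the revenue $\SPA(2n)$ of the Vickrey auction with the $2n$ bidders is at least the revenue of $\LA$ on the $n$ originals.

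For that I would first record a pointwise lower bound on the second-highest of the $2n$ values. Fix a value profile; let $a$ be the highest of the $n$ original bidders, let $x$ and $y$ be the values of bidder $a$ and of its duplicate, and let $r:=r_{-a}$ be the second-highest original value, so $x\ge r$. Then the second-highest among all $2n$ values is always at least $\max(\min(x,y),r)$: if the overall highest value is $a$ or $a$'s duplicate, this holds because both members of population $a$ are present and one of them is $x\ge r$; and if the overall highest value is the duplicate of some other population $b$, then $x$ itself lies below the overall maximum and above $r$ (as $a$ is the top original), so it already witnesses the bound. Taking expectations and then decomposing according to which population supplies the top original bidder gives
\[
\SPA(2n)\;\ge\;\sum_a \Ex[\bval_{-a}]{g_a(r_{-a})},\qquad
g_a(r):=\Ex[x,y\sim F_a]{\mathbf{1}[x> r]\,\max(\min(x,y),r)},
\]
the duplicates of the non-top populations having been integrated out (the pointwise bound never referenced them).

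Comparing term by term against $\LA$, it remains to prove the single-distribution inequality $g_a(r)\ge \max_{p\ge r} p\,(1-F_a(p))$ for every regular $F_a$ and every $r\ge 0$, and this is the step I expect to be the main obstacle. It is a strengthening of the usual generalized Bulow--Klemperer inequality: the black-box fact "two i.i.d.\ bidders with reserve $r$ beat the optimal reserve price $\ge r$" is too weak, since $g_a(r)$ falls short of that two-bidder revenue by exactly $r\,F_a(r)(1-F_a(r))$ — precisely the revenue from profiles in which $a$'s original fails to clear $r$ while its duplicate does, which is recovered under a different population's summand but not within this one. So I would argue directly on the revenue curve: in quantile language, with $q:=1-F_a(r)$, one gets the clean identity $g_a(r)=2\int_0^{q}\Rev(t)\,dt+(1-q)\Rev(q)$ while $\max_{p\ge r}p\,(1-F_a(p))=\max_{t\le q}\Rev(t)$, and the inequality then follows from concavity of $\Rev$ together with $\Rev(0)=\Rev(1)=0$, via a two-case argument (if $q$ is before the peak, bound $\Rev$ below by the chord from the origin; if $q$ is past the peak, use two chords, one up to the peak and one from the peak down to $(1,0)$). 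Chaining the three pieces yields $\SPA(2n)\ge \LA\ge \tfrac12\OPT$, which is the theorem; as a side effect the argument exhibits the lookahead auction as the natural intermediate benchmark, which the subsequent tight analysis for $n=2$ can then sharpen.
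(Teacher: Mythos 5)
Your proposal is correct and is essentially the paper's own proof: the chain $\SPA(2n)\ge\LA\ge\tfrac12\OPT$ is exactly how the paper derives Theorem~\ref{thm:HR} (see the remark following Lemma~\ref{lem:spa-lookahead}), your quantity $\sum_a \Ex[\bval_{-a}]{g_a(r_{-a})}$ is precisely the revenue of the paper's intermediate auction SPALD, and your identity $g_a(r)=2\int_0^{q}\Rev(t)\,dt+(1-q)\Rev(q)$ together with the chord/concavity comparison against $\max_{t\le q}\Rev(t)$ is the same computation carried out in the proof of Lemma~\ref{lem:spa-lookahead}. No substantive differences to report.
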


\citeauthor{HR09} also gave an example with $n = 2$ in which the Vickrey auction with $2n$ bidders extracts a revenue that is $\tfrac 3 4 \OPT$.
We recall this example in Section~\ref{sec:2approx}.

\paragraph{Lookahead auction.}
The lookahead auction, first devised and analyzed by \citet{R01},  offers a technical tool to analyze auctions.
At a bid profile $\bval$, the lookahead auction first identifies the highest bidder, say $i^*$, then offers the item to $i^*$ at the optimal posted price for $i^*$ conditioned on $\valsmi$ and the fact $\val_{i^*} \geq \max_{j \neq i} \val_{j}$.  

\begin{theorem}[\citet{R01}]
	\TheoremName{lookahead}
    The lookahead auction is a DSIC, ex-post IR mechanism whose revenue 2-approximates the optimal revenue.
\end{theorem}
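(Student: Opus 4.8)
The plan is to prove the two assertions separately.  For the incentive guarantees, fix a bidder $i$ and the other bids $\valsmi$, and write $m_i \coloneqq \max_{j \neq i} \vali[j]$.  From $i$'s vantage point the lookahead auction is a posted-price mechanism: letting $p_i(\valsmi)$ be the revenue-optimal take-it-or-leave-it price for $F_i$ conditioned on the event $\{\vali \geq m_i\}$, bidder $i$ receives the item exactly when her reported value is both at least $m_i$ (so that she is the highest bidder) and at least $p_i(\valsmi)$.  Posting a price below $m_i$ is dominated once one conditions on $\vali \geq m_i$ (it then sells with probability $1$), so we may take $p_i(\valsmi) \geq m_i$; the ``highest bidder'' requirement is then automatically implied, and $i$ effectively faces the single posted price $p_i(\valsmi)$, which does not depend on her own report.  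A posted-price mechanism is DSIC and ex-post IR, which settles this part.

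For the approximation, let $M^*$ be a revenue-optimal mechanism, taken (by \Lemma{Myerson}) to be DSIC and ex-post IR, and write $\payi[j]^*(\bval)$ and $\alloci[j]^*(\bval)$ for its payment to and allocation of bidder $j$ at the profile $\bval$.  Since bidders bid truthfully, $\OPT = \Ex[\bval]{\sum\nolimits_{j} \payi[j]^*(\bval)}$, and decomposing by the identity of the highest bidder (breaking ties by a fixed rule),
\[
    \OPT \;=\; \sum_i \Ex[\bval]{\mathbf{1}[\vali \geq m_i] \cdot \sum\nolimits_{j} \payi[j]^*(\bval)}.
\]
Now fix $i$ and condition on $\valsmi$, and bound the $i$-th term by splitting the inner sum into bidder $i$'s payment and the rest.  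On the event $\{\vali \geq m_i\}$, the payments collected from bidders $j \neq i$ sum to at most $m_i$: ex-post IR gives $\payi[j]^*(\bval) \leq \alloci[j]^*(\bval)\,\vali[j] \leq \alloci[j]^*(\bval)\,m_i$, and feasibility gives $\sum_{j \neq i} \alloci[j]^*(\bval) \leq 1$.  For bidder $i$'s own payment, observe that $\vali \mapsto \bigl(\alloci^*(\vali, \valsmi),\, \payi^*(\vali, \valsmi)\bigr)$ is an IC, ex-post IR single-bidder mechanism for $F_i$; restricting its domain to $\{\vali \geq m_i\}$ makes it one for the conditional distribution $F_i \mid \vali \geq m_i$, so its revenue is at most the monopoly (optimal posted-price) revenue $R_i(\valsmi)$ of that conditional distribution.  (Here I use the classical fact that the revenue-optimal single-bidder mechanism is a deterministic posted price, which follows from $\Rev(0) = \Rev(1) = 0$ together with concavity of the ironed revenue curve and needs no regularity; for regular $F_i$ it is immediate from \Lemma{Myerson}.)  Combining the two bounds,
\[
    \Ex[\vali]{\mathbf{1}[\vali \geq m_i] \cdot \sum\nolimits_{j} \payi[j]^*(\bval)} \;\leq\; \Prx[\vali]{\vali \geq m_i}\bigl(R_i(\valsmi) + m_i\bigr) \;\leq\; 2\,\Prx[\vali]{\vali \geq m_i}\,R_i(\valsmi),
\]
where the last inequality uses $R_i(\valsmi) \geq m_i$, since the conditioned bidder buys with probability $1$ at price $m_i$.

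To conclude, note that by construction the lookahead auction, given $\valsmi$, offers bidder $i$ precisely the price attaining $R_i(\valsmi)$ whenever $i$ is the highest bidder and collects nothing from $i$ otherwise; hence $\Prx[\vali]{\vali \geq m_i}\,R_i(\valsmi)$ is exactly the expected payment the lookahead auction collects from $i$ conditioned on $\valsmi$.  Taking $\Ex[\valsmi]{\cdot}$ of the displayed inequality and summing over $i$, its right-hand side becomes twice the total expected payment of the lookahead auction; therefore $\OPT \leq 2 \cdot (\text{expected revenue of the lookahead auction})$, as claimed.

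I expect the crux to be the middle step: recognizing that conditioning on the losers' bids reduces the optimal auction to a single-bidder problem for the prospective winner, up to the additive slack $m_i$, which is itself dominated by what the lookahead auction already extracts, and that the lookahead auction is by definition exactly optimal for that single-bidder problem.  The remaining ingredients (the posted-price characterization of the optimal single-bidder mechanism, the feasibility/IR bookkeeping, and the decomposition by highest bidder) are routine.
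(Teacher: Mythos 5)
Your proof is correct; note that the paper gives no proof of this theorem (it is quoted from Ronen, 2001), and your argument is essentially Ronen's original one: decompose $\OPT$ by the identity of the highest bidder, bound the losers' payments by $m_i$ via ex-post IR plus feasibility, bound the prospective winner's payment by the optimal single-bidder revenue for the conditional distribution, and observe that the lookahead auction collects exactly that conditional optimum, which in turn dominates $m_i$. The DSIC/IR argument via a report-independent posted price $p_i(\valsmi) \geq m_i$ is likewise the standard one.
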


\paragraph{Ex ante optimal revenue.}
Almost all approximation results we present in this paper are in fact with respect to the \emph{ex ante optimal revenue}, a benchmark that is usually strictly stronger than the optimal revenue.  Given regular distributions with revenue curves $\Rev_1, \cdots, \Rev_n$, the ex ante optimal revenue is the value of the following program:
\begin{equation}
\EquationName{ExAnteSingle}
\begin{aligned}
    \max  & 
	\sum_{i=1}^n \Rev_i(q_i) \\
    \text{s.t. }& \sum_{i=1}^n q_i \leq 1.
\end{aligned}
\end{equation}
We denote by $\exanteqi[1], \ldots, \exanteqi[n]$ the optimal solution to this program.  We refer the reader to \citet{MDnA} for an exposition that the ex ante optimal revenue upper bounds the optimal revenue.

\paragraph{$k$-Items Unit-Demand Auctions.}
In a $k$-items unit-demand auction, there are $k$ identical items to sell, and each bidder wants at most one item.  Each bidder's value for an item is, as before, drawn independently.  All the notions defined above carry over to this setting, with minimal change as follows: the feasibility constraint on the allocation rule now becomes, for all~$\vals$, $\sum_i \alloci(\vals) \leq k$, and $0 \leq \alloci(\vals) \leq 1$, for all~$i$.  The generalization of the Vickrey auction in this setting is the VCG auction, which allocates the items to the $k$ highest bidders, and then charges each winner the $(k+1)$-st highest bid.  In the program that defines the ex ante optimal revenue, the constraint now becomes $\sum_i q_i \leq k$, and $q_i \leq 1$ for all~$i$.  Myerson's characterization (Lemma~\ref{lem:Myerson}) still holds, and the revenue-optimal auction allocates the items to at most $k$ bidders with the highest non-negative virtual values.

As we explained in the introduction, in order for VCG with duplicates to be approximately revenue optimal, one needs the further restriction that an original bidder and her duplicate should never both win an item.  
With this modification, \citeauthor{HR09} showed:
\begin{theorem}[\citet{HR09}]
\TheoremName{VCG2Approx}
Let $\OPT$ be the optimal revenue extractable in a $k$-items auction with $n$ unit-demand bidders whose values are independently drawn from regular, non-identical distributions. 
The expected revenue of the VCG auction with every bidder duplicated
is at least half of $\OPT$.
\end{theorem}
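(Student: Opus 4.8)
The plan is to carry the lookahead-auction argument behind the single-item \Theorem{HR} over to the $k$-items setting: lower bound the revenue of the VCG auction with every bidder duplicated (under the restriction, as in the theorem, that an original bidder and her duplicate never both win an item) by the revenue of a $k$-items analogue of Ronen's lookahead auction, and then invoke a $k$-items analogue of \Theorem{lookahead}.

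First I would introduce the \emph{$k$-items lookahead auction}. At a realized profile $\bval$ of the $n$ original bidders, let $i_1,\dots,i_k$ be the identities of the $k$ highest bidders and let $\res$ be the $(k{+}1)$-st highest value; offer each $i_j$ the item at the revenue-maximizing posted price for $F_{i_j}$ restricted to the range $[\res,\infty)$, with the remaining bidders getting and paying nothing. Exactly as for \Theorem{lookahead}, every winner faces a posted price independent of her own value, so the mechanism is DSIC and ex-post IR; and its expected revenue $2$-approximates $\OPT$ by the same accounting as in Ronen's proof — conditioned on the winner set of Myerson's optimal auction and on the losers' values, $\OPT$'s revenue cannot exceed the sum over those winners of their single-bidder optimal (posted-price) revenues, and the factor $2$ absorbs the mismatch between ``winners of Myerson'' and ``the $k$ highest bidders''. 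If a self-contained argument for this $k$-items bound is cumbersome, one can instead cite the known extension of Ronen's lookahead auction to $k$ identical items.

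The crux is showing that the restricted VCG auction with every bidder duplicated earns at least as much as this $k$-items lookahead auction. Condition on $\bval$ and let $W=\{i_1,\dots,i_k\}$ and $\res$ be as above. One first verifies that the restricted VCG revenue is weakly monotone in the set of bidders (this uses the laminar structure of the feasibility constraint: at most one item per (bidder, duplicate) pair, at most $k$ in total), so it suffices to analyze the restricted VCG auction run on the $n$ originals together with just the $k$ duplicates of the bidders in $W$, each drawn freshly from its $F_i$. Since $\vali \ge \res \ge \val_\ell$ for every $i \in W$ and every $\ell \notin W$, every pair outside $W$ is dominated, so each of the $k$ items is won by one member of a distinct pair $\{i,i'\}$ with $i \in W$, and a short computation shows that the winner of pair $i$ pays exactly $\max\{\min\{\vali,\vali'\},\res\}$. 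Taking expectation over the fresh duplicate $\vali'\sim F_i$ and comparing this quantity, pair by pair, against the posted price that the lookahead auction charges $i$ is precisely a Bulow--Klemperer computation for a single regular bidder (two i.i.d.\ copies, one serving as a random reserve), and this is where regularity enters.

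The step I expect to be the main obstacle is the pair-by-pair comparison and the cross-conditioning it hides: the identity of $W$, the reserve $\res$, and the realized value $\vali$ of a pair in $W$ are all correlated through $\bval$, so the single-bidder Bulow--Klemperer inequality must be set up with care — most cleanly, conditioned on $\res$ and on $i \in W$ — before summing over pairs and taking the outer expectation over $\bval$; checking the laminar-matroid monotonicity used to discard the non-top duplicates is a secondary technical point. As an alternative to this entire plan, one could follow Hartline and Roughgarden's original route: upper bound $\OPT$ by Myerson virtual welfare, and then, for each (bidder, duplicate) pair, charge the winning bidder's positive virtual value to the VCG payment collected from that pair, again via the single-bidder regular-distribution lemma.
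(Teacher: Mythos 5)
The paper does not prove this statement at all: \Theorem{VCG2Approx} is quoted as background from \citet{HR09}, whose own proof runs through Myerson virtual surplus -- essentially the route you relegate to your final sentence (bound $\OPT$ by expected virtual welfare, then charge each winner's positive virtual value to the VCG payment of her pair via a single-bidder lemma for regular distributions). Your main plan is therefore a genuinely different argument, and it is a natural one: it is exactly the extension to $k$ items of what the paper itself does for the single-item case, where the Remark following \Lemma{spa-lookahead} reproves \Theorem{HR} by sandwiching the duplicated Vickrey auction between Ronen's lookahead auction and $\tfrac12\OPT$. Your crux step is sound and in fact reduces, pair by pair, to precisely the computation inside \Lemma{spa-lookahead}: conditioning on $\valsmi$ fixes the reserve $\res$ (the $k$-th highest of the others), the duplicate of $i$ is present exactly when $\vali \geq \res$, and the pair's VCG payment $\max\{\min\{\vali,\vali'\},\res\}$ has conditional expectation equal to twice the area under $\Rev_i$ on $[0,q_i(\res)]$ plus the rectangle $(1-q_i(\res))\Rev_i(q_i(\res))$, which concavity bounds below by the conditional-optimal posted-price revenue. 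So the cross-conditioning you worry about is handled cleanly by conditioning on $\valsmi$ rather than on $\res$ and the event $i\in W$.

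Two ingredients do need to be supplied rather than waved at. First, the $2$-approximation of the $k$-items lookahead auction: ``the factor $2$ absorbs the mismatch'' is not an argument; the correct accounting splits $\OPT$ into revenue collected from bidders on the event that they are among the top $k$ (bounded by the lookahead revenue via Ronen's conditional-pricing lemma, applied bidder by bidder conditioned on $\valsmi$) and revenue collected from bidders outside the top $k$ (at most $k\cdot\Ex{\res}$, which the lookahead also achieves since every offered price is at least $\res$). This extension is true and appears in the literature, but it should be proved or cited precisely. Second, the monotonicity of VCG revenue under adding bidders in the truncated-partition-matroid environment, which you need in order to discard the duplicates of bidders outside $W$; this is a known property of matroid markets (VCG revenue is monotone, indeed submodular, in the bidder set there) but is false for general feasibility constraints, so it must be invoked for the matroid structure specifically. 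With those two points pinned down, your proof goes through; what it buys over the virtual-value route of \citet{HR09} is the same thing the paper's single-item Remark advertises -- a transparent, geometry-of-revenue-curves argument that avoids virtual values entirely -- at the cost of needing the $k$-items lookahead benchmark as an intermediate step.
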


In fact, \citeauthor{HR09} showed a more general result for matroid settings.  As we discuss in the conclusion, we leave the question open whether our result for $k$-items auctions can be extended to general matroids.


\section{The Vickrey Auction with a Single Duplicate Bidder}
\label{sec:single}
\subsection{Warm-up: a loose bound for single-item auctions}
\label{sec:warm-up}
We illustrate in this section the main ideas behind our result on single-duplicate auctions and prove a $40$-approximation.
In \Section{single-item}, we optimize the parameters and get an approximation ratio better than~$10$.

Let $\{\exanteqi\}_{i\in[n]}$ be the optimal solution to the ex ante relaxation and let $\OPT$ denote
the optimal value of the ex ante relaxation.
In the case where $\vali(1/4) \geq \OPT / 2$ for some $i$,
posting a price of $\OPT / 2$ to bidder $i$ obtains a revenue of at least $\OPT / 8$.
Hence, Bulow-Klemperer's Theorem (\Theorem{BK}) implies that running a second-price auction with only two copies of bidder~$i$ already obtains a revenue of at least $\OPT / 8$;
introducing additional bidders only weakly increases that revenue and therefore in this case the SPA with bidder~$i$ duplicated obtains a revenue of at least $\OPT / 8$.

Henceforth, we may assume $\vali(1/4) < \OPT / 2$ for all $i$.
We show that in this case the SPA without duplicates in fact already obtains an $O(1)$-approximation of the optimal revenue (and introducing bidders never hurts the revenue).
Let $S$ be $\{i : \val_i(\exanteq_i) \geq \OPT / 2\}$.
Intuitively, bidders not in~$S$ have low prices in the optimal ex ante solution, and altogether contribute at most a constant fraction in the ex ante optimal.  Specifically,
\[
\sum_{i \notin S} \val_i(\exanteq_i) \exanteq_i \leq \frac{\OPT}{2} \sum_{i \notin S} \exanteq_i \leq \frac{\OPT}{2}
\]
where the first inequality is because $\val_i(\exanteq_i) \leq \OPT / 2$ for $i \notin S$ and the second inequality is
because $\sum_{i \notin S} \exanteq_i \leq 1$.
Hence,
\begin{equation}
    \EquationName{rev_bound}
    \sum_{i \in S} \Rev_i(\exanteq_i) = \sum_{i \in S} \val_i(\exanteq_i) \exanteq_i \geq \OPT / 2.
\end{equation}
For $i \in S$, define $q_i' = q_i(\OPT / 2)$ and notice that $\exanteq_i \leq q_i' < 1/4$.
We show that the \emph{sum} of $q_i'$'s is at least a constant, which in turn allows us to show that with constant probability, at least two bidders in~$S$ bid at least $\OPT /2$.

By concavity of the revenue curve (see \Claim{reg}), we have $\Rev_i(q_i') \geq \tfrac 3 4 \Rev_i(\exanteq_i)$.
Plugging this bound into \Equation{rev_bound}, we get that
\begin{equation}
    \EquationName{rev_bound2}
    \sum_{i \in S} \val_i(q_i') q_i' = \sum_{i \in S} \Rev_i(q_i') \geq \frac{3}{4} \sum_{i \in S} \Rev_i(q_i) \geq \frac{3 \OPT}{8}.
\end{equation}
On the other hand, since $\val_i(q_i') = \OPT / 2$ (by definition of $q_i'$), we have
\begin{equation}
    \EquationName{rev_bound3}
    \sum_{i \in S} \val_i(q_i') q_i' = \frac{\OPT}{2} \sum_{i \in S} q_i'.
\end{equation}
Combining \Equation{rev_bound2} and \Equation{rev_bound3} gives
\begin{equation}
    \EquationName{prob_bound}
    \sum_{i \in S} q_i' \geq \frac{3}{4}.
\end{equation}
We now show that with probability $\Omega(1)$, at least \emph{two} bidders bid at least $\OPT / 2$.
Indeed, let $X$ be the random variable that counts the number of bidders in $S$ that bid at least $\OPT / 2$.
Then $\prob{X = 0} = \prod_{i \in S} (1-q_i') \leq \exp(-\sum_{i \in S} q_i') \leq \exp(-3/4)$ and
\begin{align*}
    \prob{X = 1}
    & = \sum_{i \in S} q_i' \prod_{j \in S \setminus \{i\}} (1-q_j') \\
    & \leq \sum_{i \in S} q_i' \exp\left( -\sum_{j \in S \setminus \{i\}} q_j' \right) \\
    & \leq \sum_{i \in S} q_i' \exp\left( 1/4 - \sum_{j \in S} q_j' \right) && \text{(because $1/4-q_i' \geq 0$)} \\
    & = \exp(1/4) \cdot \left(\sum_{i \in S} q_i'\right) \cdot \exp\left(-\sum_{i \in S} q_i'\right)  \\
    & \leq \exp(-3/4),
\end{align*}
where the last inequality is because the function $f(x) = x \exp(-x)$ is maximized at $x = 1$.\footnote{
    This follows from standard calculus. Indeed, note that $f'(x) = (1-x) \exp(-x)$. Hence, $f(x)$ is non-decreasing on $(-\infty, 1]$
    and non-increasing on $[1,\infty)$. So $x = 1$ is the maximizer of $f$ and $f(1) = 1/e$.
}
To conclude, we have $\prob{X \geq 2} = 1 - \prob{X = 0} - \prob{X = 1} \geq 1 - 2\exp(-3/4) > 1/20$.
Hence, SPA extracts a revenue of at least $\OPT / 40$.

\subsection{A tighter bound for single-item auctions}
\label{sec:single-item}
We now optimize the parameters in the argument outlined in Section~\ref{sec:warm-up} and obtain main result for single-item auctions:
\begin{theorem}
\TheoremName{single-item}
In a single item auction with $n$ bidders whose values are drawn independently from (non-identical) regular distributions $F_1, \cdots, F_n$, there exists~$i$ such that the second price auction with the same $n$ bidders and an additional bidder $n+1$ whose value is drawn independently from~$F_i$ achieves at least $0.108$ fraction of the ex ante optimal revenue with the original $n$ bidders.
\end{theorem}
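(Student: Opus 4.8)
The plan is to re-run the argument of Section~\ref{sec:warm-up} with the hard-coded constants $1/4$ and $1/2$ replaced by free parameters $\alpha,\beta\in(0,1)$, and then to optimize their choice. Let $\{\exanteq_i\}$ be the optimal ex ante solution and $\OPT$ its value; this is the benchmark to be approximated. As in the warm-up, I would split on whether some bidder has a large value at quantile $\alpha$.

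In the case $\vali(\alpha)\ge\beta\OPT$ for some $i$: posting the price $\beta\OPT$ to bidder~$i$ sells with probability at least $\alpha$, so the monopoly revenue of $F_i$ is at least $\alpha\beta\OPT$; by Bulow--Klemperer (\Theorem{BK}) the second price auction with two independent copies of $F_i$ already collects at least $\alpha\beta\OPT$, and adding the remaining $n-1$ bidders only raises the second price, so duplicating $F_i$ works. In the complementary case, $\vali(\alpha)<\beta\OPT$ for every $i$, and I set $S=\{i:\vali(\exanteq_i)\ge\beta\OPT\}$. Exactly as in the warm-up, the bidders outside $S$ contribute at most $\beta\OPT$ to the ex ante optimum, so $\sum_{i\in S}\Rev_i(\exanteq_i)\ge(1-\beta)\OPT$. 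For $i\in S$ put $q_i'=q_i(\beta\OPT)$, so that $\exanteq_i\le q_i'<\alpha$; concavity of the revenue curve together with $\Rev_i(1)=0$ gives $\Rev_i(q_i')\ge(1-\alpha)\Rev_i(\exanteq_i)$, and since $\Rev_i(q_i')=\beta\OPT\cdot q_i'$ this yields $\sum_{i\in S}q_i'\ge\tfrac{(1-\alpha)(1-\beta)}{\beta}$. Letting $X$ count the bidders in $S$ whose realized value is at least $\beta\OPT$, I would estimate $\prob{X=0}$ and $\prob{X=1}$ using $1-x\le e^{-x}$, the bound $xe^{-x}\le1/e$ (and the monotonicity of $xe^{-x}$ on $[1,\infty)$), and the per-bidder bound $q_i'<\alpha$, obtaining $\prob{X\ge2}\ge 1-g(\alpha,\beta)$ for an explicit $g$. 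Whenever $X\ge2$ the second price is at least $\beta\OPT$, so the second price auction, hence the auction with any extra duplicate, collects at least $\beta\OPT\,(1-g(\alpha,\beta))$.

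Combining the two cases, the approximation ratio is at least $\min\{\alpha\beta,\ \beta(1-g(\alpha,\beta))\}$, and the remaining work is to choose $\alpha,\beta$ — essentially balancing the two terms — so that this exceeds $0.108$. The hard part will be the estimate of $\prob{X\ge2}$: the two crude inequalities $\prob{X=0}\le e^{-\sum q_i'}$ and $\prob{X=1}\le e^{\alpha}(\sum q_i')e^{-\sum q_i'}$ are what leave the warm-up at a factor of $40$, and getting the constant down near $10$ requires exploiting the per-bidder bound $q_i'<\alpha$ carefully — intuitively the worst configuration is a large number of bidders each accepting $\beta\OPT$ with a small probability (the Poisson regime) rather than a few bidders with moderate probabilities. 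A minor point is to fix which bidder is duplicated: bidder~$i$ in the first case, and an arbitrary bidder of $S$ (the choice being irrelevant there) in the second.
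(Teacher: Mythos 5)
Your overall architecture is the paper's: the same case split (the paper packages it as \Lemma{single-technical}, with the roles of your $\alpha$ and $\beta$ swapped), the same use of \Theorem{BK} in the first case, the same concavity step $\Rev_i(q_i')\ge(1-\alpha)\Rev_i(\exanteq_i)$ yielding $\sum_{i\in S}q_i'\ge\frac{(1-\alpha)(1-\beta)}{\beta}$, and the same ``two high bids'' event in the second case. Case 1 and the derivation of the lower bound on $\sum_i q_i'$ are correct as written.

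The gap is in Case 2, and it is precisely the point you dismiss as ``irrelevant'': which bidder gets duplicated there. With your plan --- count only the original bidders in $S$ and bound $\prob{X=1}\le e^{\alpha}\,T e^{-T}$ via $q_i'<\alpha$ --- the optimization over $(\alpha,\beta)$ tops out near $0.099$, not $0.108$; this is exactly the guarantee the paper proves in \Theorem{single-beta-exact} part~1, where the duplicate cannot be assumed to help in Case~2. The paper's $0.108$ comes from duplicating a bidder $i^*\in\argmax_i q_i'$ and counting the duplicate among the potential high bidders: since $q_{i^*}'$ is maximal, for every $i$ in the enlarged set $S'$ one has $\sum_{j\in S'\setminus\{i\}}q_j'\ge\sum_{j\in[n]}q_j'$, so the bound on $\prob{X=1}$ becomes $(\beta+T_0)e^{-T_0}$ rather than $T_0e^{\beta}e^{-T_0}$ (in the paper's parametrization) --- replacing a multiplicative $e^{\beta}$ by an additive $\beta$ is exactly what closes the numerical gap at $\alpha=0.27$, $\beta=0.4$. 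Your fallback hope, that a sharper estimate of $\prob{X\ge2}$ in the ``Poisson regime'' rescues the no-duplicate bound, is not secure either: for fixed $\sum_i q_i'$ of moderate size the configuration minimizing $\prob{X\ge2}$ can be the one that concentrates mass at the cap $q_i'=\alpha$ rather than the spread-out Poisson limit, so you cannot simply assume the Poisson case is worst. To complete the proof as claimed you should adopt the paper's choice of duplicate in Case~2 and include it in the count.
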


Recall from Section~\ref{sec:prelim} that $\exanteqi[1], \ldots, \exanteqi[n]$ are quantiles that solve the ex ante revenue maximization problem with the original $n$~bidders.
We use $\OPT = \sum_i \Rev_i(\exanteqi)$ to denote the ex ante optimal revenue, which upper bounds the optimal revenue (without duplicates).

Lemma~\ref{lem:single-technical} is the technical heart of the proof.  Using regularity, i.e.\@ the concavity of the revenue curves, it shows that either there is a bidder~$i$ who bids a high value with a constant probability,
or the \emph{sum} of each bidder's probability of bidding a high value must be large.  

\begin{lemma}
\LemmaName{single-technical}
Let $\alpha, \beta \in [0,1]$ be constants.
Consider a single item auction with $n$ bidders whose values are drawn independently from  regular distributions $F_1, \ldots, F_n$.
Let $\OPT$ be the ex ante optimal revenue.
Then exactly one of the following statements is true.
\begin{enumerate}
\item There exists $i \in [n]$ such that $\vali(\beta) \geq \alpha \cdot \OPT$.
\item For all $i$, $\val_i(\beta) < \alpha \cdot \OPT$ but $\sum_{i \in [n]} q_i(\alpha \cdot \OPT) \geq \frac{1-\alpha}{\alpha} \cdot (1 - \beta)$.
\end{enumerate}
\end{lemma}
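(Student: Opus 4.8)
The plan is to mimic, with symbolic parameters, the computation in the warm-up of Section~\ref{sec:warm-up} (which is precisely the case $\alpha=\tfrac12$, $\beta=\tfrac14$). First observe that the two alternatives are mutually exclusive by construction: the premise ``$\vali(\beta)<\alpha\cdot\OPT$ for all $i$'' appearing in statement~2 is literally the negation of statement~1. So the entire content of the lemma is the implication: \emph{if no bidder satisfies $\vali(\beta)\ge\alpha\cdot\OPT$, then $\sum_i q_i(\alpha\cdot\OPT)\ge\frac{1-\alpha}{\alpha}(1-\beta)$}. (We may assume $\alpha,\OPT>0$, since otherwise statement~1 holds trivially because values are nonnegative.)

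Assume then that $\vali(\beta)<\alpha\cdot\OPT$ for every $i$. Let $\{\exanteqi\}_{i\in[n]}$ be an optimal ex ante solution, so $\OPT=\sum_i\Rev_i(\exanteqi)$, and partition the bidders by their ex ante price: $S \coloneqq \{\, i : \vali(\exanteqi) \ge \alpha\cdot\OPT \,\}$. Bidders outside $S$ contribute only a small fraction of $\OPT$, since $\vali(\exanteqi)<\alpha\cdot\OPT$ for $i\notin S$ and $\sum_{i\notin S}\exanteqi\le1$:
\[
  \sum_{i\notin S}\Rev_i(\exanteqi) = \sum_{i\notin S}\vali(\exanteqi)\,\exanteqi \le \alpha\cdot\OPT\sum_{i\notin S}\exanteqi \le \alpha\cdot\OPT,
\]
and therefore $\sum_{i\in S}\Rev_i(\exanteqi)\ge(1-\alpha)\,\OPT$.

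For $i\in S$, let $q_i'\coloneqq q_i(\alpha\cdot\OPT)$ be the quantile at which bidder~$i$'s posted price equals $\alpha\cdot\OPT$, so that $\vali(q_i')=\alpha\cdot\OPT$ and $\Rev_i(q_i')=\alpha\cdot\OPT\cdot q_i'$. Monotonicity of $\vali(\cdot)$ in the quantile turns $\vali(\exanteqi)\ge\alpha\cdot\OPT=\vali(q_i')$ into $\exanteqi\le q_i'$, and turns the failure of statement~1, namely $\vali(\beta)<\alpha\cdot\OPT=\vali(q_i')$, into $q_i'\le\beta$ (this is where the hypothesis enters). Now I use regularity only through concavity of $\Rev_i$ on $[0,1]$ together with $\Rev_i(1)=0$: since $q_i'\in[\exanteqi,1]$, the curve at $q_i'$ lies above the chord joining $(\exanteqi,\Rev_i(\exanteqi))$ and $(1,0)$, so
\[
  \Rev_i(q_i') \ge \frac{1-q_i'}{1-\exanteqi}\,\Rev_i(\exanteqi) \ge (1-q_i')\,\Rev_i(\exanteqi) \ge (1-\beta)\,\Rev_i(\exanteqi),
\]
using $1-\exanteqi\le1$ and $q_i'\le\beta$; this generalizes the step ``$\Rev_i(q_i')\ge\tfrac34\Rev_i(\exanteqi)$'' from the warm-up.

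Finally, sum the last display over $i\in S$ and substitute $\Rev_i(q_i')=\alpha\cdot\OPT\cdot q_i'$:
\[
  \alpha\cdot\OPT\sum_{i\in S}q_i' = \sum_{i\in S}\Rev_i(q_i') \ge (1-\beta)\sum_{i\in S}\Rev_i(\exanteqi) \ge (1-\alpha)(1-\beta)\,\OPT,
\]
so $\sum_{i\in S}q_i'\ge\frac{(1-\alpha)(1-\beta)}{\alpha}$, and hence $\sum_{i\in[n]}q_i(\alpha\cdot\OPT)\ge\sum_{i\in S}q_i'\ge\frac{1-\alpha}{\alpha}(1-\beta)$, which is statement~2. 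The argument is essentially mechanical once the warm-up is in hand, so there is no serious obstacle; the one point needing a little care is treating $q_i(\cdot)$ and $\vali(\cdot)$ as bona fide inverses — so that $\vali(q_i')=\alpha\cdot\OPT$ and the two monotonicity implications above are clean — when $F_i$ has atoms or $\Rev_i$ has flat segments, which is handled by the standard conventions ($\Rev(0)=\Rev(1)=0$ and the appropriate one‑sided continuity) used already in Section~\ref{sec:warm-up}.
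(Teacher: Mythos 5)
Your proof is correct and follows essentially the same route as the paper's: same set $S$, same $(1-\alpha)\OPT$ lower bound on its contribution, and the same concavity step (you simply inline the chord argument that the paper delegates to Claim~\ref{claim:reg}). The extra care about mutual exclusivity, degenerate $\alpha\cdot\OPT=0$, and atoms is fine but not needed beyond what the paper already assumes.
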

\begin{remark}
    The proof will show that if the first condition does not hold in the lemma then the second condition holds which implies that at least one of the two
    statements in the lemma is true.
    However, the two statements are mutually exclusive so this also implies that \emph{exactly} one of the statements is true.
\end{remark}
\begin{proof}[Proof of \Lemma{single-technical}.]
    Let $\exanteq_1, \ldots, \exanteq_n$ be an optimal solution to the ex ante relaxation.
    In particular, $\sum_{i=1}^n \Rev_i(\exanteq_i) = \OPT$.
    Let $S = \{i : \val_i(\exanteq_i) \geq \alpha \cdot \OPT\}$.
    Since $\sum_{i \notin S} \exanteq_i \val_i(\exanteq_i) \leq \alpha \cdot \OPT$ (because $\sum_{i \notin S} \exanteq_i \leq 1$),
    we have $\sum_{i \in S} \exanteq_i \val_i(\exanteq_i) \geq (1-\alpha) \cdot \OPT$.

    Suppose that the first statement of the lemma does not hold, i.e.~$\val_i(\beta) < \alpha \cdot \OPT$ for all~$i$.  Recall that $q_i(\cdot)$ is a non-increasing function.  So for any $i \in S$, $\exanteq_i \leq q_i(\alpha \cdot \OPT) \leq \beta$ by definition.
    By the concavity of the revenue curve (see \Claim{reg}), we have $\Rev_i(q_i(\alpha \cdot \OPT)) \geq (1-\beta) \Rev_i(\exanteq_i)$ for all $i \in S$.  Therefore 
    \begin{align*}
	    \sum_{i \in S} \Rev_i(q_i(\alpha \cdot\OPT)) \geq (1 - \beta) \sum_{i \in S} \Rev_i(\exanteqi) \geq (1 - \beta) (1 - \alpha) \OPT.
    \end{align*}

    But $\Rev_i(q_i(\alpha\cdot\OPT))$ is just $(\alpha \cdot \OPT) q_i(\alpha \cdot \OPT)$.
    Rearranging, we have $\sum_{i \in S} q_i(\alpha \cdot \OPT) \geq \frac{1 - \alpha}{\alpha} \cdot (1 - \beta)$, as claimed.
\end{proof}

We are now ready to prove \Theorem{single-item}.
\begin{proof}[Proof of \Theorem{single-item}.]
	Let $\alpha, \beta \in [0,1]$ be constants such that $\frac {1 - \alpha}{\alpha} \cdot (1 - \beta) \geq 1$.  We will determine their values later.
The proof amounts to analyzing the two cases given in \Lemma{single-technical}.
\paragraph{Case 1: There exists $i \in [n]$ such that $\vali(\beta) \geq \alpha \cdot \OPT$.}
In this case, posting a price of $\vali(\beta)$ \emph{only} to bidder $i$ obtains a revenue of $\beta \vali(\beta) \geq (\alpha \beta) \cdot \OPT$.
Hence, we can apply the Bulow-Klemperer Theorem (\Theorem{BK}) to assert that duplicating bidder~$i$ suffices to get an $(\alpha \beta)$-fraction of the 
ex ante optimal revenue.
Finally, adding the remaining bidders back in can only increase the revenue so this obtains at least an $(\alpha\beta)$-fraction of the ex ante optimal revenue.

\paragraph{Case 2: For all $i$, $\vali(\beta) < \alpha \cdot \OPT$ but $\sum_{i \in [n]} q_i(\alpha \cdot \OPT) \geq \frac{1-\alpha}{\alpha} \cdot (1-\beta)$.}
To ease notation, let $q_i' \coloneqq q_i(\alpha \cdot \OPT)$.
Also, observe that the condition $\vali(\beta) < \alpha \cdot \OPT$ implies that $q_i' \leq \beta$.

In this case, let us duplicate any bidder in $\argmax_{i \in [n]} q_i'$, say $i^*$, and write $S' = [n] \cup \{i^*\}$.
Now, we will compute the probability that there are at least two bidders who bid at least $\alpha \cdot \OPT$.
The probability that no bidder bids above $\alpha \cdot \OPT$ is at most
\begin{equation}
  \EquationName{NoBidderHigh}
  \prod_{i \in [n]} (1 - q_i') \leq \exp\left( - \sum_{i \in [n]} q_i' \right) \leq \exp\left( -\frac{1-\alpha}{\alpha} \cdot (1 - \beta) \right).
\end{equation}

Since we duplicated a bidder in $\argmax_{i \in [n]} q_i'$, the probability that exactly one bidder bids at least $\alpha \cdot \OPT$ is
\begin{equation}
\EquationName{ExactlyOneBidderHighv2}
\begin{aligned}
  \sum_{i \in S'} q_i' \prod_{j \in S' \setminus \{i\}} (1-q_j')
  & \leq
  \sum_{i \in S'} q_i' \exp \left( - \sum_{j \in S' \setminus \{i\}} q_j' \right) \\
  & \leq \sum_{i \in S'} q_i' \exp\left( -\sum_{j \in [n]} q_j' \right)  \quad\text{(since $q_{i*}' \in \argmax_{j \in S} q_j'$)} \\
  & = \left( \sum_{i \in S'} q_i' \right) \cdot \exp\left( -\sum_{i \in [n]} q_i' \right) \\
  & \leq
  \left(\beta + \frac{1 - \alpha}{\alpha} \cdot (1 - \beta)\right) \cdot \exp\left( - \frac{1 - \alpha}{\alpha} \cdot (1 - \beta)\right)
  \quad \text{(since $q_{i^*}' \leq \beta$)},
\end{aligned}
\end{equation}
where the last inequality uses the assumption that $\frac{1 - \alpha}{\alpha} \cdot (1 - \beta) \geq 1$ 
and the function $x\exp(-x)$ is decreasing for $x \geq 1$.
Define
\[
  \eta =
  1 - \left(1 + \beta + \frac{1 - \alpha}{\alpha} \cdot (1 - \beta)\right) \cdot \exp\left( - \frac{1 - \alpha}{\alpha} \cdot (1 - \beta)\right),
\]
which is the probability that at least two bidders bid at least $\alpha \cdot \OPT$.
We have shown that one can always duplicate one bidder to obtain in the Vickrey auction an $\alpha \min\{\beta, \eta\}$-fraction of the optimal revenue.
Choosing $\alpha = 0.27$ and $\beta = 0.4$ and verifying $\frac {1 - \alpha}{\alpha} \cdot (1 - \beta)$ is indeed greater than~$1$, we complete the proof
of \Theorem{single-item}.
\end{proof}

\subsection{Choosing the distribution to duplicate}
\label{sec:robust}
\citeauthor{BK96}'s result and the extension by \citeauthor{HR09} are also seen as prototypical results in \emph{prior-independent} mechanism design.  These results guarantee performances of ``detail-free'' auctions, such as the Vickrey auction, as long as the underlying distribution satisfies some mild property.  As we pointed out in the Introduction, one may not duplicate a single bidder with complete ignorance on the value distributions, since arbitrary or uniformly random duplication easily fails.  We show in this section that the information requirement for choosing the duplication is resilient and minimal.  We give a few variants of the forms of information sufficient for our purpose. 

\begin{theorem}
	\label{thm:single-beta-exact}
	\label{thm:single-beta-noisy}
	\label{thm:single-duplicate}
Consider a single item auction with $n$ bidders whose values are drawn independently from (non-identical) regular distributions $F_1, \ldots, F_n$.
Let $\OPT$ be the ex ante optimal revenue for this setting.
\begin{enumerate}
\item 
	\label{part:single-beta-exact}
Suppose for some $\beta \in [0,1]$, the auctioneer has access to $\Rev_i(\beta)$ (or equivalently, $\val_i(\beta)$) for all~$i$.
Then, by duplicating any bidder~$i$ that maximizes $\Rev_i (\beta)$ 
and running the Vickrey auction on the $n+1$ bidders,
the auctioneer can extract a revenue of at least $c_1(\beta) \cdot \OPT$,
where $c_1(\beta)$ is a constant depending only on $\beta$.
In particular, $c_1(0.355) \geq 0.099$.

\item 
\label{part:single-beta-noisy}
More generally, suppose that the auctioneer has access to the following oracle for each bidder~$i$ for some $\beta \in [0,1/2], \eps \in [0, 1]$:  the oracle returns $\Rev_i(\beta_i')$ with the promise that
$|\beta_i' - \beta| \leq \eps \beta$.
Then, by duplicating any bidder~$i$ that maximizes $\Rev_i(\beta_i')$ 
and running the Vickrey auction on the $n+1$ bidders,
the auctioneer can extract a revenue of at least $c_2(\beta, \eps) \cdot \OPT$,
where $c_2(\beta, \eps)$ is a constant depending only on $\beta$ and $\eps$.
Moreover, $c_2(\beta, \eps) \geq (1-\eps)c_1(\beta)$ where $c_1(\beta)$ is the constant as from part~\ref{part:single-beta-exact}.

\item
\label{part:single-duplicate}
Suppose that the auctioneer can draw a sample $\samplei$, independently, from each distribution~$F_i$.
Then duplicating any bidder in $\argmax_i \samplei$ and running VCG on the resulting $n+1$ bidders yields $0.044$ fraction of the ex ante optimal revenue.
	\end{enumerate}
\end{theorem}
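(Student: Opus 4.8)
The plan is to keep the structure of the proof of \Theorem{single-item}, adding one new ingredient: an argument that, in the case where the choice of which bidder to duplicate actually matters, picking the bidder with the largest sample is almost as good as picking the ``right'' bidder. Fix constants $\alpha,\beta\in[0,1]$ (to be optimized at the end) and apply \Lemma{single-technical}. In its second case --- $\val_i(\beta)<\alpha\cdot\OPT$ for every $i$, but $\sum_i q_i(\alpha\cdot\OPT)$ is large --- nothing changes from the proof of \Theorem{single-item}: the plain second-price auction on the original $n$ bidders already extracts an $\alpha\eta$-fraction of $\OPT$, and since inserting one more bidder can only weakly increase the second-highest bid, the auction with the sampled duplicate does at least as well. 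So all the work is in the first case.

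In the first case there is a bidder $i^*$ with $\val_{i^*}(\beta)\ge\alpha\cdot\OPT$, equivalently $q_{i^*}(\alpha\cdot\OPT)\ge\beta$. By \Theorem{BK} (applicable since $F_{i^*}$ is regular), the second-price auction on two i.i.d.\ copies of $F_{i^*}$ earns at least the monopoly revenue of $F_{i^*}$, which is at least $\Rev_{i^*}(\beta)=\beta\,\val_{i^*}(\beta)\ge\alpha\beta\cdot\OPT$. We do not know which bidder is $i^*$, and $\argmax_i s_i$ may be someone else; the point is a dichotomy governed by $P:=\prod_{k\ne i^*}\bigl(1-q_k(\alpha\cdot\OPT)\bigr)$, the probability that no original bidder other than $i^*$ values the item at $\alpha\cdot\OPT$ or more.

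Whenever the sample $s_{i^*}$ is at least $\alpha\cdot\OPT$ and every other sample is below it, bidder $i^*$ is the sample-maximizer; by independence of the samples this event has probability at least $q_{i^*}(\alpha\cdot\OPT)\cdot P\ge\beta P$, and it is a function of the samples alone, hence independent of the values that enter the auction. Conditioned on it we duplicate $i^*$, so the conditional expected revenue equals the unconditional expected revenue of VCG with $i^*$ duplicated, which is at least that of the second-price auction on just the two copies of $F_{i^*}$, namely at least $\alpha\beta\cdot\OPT$ by the previous paragraph; hence the mechanism earns at least $\alpha\beta^2P\cdot\OPT$. On the other hand, discarding the duplicate, the second-price auction on the original $n$ bidders earns at least $\alpha\cdot\OPT$ times the probability that at least two of them bid $\ge\alpha\cdot\OPT$, which is at least $q_{i^*}(\alpha\cdot\OPT)(1-P)\ge\beta(1-P)$, and re-inserting the duplicate does not hurt. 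Taking the better of the two estimates, the mechanism earns at least $\alpha\cdot\OPT\cdot\max\{\beta^2P,\ \beta(1-P)\}\ge\tfrac{\alpha\beta^2}{1+\beta}\cdot\OPT$, the worst point being $P=\tfrac1{1+\beta}$.

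It then remains to choose $\alpha,\beta$ --- subject to $\tfrac{1-\alpha}{\alpha}(1-\beta)\ge1$, which the second case of \Lemma{single-technical} needs --- so that both $\alpha\eta$ and $\tfrac{\alpha\beta^2}{1+\beta}$ exceed $0.044$; getting the stated constant out may require a slightly finer accounting in the first case, e.g.\ carrying $q_{i^*}(\alpha\cdot\OPT)$ as a free variable rather than replacing it by $\beta$, and folding the contribution of the remaining bidders into the conditional revenue of the first estimate. The step I expect to be the real obstacle is precisely this first case: making quantitative the intuition that $\argmax_i s_i$ can fail to land on $i^*$ often only if enough other bidders clear $\alpha\cdot\OPT$ with non-negligible probability, in which case the plain VCG is already competitive. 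The second case and the Bulow--Klemperer input are routine given the earlier results.
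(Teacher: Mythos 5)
Your argument for part~\ref{part:single-duplicate} is correct and follows essentially the same route as the paper. The paper handles Case~1 by splitting on whether $\Prx{\exists i \in L:\ \vali \geq \alpha\OPT} \geq \gamma$ for a fixed threshold $\gamma$ (getting $\min\{\beta^2(1-\gamma),\beta\gamma\}$, optimized at $\gamma=\beta/(1+\beta)$), whereas you parametrize continuously by $P=\prod_{k\neq i^*}(1-q_k(\alpha\OPT))$ and take the better of the two bounds $\beta^2 P$ and $\beta(1-P)$; the worst case $P=1/(1+\beta)$ yields $\beta^2/(1+\beta)$, which is exactly the paper's optimized constant, so the two case analyses are equivalent. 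Your numerics also check out without the "finer accounting" you hedge about: with $\alpha=0.26$, $\beta=0.51$ one gets $\alpha\beta^2/(1+\beta)\approx 0.0448$ and the Case~2 bound $\alpha\,\eta(\alpha,\beta)\approx 0.046$, so $0.044$ is already achieved. All the individual steps (independence of the samples from the auction values, \Theorem{BK} applied to the pair drawn from $F_{i^*}$, monotonicity of SPA revenue in the bidder set) are sound.

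The genuine gap is coverage: the statement has three parts, and your proposal addresses only part~\ref{part:single-duplicate}. Parts~\ref{part:single-beta-exact} and~\ref{part:single-beta-noisy} concern a different mechanism (duplicate the argmax of $\Rev_i(\beta)$, resp.\ of $\Rev_i(\beta_i')$) and need their own Case~1 argument — which is in fact simpler than yours, since the duplicated bidder $j$ provably satisfies $\Rev_j(\beta) \geq \alpha\beta\OPT$ (it maximizes $\Rev_i(\beta)$ and the witness from \Lemma{single-technical} already attains this), so \Theorem{BK} gives $\alpha\beta\OPT$ directly with no extra factor of $\beta$ or $P$; the noisy version loses only a $(1-\eps)$ factor via \Claim{reg_delta2}. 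You should state these arguments explicitly (and note that Case~2 is unchanged) for the proof of the theorem to be complete.
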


As we emphasized in the Introduction, revenue-wise, running a Vickrey auction with a duplicate may not be the best way to use the forms of prior information in Theorem~\ref{thm:single-duplicate}, but the Vickrey auction has the advantage of being simple and anonymous, and may be particularly useful when individual reserve prices cannot be placed.  We see Theorem~\ref{thm:single-item} as the main result of this work, and Theorem~\ref{thm:single-duplicate} as a natural consequence of our techniques.

\begin{proof}[Proof of \Theorem{single-beta-noisy} part~\ref{part:single-beta-noisy}]  We prove part~\ref{part:single-beta-noisy} directly, and part~\ref{part:single-beta-exact} is implied by the proof.

Let $\alpha \in [0,1]$ be any constant that satifies $(1-\alpha)(1-\beta)/\alpha \geq 1$.
As in the proof of \Theorem{single-item}, we analyze the two cases given by the conclusion of \Lemma{single-technical}.
\paragraph{Case 1: There exists $i \in [n]$ such that $\vali(\beta) \geq \alpha \cdot \OPT$.}
Let $i^* \in \argmax_{i \in [n]} \Rev_i(\beta)$.
Since we duplicate a bidder in $\argmax_{i \in [n]} \Rev_i(\beta_i')$, say $j$, we can apply the Bulow-Klemperer Theorem (\Theorem{BK})
to assert that second price auction with only bidder $j$ already obtains a revenue of at least $\Rev_j(\beta_j)$.
However, note that $\Rev_j(\beta_j') \geq \Rev_{i^*}(\beta_{i^*}') \geq (1-\eps) \Rev_{i^*}(\beta) \geq (1-\eps) \alpha \beta \OPT$,
where the second inequality uses \Claim{reg_delta2}.

\paragraph{Case 2: For all $i$, $\vali(\beta) < \alpha \cdot \OPT$ but $\sum_{i \in [n]} q_i(\alpha \cdot \OPT) \geq \frac{1-\alpha}{\alpha} \cdot (1-\beta)$.}
To ease notation, let $q_i' \coloneqq q_i(\alpha \cdot \OPT)$.
In this case, we will show that the probability that at least two bidders bid at least $\alpha \cdot \OPT$ is quite high even if we do not duplicate any bidder.
Indeed, the probability that no bidder bids at least $\alpha \cdot \OPT$ is
\[
  \prod_{i \in [n]} (1 - q_i') \leq \exp\left( - \sum_{i \in [n]} q_i' \right) \leq \exp\left( -\frac{1-\alpha}{\alpha} \cdot (1 - \beta) \right)
\]
and the probability that exactly one bidder bids at least $\alpha \cdot \OPT$ is
\begin{equation}
\begin{aligned}
  \sum_{i \in [n]} q_i' \prod_{j \in [n] \setminus \{i\}} (1-q_j')
  & \leq
  \sum_{i \in [n]} q_i' \exp \left( - \sum_{j \in [n] \setminus \{i\}} q_j' \right) \\
  & \leq \sum_{i \in [n]} q_i' \exp\left( \beta -\sum_{j \in [n]} q_j' \right)  \quad\text{(since $q_i' \leq \beta$ for all $i$)} \\
  & = \left( \sum_{i \in [n]} q_i' \right) \cdot \exp\left( \beta - \sum_{i \in [n]} q_i' \right) \\
  & \leq
  \left(\frac{1 - \alpha}{\alpha} \cdot (1 - \beta)\right) \exp(\beta) \cdot \exp\left( - \frac{1 - \alpha}{\alpha} \cdot (1 - \beta)\right).
\end{aligned}
\end{equation}
Here, we assume that $\left( \frac{1 - \alpha}{\alpha} \cdot (1-\beta) \right)\geq 1$ and that $x \cdot\exp \left (x \right)$ is non-increasing on $[1, \infty)$.  Hence, the probability that at least two bidders bid above $\alpha \cdot \OPT$ is at least
\[
  \eta(\alpha, \beta) =
  1 -
  \left(1 + \frac{1 - \alpha}{\alpha} \cdot (1 - \beta)\right) \exp(\beta) \cdot \exp\left( - \frac{1 - \alpha}{\alpha} \cdot (1 - \beta)\right).
\]
Hence, the revenue \emph{without} any extra bidders is at least $\alpha \eta(\alpha, \beta) \OPT$, so if we add any duplicate,
we are still guaranteed at least this quantity.

Combining the two cases implies that we obtain a revenue of at least $\alpha \cdot \min\{(1-\eps) \beta, \eta(\alpha, \beta)\} \OPT$.
In particular, one can take
\[
c(\beta, \eps) = \max_{\alpha \in [0,1] : (1-\alpha)(1-\beta)/\alpha \geq 1} \alpha \cdot \min\{(1-\eps) \beta, \eta(\alpha, \beta)\},
\]
completing the proof.
\end{proof}

\begin{proof}[Proof of \Theorem{single-duplicate}, part~\ref{part:single-duplicate}]
Let $\alpha, \beta, \gamma \in (0,1)$ be constants.
For technical reasons, we also need that $(1-\alpha)(1-\beta)/\alpha \geq 1$.
As in the proof of \Theorem{single-item}, we analyze the two cases given by the conclusion of \Lemma{single-technical}.
\paragraph{Case 1: There exists $i \in [n]$ such that $\vali(\beta) \geq \alpha \cdot \OPT$.}
Let $H = \{i \in [n] : \vali(\beta) \geq \alpha \cdot \OPT\}$ and $L = [n] \setminus H$.
We break things down into two smaller cases.
\paragraph{Case 1a: $\Prx{\text{$\vali \geq \alpha \cdot \OPT$ for some $i \in L$}} < \gamma$.}
In this case, with probability at least $\beta (1-\gamma)$, if we draw a single sample from each bidder then the
value of the highest sample will be in $H$.
Conditioned on this event, we duplicate a bidder in $H$ and obtain a revenue of at least $(\alpha \beta) \cdot \OPT$.
So in this case, we get at least $(\alpha \beta^2 (1-\gamma)) \cdot \OPT$.

\paragraph{Case 1b: $\Prx{\text{$\vali \geq \alpha \cdot \OPT$ for some $i \in L$}} \geq \gamma$.}
In this case, even if we do not duplicate any bidders, the probability that at least two bidders bid at least $\alpha \cdot \OPT$
is at least $\beta \gamma$.
Hence, the second price auction already extracts at least $(\alpha \beta \gamma) \cdot \OPT$.

\paragraph{Case 2: For all $i$, $\vali(\beta) < \alpha \cdot \OPT$ but $\sum_{i \in [n]} q_i(\alpha \cdot \OPT) \geq \frac{1-\alpha}{\alpha} \cdot (1-\beta)$.}
In this case, the argument is identical to that of part~\ref{part:single-beta-noisy} of \Theorem{single-beta-noisy}.
Let $\eta(\alpha, \beta)$ be as in the proof of part~\ref{part:single-beta-noisy}.

Putting all the cases together,
we see that this mechanism achieves a revenue of at least $\alpha \cdot \min \{\beta^2 (1 - \gamma), \beta \gamma, \eta(\alpha, \beta)\} \cdot \OPT$.
Choosing $\alpha = 0.26, \beta = 0.51, \gamma = 0.34$ yields a revenue of at least $0.0446 \OPT$.
\end{proof}

\section{The VCG Auction with $k$ Duplicates in $k$-Items Auctions}
\label{sec:k-item}
In this section we describe our results for $k$-items auctions with unit demand bidders.
The generalization of the Vickrey auction in this setting is the VCG auction, as we explained in Section~\ref{sec:prelim}.
Again we will show that, in order for the VCG auction to secure a constant fraction of the optimal auction's revenue, one needs to duplicate far fewer bidders than required by Theorem~\ref{thm:VCG2Approx}. However, one must think carefully about conditions that should be imposed on the duplication environment.  As we showed in the Introduction, the example with one bidder having value~$1$ and all other bidders having value~$0$ shows that even if one duplicates all bidders, the VCG auction still has revenue~$0$ whereas the optimal revenue is~$1$.  

The VCG auction therefore needs more power in this setting than simply duplicating bidders at most once.  We study two settings that grant the VCG auction different powers.  In the first setting, which we deem more natural and term as the ``free'' setting, the VCG auction may duplicate the same bidder more than once.
In the example above, it is easily seen that the VCG auction needs to duplicate the same bidder at least $k$ times to be a constant approximation.
Our main theorem in this section shows that duplicating $k$ bidders is sufficient.

\begin{theorem}
	\label{thm:kUnitFree}
In a $k$-item auction with $n$ unit demand bidders whose values are drawn independently from (non-identical)
regular distributions $F_1, \ldots, F_n$, there exists $i$ such that the VCG auction
with the same $n$ bidders and additional bidders $n+1, \ldots, n+k$ whose values are drawn independently
from $F_i$ achieves at least $0.009$ fraction of the ex ante optimal revenue with the original $n$ bidders.
\end{theorem}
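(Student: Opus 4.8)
**The plan is to mimic the single-item proof of \Theorem{single-item}, replacing the probabilistic "two bidders clear the bar'' event with a "at least $k+1$ bidders clear the bar'' event, and replacing Bulow--Klemperer by the observation that $k$ duplicates of a single high bidder suffice.**

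First I would invoke a $k$-items analogue of \Lemma{single-technical}. The ex ante program for $k$ items has $\sum_i q_i \le k$ and $q_i \le 1$, and $\OPT = \sum_i \Rev_i(\exanteqi)$ still upper bounds the true optimal revenue. Running the same argument: set $S = \{i : \vali(\exanteqi) \ge \alpha\OPT\}$; since $\sum_{i\notin S}\exanteqi \le k$ we get $\sum_{i\notin S}\exanteqi\vali(\exanteqi) \le k\alpha\OPT$, hence $\sum_{i\in S}\Rev_i(\exanteqi) \ge (1-k\alpha)\OPT$ (so we will need $\alpha < 1/k$; this is where the extra factor of roughly $k$ in the constant gets absorbed—note $\alpha\OPT$ is then a price of order $\OPT/k$, which is the right scale since the optimal revenue per item is about $\OPT/k$). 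If no bidder has $\vali(\beta) \ge \alpha\OPT$ then, by concavity, $\Rev_i(q_i(\alpha\OPT)) \ge (1-\beta)\Rev_i(\exanteqi)$ for $i\in S$, and rearranging gives $\sum_{i\in[n]} q_i(\alpha\OPT) \ge \frac{1-k\alpha}{\alpha}(1-\beta)$. Choosing parameters so this sum is at least $k$ (or, better, comfortably larger than $k$), we are in the regime where many bidders clear the price $\alpha\OPT$.

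Then I split into the two cases exactly as before. In Case 1, some bidder $i$ has $\vali(\beta)\ge\alpha\OPT$, so posting price $\vali(\beta)$ to $i$ alone yields $\beta\vali(\beta) \ge \alpha\beta\OPT$; now I add $k$ copies of this bidder. I claim the VCG auction with these $k+1$ i.i.d.\ copies of $F_i$ has revenue at least this single-bidder posted-price revenue—intuitively, with $k$ items and $k+1$ i.i.d.\ draws, the $(k+1)$-st highest (the price charged to each of the $k$ winners) stochastically dominates a fresh independent draw, so the expected VCG revenue is at least $k\cdot\Ex{}{\text{revenue of posting } \vali(\beta)}$; even the crude bound that with probability $\ge \beta(1-(1-\beta)^{\ldots})$ at least two copies exceed $\vali(\beta)$ gives revenue $\Omega(\beta^2)\vali(\beta)$, which suffices up to constants. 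Adding the original bidders back only helps. In Case 2, $q_i' := q_i(\alpha\OPT) \le \beta$ for all $i$ and $\sum_i q_i' \ge \frac{1-k\alpha}{\alpha}(1-\beta)$; here I want $\Prx{}{\text{at least } k+1 \text{ bidders bid} \ge \alpha\OPT} = \Omega(1)$, since then VCG (the $(k+1)$-st price is $\ge\alpha\OPT$) collects $k$ winners each paying $\ge\alpha\OPT$, for revenue $\ge k\alpha\OPT \cdot \Omega(1) = \Omega(\OPT)$. To show the tail event is constant, I let $X = \sum_{i}\Ber(q_i')$ with $\Ex{}{X} = \sum_i q_i' \ge k+c$ for a suitable slack $c>0$ and bound $\Prx{}{X \le k}$; a Chernoff bound on the lower tail of a Poisson-binomial with mean bounded below by $k+c$ gives a constant upper bound, hence $\Prx{}{X \ge k+1} \ge \eta$ for a constant $\eta$ depending only on $\alpha,\beta$ (and, through the slack, on how much we overshoot). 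Combining, the VCG auction with $k$ duplicates of the chosen bidder gets at least $\alpha\cdot\min\{\Omega(\beta^2),\, \eta\}\cdot\OPT$; plugging in concrete $\alpha = \Theta(1/k)$, $\beta$ a constant, and optimizing yields the claimed $0.009$.

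The main obstacle is the Case-2 concentration step: unlike the single-item case where we only needed "at least two'' out of a mass-$3/4$ Poisson binomial, here we need "at least $k+1$'' from a Poisson binomial whose mean we can only push slightly above $k$, and crucially each $q_i' \le \beta$ is bounded away from $1$ so no single bidder dominates the sum—this is exactly the structure that makes a Chernoff lower-tail bound applicable, but getting a clean \emph{constant} (rather than something decaying in $k$) requires that the relative slack $\frac{1-k\alpha}{\alpha}(1-\beta)/k - 1$ be bounded below by an absolute constant, which forces $\alpha$ to be a constant factor below $1/k$ and is the reason the final constant degrades from the $\approx 0.1$ of \Theorem{single-item} down to $0.009$. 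A secondary technical point is making the stochastic-dominance claim for the $(k+1)$-st order statistic in Case 1 fully rigorous, but as noted even the weak "two copies clear $\vali(\beta)$'' bound is enough up to constants, so nothing deep is needed there.
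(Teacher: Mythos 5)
Your Case 2 (the concentration case) is essentially the paper's argument: the paper also shows $\sum_i q_i(\cdot) \geq \tfrac{3k}{2}$ and concludes that at least $k+1$ bidders clear the bar with probability $\geq 1/2$ (it uses a theorem of Hoeffding on Poisson binomials plus the binomial median fact rather than Chernoff; note that a Chernoff lower tail with mean only $k+c$ for an \emph{additive} constant $c$ does not give a constant success probability, so you do need the constant \emph{relative} slack, as you partly acknowledge). The real problem is Case 1.

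In the $k$-item setting your threshold is $\alpha\cdot\OPT$ with $\alpha = \Theta(1/k)$, so the hypothesis of your Case 1 --- a single bidder $i$ with $\vali(\beta) \geq \alpha\cdot\OPT$ --- only guarantees that this bidder is worth $\beta\,\vali(\beta) = \Theta(\OPT/k)$ under a posted price. Duplicating her $k$ times cannot recover a constant fraction of $\OPT$ from her copies alone: with $k$ items and $k+1$ i.i.d.\ copies, the VCG price is the \emph{minimum} of the $k+1$ draws, which is stochastically dominated \emph{by} (not dominating) a fresh draw, so your claimed bound of $k$ times the posted-price revenue is false; the correct Bulow--Klemperer-type bound for $k$ items with one original bidder plus $k$ duplicates is only the single-bidder monopoly revenue, i.e.\ $\Theta(\OPT/k)$. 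Your fallback bound $\Omega(\beta^2)\vali(\beta)$ is likewise $\Theta(\OPT/k)$, not $\Omega(\OPT)$. This is exactly the "middle ground" the paper flags as the tricky part. Its \Lemma{kItemsCasesNew} replaces your first alternative with: there is a set $H$ of at most $k$ high bidders (each with $\vali(\beta) \geq \tfrac{\gamma}{k}\OPT$) that \emph{collectively} carry $\delta\cdot\OPT$ of ex ante revenue, plus a separate alternative where $k$ such bidders exist. Since only one distribution may be duplicated, the paper then splits $H$ further: either one bidder in $H$ alone carries $\tfrac{\delta}{4}\OPT$ (then her $k$ duplicates all clear a price of $\tfrac{\delta}{4k}\OPT$ with probability $(1-1/k)^{k+1}\geq 1/8$, giving $\Omega(\OPT)$), or $H$ must contain $r \geq \tfrac{\delta}{2R_1}$ bidders, and the expected number of high bids among the originals in $H$ \emph{together with} the $k$ duplicates exceeds $k+1$, after which the same Hoeffding/median concentration applies. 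Your two-case dichotomy skips this entire analysis, and without it the argument does not yield a constant approximation.
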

We relegate the proof of Theorem~\ref{thm:kUnitFree} to Subsection~\ref{ss:kUnitFree}.
Although the proof follows a similar structure to the proof of \Theorem{single-item},
there are some subtleties that one needs to take into account.
First, it is not difficult to see that if there are a small number of bidders that are high bidders \emph{or}
the probability that many bidders bid above a certain threshold is large then one can extract a large fraction of the revenue.
The tricky part is to deal with the middle ground: when there is no bidder with high revenue \emph{and} it is not
often the case that many bidders bid above a certain threshold.
In this case, we show that the expected number of high bids is large from which the desired result will follow
from an application of a theorem due to \citet{Hoeffding56}.

The second setting was introduced by \citet{HR09}, which we term as ``constrained''.
Here the VCG auction can only duplicate each bidder at most once, but each original bidder and her duplicate can win at most one item.
With this constraint, in the example above, with the first bidder duplicated, the VCG auction recovers the optimal revenue.
The following example shows that $\Omega(k)$ duplicates are necessary, even with the additional constraint, for the VCG auction to recover a constant fraction of the optimal revenue: $k/3$ bidders have value~$1$, and the rest of the bidders have value~$0$.
The optimal revenue here is $k/3$, and the VCG auction's revenue is equal to the number of duplicates it recruits.
We show that this is the worst case: $k$ duplicates always suffices for the VCG auction to extract a constant fraction of the optimal revenue.

\begin{theorem}
\TheoremName{kUnitApprox2}
Let $\OPT$ be the ex ante optimal revenue in a $k$-items auction with $n$ unit-demand bidders with values drawn independently from regular (non-identical) distributions. There exist $k$ bidders such that the VCG auction with a duplicate of each of these $k$ bidders, where at most one of an original bidder and her duplicate can win an item, extracts  revenue that is at least $0.1 \OPT$.
\end{theorem}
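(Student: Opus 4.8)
The plan is to follow the same dichotomy that drives Theorem~\ref{thm:single-item} and Theorem~\ref{thm:kUnitFree}, but now keeping the ``constrained'' restriction in mind: each of the $k$ duplicated bidders can win at most one item jointly with her original copy. As before, let $\exanteq_1, \ldots, \exanteq_n$ solve the ex ante relaxation (with $\sum_i \exanteq_i \le k$, $\exanteq_i \le 1$) and set $\OPT = \sum_i \Rev_i(\exanteq_i)$. Fix constants $\alpha, \beta$ with $(1-\alpha)(1-\beta)/\alpha$ large; I would use a $k$-items analogue of \Lemma{single-technical}, observing that its proof goes through verbatim except that the bound $\sum_{i \notin S} \exanteq_i \le 1$ is replaced by $\sum_{i \notin S} \exanteq_i \le k$, so the conclusion of Case~2 becomes $\sum_{i \in [n]} q_i(\alpha \cdot \OPT) \ge \frac{1-\alpha}{\alpha}(1-\beta) \cdot k$. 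Thus we again split into the case where some bidder~$i$ has $\vali(\beta) \ge \alpha \cdot \OPT$, and the case where the \emph{total} probability mass of high bids (at threshold $\alpha \cdot \OPT$) is at least a constant times~$k$.

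In Case~1, duplicate bidder~$i$ a total of $k$ times. Bulow--Klemperer (\Theorem{BK}) applied to the $k+1$ i.i.d.\ copies of~$F_i$ in a $1$-item auction is not quite what we want, but there is a standard Bulow--Klemperer-type statement for $k$-items auctions: the VCG auction with $k+1$ i.i.d.\ bidders in a $1$-item subproblem beats the single-bidder monopoly revenue, and more to the point, with the $k$ duplicates of~$i$ present we are guaranteed that at least one of these $k+1$ copies sells at a price comparable to $\vali(\beta)$; concretely, running the VCG auction restricted to just the $k+1$ copies of bidder~$i$ (there are $k$ items, so all but the lowest of the $k+1$ copies win, each paying the lowest bid), the expected payment of the lowest copy is at least a constant fraction of $\beta \cdot \vali(\beta) \ge \alpha\beta \cdot \OPT$ — here the constrained restriction does not bite because the original bidder~$i$ competes against her own duplicates and only the lowest one is priced out. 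Adding the remaining original bidders back only raises revenue, so Case~1 yields $\Omega(\alpha\beta)\cdot\OPT$.

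In Case~2, duplicate the $k$ bidders with the largest values of $q_i' \coloneqq q_i(\alpha \cdot \OPT)$ (or all bidders if $n \le k$). Let $X$ count the original bidders bidding at least $\alpha \cdot \OPT$, and $X'$ the same count among the duplicates of the chosen $k$ bidders. The key point is that whenever $X + X' \ge k+1$, the VCG auction sells all $k$ items at a price of at least $\alpha \cdot \OPT$ — but here we must be careful about the constrained restriction: if some chosen bidder and her duplicate both bid high, they jointly contribute only one winner, so I actually want $k+1$ \emph{distinct} winning identities bidding high. I would instead argue that with constant probability at least $k+1$ \emph{distinct} bidders (counting an original-plus-duplicate pair as contributing at most one) among $S'$ bid at least $\alpha \cdot \OPT$, using $\sum q_i' \ge \frac{1-\alpha}{\alpha}(1-\beta) k$ together with $q_i' \le \beta < 1$, so no single bidder's mass dominates; a concentration bound (Hoeffding's inequality as in Theorem~\ref{thm:kUnitFree}, or a direct second-moment argument) then shows $\Pr[\text{at least } k+1 \text{ distinct high bidders}] = \Omega(1)$. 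Conditioned on that event the VCG revenue is at least $(\alpha\cdot\OPT)(k) / k = \alpha\cdot\OPT$ per item times the appropriate count, hence $\Omega(\alpha)\cdot\OPT$ overall, and adding the duplicates never hurts.

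The main obstacle is Case~2's interaction with the constrained restriction: ensuring that the high-bidding pairs translate into $k$ winners each priced at $\alpha \cdot \OPT$ rather than being collapsed by the ``original-plus-duplicate wins at most one'' rule. I expect this is handled by noting that since the original $n$ bidders are also present and $X$ alone (without any duplicates) already has expectation $\sum_i q_i' \ge \frac{1-\alpha}{\alpha}(1-\beta)k$, one can often get $k+1$ high bids purely from \emph{original} bidders, for whom the restriction is vacuous; the duplicates are only needed to push the success probability up, exactly as in the single-item Case~2. Optimizing $\alpha, \beta$ (and the concentration slack) should then give the claimed $0.1\,\OPT$, and the matching $\Omega(k)$ lower bound is the stated example with $k/3$ unit-value bidders.
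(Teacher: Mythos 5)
There is a genuine gap, and it starts at the very first step. You claim that the proof of \Lemma{single-technical} ``goes through verbatim'' for $k$ items with the dichotomy ``some bidder has $\vali(\beta)\ge\alpha\cdot\OPT$, or $\sum_i q_i(\alpha\cdot\OPT)\ge\frac{1-\alpha}{\alpha}(1-\beta)k$.'' That dichotomy is false: take $k$ bidders each deterministically worth $\OPT/k$ (and everyone else worth $0$). For large $k$ no bidder has $\vali(\beta)\ge\alpha\cdot\OPT$ for any constant $\alpha$, and no bidder ever bids above $\alpha\cdot\OPT$, so both branches fail. The correct per-bidder threshold in the $k$-items setting is $\frac{\gamma}{k}\OPT$, not a constant fraction of $\OPT$, and once you lower the threshold you are forced into the three-way case analysis of \Lemma{kItemsCasesNew}: either a set $H$ of at most $k$ bidders carries $\delta\cdot\OPT$ of revenue at quantiles in $[\beta,1)$, or $k$ bidders each have $\vali(\beta)\ge\frac{\gamma}{k}\OPT$, or with probability $1/2$ at least $k+1$ bidders bid above $\frac{\gamma}{k}\OPT$. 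Your Case~2 computation inherits this error (you write that each item sells at $\alpha\cdot\OPT$, which would give total revenue $k\alpha\cdot\OPT\gg\OPT$ --- a sign the threshold is off by a factor of $k$). Separately, your Case~1 remedy --- duplicating one bidder $k$ times --- is the ``free'' setting of \Theorem{kUnitFree}; the statement you are proving requires choosing $k$ \emph{distinct} bidders, each duplicated once, so this move is not available.

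You also miss the observation that makes the constrained setting easy and renders all concentration arguments unnecessary in the first two cases: when you duplicate a set $H$ of at most $k$ bidders and restrict attention to those pairs, the VCG auction with the ``original and duplicate cannot both win'' rule decomposes into $|H|\le k$ independent \emph{single-item} second-price auctions, one per pair. The ordinary single-item Bulow--Klemperer theorem then applies to each pair separately, giving revenue at least $\Rev_i(q_i)$ from pair $i$, hence $\sum_{i\in H}\Rev_i(q_i)\ge\delta\cdot\OPT$ in the first case and $\beta\gamma\cdot\OPT$ in the second; the third case is handled by VCG without duplicates (via Hoeffding's comparison and the binomial median bound), and adding duplicates only helps. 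Your instinct that ``the original bidders alone often supply $k+1$ high bids'' is in the right spirit for that third case, but without the corrected threshold and the intermediate case the argument does not assemble into a proof.
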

We relegate the proof of \Theorem{kUnitApprox2} to \Subsection{kUnitApprox2}.
Let us also remark that in this setting, we can also obtain an analagous result to \Theorem{single-beta-exact}.
We relegate the details to Appendix~\ref{sec:app-k-item}.

The proofs of \Theorem{kUnitFree} and \Theorem{kUnitApprox2} both make use of the following analog of \Lemma{single-technical}
for $k$-item auctions.
\begin{lemma}
    \LemmaName{kItemsCasesNew}
    Let $\beta, \gamma, \delta \in (0,1)$ be contants that satisfy $\frac{(1-\gamma)(1-\beta) - \delta}{\gamma} \geq \frac{3}{2}$.
    In a $k$-items auction with $n$ unit demand bidders whose values are drawn independently from regular distributions $F_1, \ldots, F_n$
    at least one of the following three statements must be true:
    \begin{enumerate}
        \item There exists $q_1, \ldots, q_n \in [\beta,1)$ and a set $H \subseteq [n]$ with $|H| \leq k$
            such that $\val_i(\beta) \geq \frac{\gamma}{k} \OPT$ for all $i \in H$
            and $\sum_{i \in H} \Rev_i(q_i) \geq \delta \cdot \OPT$.
        \item There exists a set $H \subseteq [n]$ with $|H| = k$ such that $\val_i(\beta) \geq \frac{\gamma}{k} \OPT$.
        \item With probability $1/2$, at least $k+1$ bidders bid at least $\frac{\gamma}{k} \OPT$.
    \end{enumerate}
\end{lemma}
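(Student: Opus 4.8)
The plan is to follow the proof of \Lemma{single-technical}, but to partition the ``high-value'' bidders into two groups so as to accommodate the fact that up to~$k$ of them may be simultaneously large. Let $\exanteq_1,\dots,\exanteq_n$ be an optimal solution to the $k$-items ex ante relaxation, so $\sum_{i}\exanteq_i\le k$ and $\sum_{i}\Rev_i(\exanteq_i)=\OPT$, and write $\tau\coloneqq\tfrac{\gamma}{k}\OPT$. First I would note that the bidders whose ex ante value lies below~$\tau$ contribute little: since $\sum_{i:\val_i(\exanteq_i)<\tau}\exanteq_i\val_i(\exanteq_i)\le\tau\sum_i\exanteq_i\le\tau k=\gamma\OPT$, the set $S\coloneqq\{i:\val_i(\exanteq_i)\ge\tau\}$ satisfies $\sum_{i\in S}\Rev_i(\exanteq_i)\ge(1-\gamma)\OPT$. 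Now let $H\coloneqq\{i\in[n]:\val_i(\beta)\ge\tau\}$. If $|H|\ge k$, any $k$-element subset of $H$ witnesses the second statement, so from here on I may assume $|H|<k$.

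Next I would split $S$ into $S\cap H$ and $S\setminus H$ and decide between the first and the third statement according to which group carries more of the residual ex ante revenue. For $i\in S\cap H$ put $\hat q_i\coloneqq\max\{\exanteq_i,\beta\}$; after discarding any bidder with $\Rev_i(\exanteq_i)=0$ we have $\exanteq_i<1$, hence $\hat q_i\in[\beta,1)$, and the concavity of the revenue curve (\Claim{reg}, via the chord from $(\exanteq_i,\Rev_i(\exanteq_i))$ to $(1,0)$) gives $\Rev_i(\hat q_i)\ge(1-\beta)\Rev_i(\exanteq_i)$. Thus if $\sum_{i\in S\cap H}\Rev_i(\hat q_i)\ge\delta\OPT$, then $H'\coloneqq S\cap H$ together with the quantiles $\hat q_i$ (extended arbitrarily inside $[\beta,1)$ off $H'$) witnesses the first statement, since $|H'|<k$ and $\val_i(\beta)\ge\tau$ for every $i\in H'$. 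Otherwise $\sum_{i\in S\cap H}\Rev_i(\hat q_i)<\delta\OPT$, which forces $\sum_{i\in S\cap H}\Rev_i(\exanteq_i)<\tfrac{\delta}{1-\beta}\OPT$, and therefore $\sum_{i\in S\setminus H}\Rev_i(\exanteq_i)>(1-\gamma)\OPT-\tfrac{\delta}{1-\beta}\OPT$.

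It remains to turn this residual revenue into many high bids. For $i\in S\setminus H$, membership in $S$ gives $\exanteq_i\le q_i(\tau)$ while $i\notin H$ gives $q_i(\tau)<\beta$, so concavity again yields $\Rev_i(q_i(\tau))\ge(1-\beta)\Rev_i(\exanteq_i)$; since $\Rev_i(q_i(\tau))=\tau\,q_i(\tau)$, summing over $i\in S\setminus H$ and dividing by $\tau=\tfrac{\gamma}{k}\OPT$ gives $\sum_{i\in S\setminus H}q_i(\tau)>\tfrac{(1-\gamma)(1-\beta)-\delta}{\gamma}\,k\ge\tfrac32 k$, the last inequality being precisely the hypothesis relating $\beta,\gamma,\delta$. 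Hence the expected number of bidders bidding at least~$\tau$ exceeds $\tfrac32 k$, and since this count is a sum of independent indicator variables I would invoke the theorem of \citet{Hoeffding56}---concretely, the consequence that the median of such a sum is at least the floor of its mean---to conclude that with probability at least~$\tfrac12$ at least $k+1$ bidders bid at least~$\tau$, which is the third statement. I expect this last probabilistic step to be the main obstacle: extracting the clean ``$\ge k+1$ with probability $\ge\tfrac12$'' conclusion from an expectation bound of only about $\tfrac32 k$ requires the sharp mean--median comparison for Poisson-binomial random variables, and it is exactly the amount of slack this comparison demands that pins down the hypothesis $\tfrac{(1-\gamma)(1-\beta)-\delta}{\gamma}\ge\tfrac32$ and hence dictates how the revenue bookkeeping above must be balanced.
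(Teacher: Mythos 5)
Your proof is correct and is essentially the paper's own argument: the sets you call $S\cap H$ and $S\setminus H$ are the paper's $S'$ and $S\setminus S'$, your quantiles $\hat q_i=\max\{\exanteq_i,\beta\}$ and $q_i(\tau)$ coincide with the paper's $q_i'$ on those two sets, and the final step is the same Hoeffding/median argument. The only point worth adding is the paper's explicit remark that $\lfloor \tfrac{3}{2}k\rfloor \geq k+1$ requires $k\geq 2$ (the $k=1$ case being covered by the single-item lemma).
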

\begin{proof}
    Assume that the first two statements do not hold.
    Let $\{ \exanteq_i \}_{i=1}^n$ be an optimal solution to the ex ante relaxation
    and $S = \{i \in [n] \ : \ \val_i(\exanteq_i) \geq \frac{\gamma}{k} \OPT \}$.
    Then $\sum_{i \notin S} \Rev_i(\exanteq_i) \leq \gamma \OPT$ because $\sum_{i \notin S} \exanteq_i \leq k$.
    Hence, $\sum_{i \in S} \Rev_i(\exanteq_i) \geq (1-\gamma) \cdot \OPT$.

    For $i \in S$, define $q_i'$ as follows.
    If $\exanteq_i > \beta$ then set $q_i' = \exanteq_i$ but note that $\val_i(\beta) \geq \val_i(q_i') \geq \frac{\gamma}{k} \OPT$ since
    $\val_i(\cdot)$ is decreasing in $[0,1]$.
    Otherwise, set $q_i' = \min\{ \beta, 1 - F_i(\frac{\gamma}{k} \OPT) \}$.
    Note that if $q_i' < \beta$ then $\val_i(q_i') = \frac{\gamma}{k} \OPT$.

    If $q_i' \leq \beta$ then \Claim{reg} implies that $\Rev_i(q_i') \geq (1-\beta) \Rev_i(\exanteq_i)$
    and if $q_i' > \beta$ then $q_i' = \exanteq_i$ so $\Rev_i(q_i') = \Rev_i(\exanteq_i)$.
    Hence $\sum_{i \in S} \Rev_i(q_i') \geq (1-\gamma)(1-\beta) \cdot \OPT$.
    Let $S' = \{ i \in S \ : \ \val_i(\beta) \geq \frac{\gamma}{k} \OPT \}.$ If $i \notin S'$ we have $v_i(\beta) < \frac{\gamma}{k} \OPT$, which means that $q_i' \leq \beta$ (if $q_i' > \beta$ then $q_i' = \exanteq_i$). If $q_i' = \beta$, then $v_i(q_i') < \frac{\gamma}{k} \OPT$ because $i \notin S'$. Otherwise, $\val_i(q_i') = \val_i \left(1 - F_i(\frac{\gamma}{k} \OPT \right) = \frac{\gamma}{k} OPT$. Hence, for $i \not in S'$, we have $\val_i(q_i') \leq \frac{\gamma}{k}\OPT$. 
    Since the second item in the claim does not hold, we have $|S'| < k$.
    Since the first item in the claim does not hold, we have
    \begin{align*}
        \sum_{i \in S \setminus S'} \Rev_i(q_i')
        = \sum_{i \in S} \Rev_i(q_i') - \sum_{i \in S'} \Rev_i(q_i')
        \geq \left( (1 - \gamma)(1-\beta) - \delta \right) \OPT.
    \end{align*}
    As we observed above, $\val_i(q_i') \leq \frac{\gamma}{k} \OPT$ for all $i \in S \setminus S'$.
    Plugging this into the above inequality with the fact that $\Rev_i(q_i') = q_i' \val_i(q_i')$ gives
    \begin{align*}
        \sum_{i \in S \setminus S'} q_i' \geq k \cdot \frac{(1-\gamma)(1-\beta) - \delta}{\gamma} \geq \frac{3k}{2},
    \end{align*}
    where the last inequality is an assumption made in the claim.

    Let $X$ be the number of bidders in $S \setminus S'$ that bid at least $\frac{\gamma}{k} \OPT$.
    Then $\expect{X} \geq \sum_{i \in S \setminus S'} q_i' \geq \frac{3k}{2} \geq k+1$ since $k \geq 2$.
    By \Theorem{Hoeffding} and \Fact{median}, we have that the probability that at least $k+1$ bidders in $S \setminus S'$ bid at least $\frac{\gamma}{k} \OPT$
    is at least $1/2$.
\end{proof}

\subsection{Proof of Theorem~\ref{thm:kUnitFree}}
\label{ss:kUnitFree}
\begin{proof}[Proof of Theorem~\ref{thm:kUnitFree}.]
Let $\beta, \gamma, \delta \in (0,1)$ be constants to be chosen later.
We will consider the three cases in the conclusion of \Lemma{kItemsCasesNew}.

\paragraph{Case 1: There exists $q_1, \ldots, q_n \in [\beta,1)$ and a set $H \subseteq [n]$ with $|H| \leq k$ such
    that $\sum_{i \in H} \Rev_i(q_i) \geq \delta \cdot \OPT$.}
By shuffling the indices, we may assume without loss of generality that $\Rev_1(q_1) \geq \ldots \geq \Rev_n(q_n)$.
Let $r \leq k$ be such that $\Rev_i(q_i) \geq \frac{\delta \OPT}{2k}$ for all $i \leq r$ and $\Rev_i(q_i) < \frac{\delta \OPT}{2k}$ for
$i \in \{r+1, \ldots, k\}$.
Then $\sum_{i=1}^r \Rev_i(q_i) \geq \frac{\delta \OPT}{2}$ because $\sum_{i = r+1}^k \Rev_i(q_i) < k \frac{\delta \OPT}{2k} = \frac{\delta \OPT}{2}$.

Define $R_i = \frac{\Rev_i(q_i)}{\OPT}$.
We will consider two subcases based on the value of $R_1$.

\paragraph{Case 1a: $R_1 \geq \frac{\delta}{4}$.}
First, note that $\val_1(1-1/k) \geq \frac{\delta}{4k} \OPT$.
To see this, if $q_1 > 1-1/k$ then $\val_1(1-1/k) \geq \val_1(q_1) \geq \Rev_1(q_1) \geq \frac{\delta}{4} \OPT$.
On the other hand, if $q_1 \leq 1 - 1/k$ then \Claim{reg} implies that $\Rev_1(1-1/k) \geq \Rev_1(q_1) / k \geq \frac{\delta}{4k} \OPT$
and hence $\val_1(1-1/k) \geq \frac{\delta}{4k}\OPT $.

Now suppose we obtain $k$ duplicates of bidder $1$.
The probability that any one of them bids above $\frac{\delta}{4k} \OPT$ is at least $1 - 1/k$ so the probability that
all $k+1$ of them do is at least $(1-1/k)^{k+1} \geq 1/8$ since $k \geq 2$.
Hence, VCG extracts a revenue of at least $\frac{\delta \OPT}{32}$.

\paragraph{Case 1b: $R_1 \leq \frac{\delta}{4}$.}
We begin with the following claim which gives a lower bound on the probability
that a bidder bids above a threshold of $\frac{\delta \OPT}{8k}$.
\begin{claim}
    For all $i \in [r]$, we have $\val_i\left( 1 - \frac{\delta}{8k R_i} \right) \geq \frac{\delta \OPT}{8k}$.
\end{claim}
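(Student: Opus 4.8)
The plan is to convert the claimed inequality on $\val_i(\cdot)$ into a statement about the revenue curve $\Rev_i$ and then exploit its concavity. First I would set $q^* := 1 - \frac{\delta}{8kR_i}$ and observe that, since $i \le r$, we have $R_i = \Rev_i(q_i)/\OPT \ge \frac{\delta}{2k}$; therefore $\frac{\delta}{8kR_i} \le \frac14$ and $q^* \in [\tfrac34, 1)$ is a legitimate quantile. Since $\Rev_i(q^*) = q^*\,\val_i(q^*)$ and $q^* \le 1$, it is enough to prove $\Rev_i(q^*) \ge \frac{\delta}{8k}\OPT$; the bound on $\val_i(q^*)$ then follows immediately.

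Next I would lower bound $\Rev_i(q^*)$ using \Claim{reg} (concavity of $\Rev_i$) together with $\Rev_i(0)=\Rev_i(1)=0$. These facts force $\Rev_i$ to lie above the ``tent'' through $(0,0)$, $(q_i,\Rev_i(q_i))$ and $(1,0)$, i.e.\ for all $q\in[0,1]$,
\[
 \Rev_i(q)\ \ge\ \Rev_i(q_i)\cdot\min\!\left\{\frac{q}{q_i},\ \frac{1-q}{1-q_i}\right\},
\]
since on $[0,q_i]$ the minimum is the chord $q/q_i$ from the origin and on $[q_i,1]$ it is the chord $(1-q)/(1-q_i)$ to $(1,0)$. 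Evaluating at $q=q^*$ and using $0<q_i<1$ gives $\frac{q^*}{q_i}\ge q^*\ge\frac34$ and $\frac{1-q^*}{1-q_i}\ge 1-q^*=\frac{\delta}{8kR_i}$, hence
\[
 \Rev_i(q^*)\ \ge\ R_i\OPT\cdot\min\!\left\{\tfrac34,\ \tfrac{\delta}{8kR_i}\right\}\ =\ \OPT\cdot\min\!\left\{\tfrac{3R_i}{4},\ \tfrac{\delta}{8k}\right\}.
\]
Finally, $R_i\ge\frac{\delta}{2k}$ forces $\frac{3R_i}{4}\ge\frac{3\delta}{8k}\ge\frac{\delta}{8k}$, so the minimum equals $\frac{\delta}{8k}$ and $\Rev_i(q^*)\ge\frac{\delta}{8k}\OPT$, completing the argument.

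I do not expect a real obstacle: the two points that need attention are verifying $q^*\in[0,1]$ (which is precisely where the bound $R_i\ge\frac{\delta}{2k}$, valid for $i\le r$, is used) and writing the concavity estimate so that it applies regardless of whether $q^*$ lies to the left or to the right of $q_i$ — the single ``tent'' inequality above dispatches both cases at once and avoids a case analysis. Everything else is routine bookkeeping with the identity $\Rev_i(q)=q\,\val_i(q)$.
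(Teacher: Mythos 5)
Your proof is correct and uses essentially the same ingredients as the paper's: the bound $R_i \geq \delta/(2k)$ for $i \leq r$, the identity $\Rev_i(q) = q\,\val_i(q)$, and concavity of the revenue curve anchored at $(q_i, \Rev_i(q_i))$. The only difference is cosmetic — the paper splits into the cases $q_i > q^*$ and $q_i \leq q^*$ (using monotonicity of $\val_i$ in the first and \Claim{reg} in the second), whereas your single ``tent'' inequality packages both chord bounds at once.
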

\begin{proof}
    Suppose first that $q_i > 1 - \frac{\delta}{8kR_i}$.
    Then since $v_i(\cdot)$ is non-increasing on $[0,1]$, we have
    \[
        v_i\left( 1 - \frac{\delta}{8kR_i} \right) \geq v_i(q_i) \geq \Rev_i(q_i) \geq \frac{\delta \OPT}{2k}
    \]
    where the last inequality is by definition of $r$.

    On the other hand, suppose that $q_i \leq 1 - \frac{\delta}{8kR_i}$.
    Then we can apply \Claim{reg} to get that
    \begin{align*}
        \Rev_i\left( 1 - \frac{\delta}{8kR_i} \right)
        \geq \frac{\delta}{8kR_i} \Rev_i(q_i)
        = \frac{\delta \OPT}{8k}.
    \end{align*}
    Consequently, $\val_i\left( 1 - \frac{\delta}{8k R_i} \right) \geq \frac{\delta \OPT}{8k}$.
\end{proof}

Consider obtaining $k$ duplicates of bidder $1$.
Let $X$ be the number of bidders amongst the original $r$ bidders and the $k$ duplicates that bid at least $\frac{\delta \OPT}{8k}$.
Then
\begin{align*}
    \expect{X}
    \geq k \left( 1 - \frac{\delta}{8kR_1} \right) + \sum_{i=1}^r \left( 1 - \frac{\delta}{8k R_i} \right)
    \geq k - \frac{\delta}{8R_1} + \frac{3r}{4},
\end{align*}
where the last inequality is because $R_i \geq \frac{\delta}{2k}$ for $1 \leq i \leq r$ so $1 - \frac{\delta}{8k R_i} \geq \frac{3}{4}$.
It now remains to lower bound $r$.
\begin{claim}
    $r \geq \frac{\delta}{2R_1}$.
\end{claim}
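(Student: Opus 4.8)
The plan is to combine the two facts already established in the lines preceding Case~1a. First, the bidders were relabeled so that $\Rev_1(q_1) \geq \Rev_2(q_2) \geq \cdots$, hence $R_i = \Rev_i(q_i)/\OPT \leq R_1$ for every $i \in [r]$. Second, we showed $\sum_{i=1}^r \Rev_i(q_i) \geq \frac{\delta}{2}\OPT$, which after dividing by $\OPT$ reads $\sum_{i=1}^r R_i \geq \frac{\delta}{2}$. Chaining these, $\frac{\delta}{2} \leq \sum_{i=1}^r R_i \leq r R_1$, and dividing by $R_1$ yields $r \geq \frac{\delta}{2R_1}$, as claimed. Equivalently, and without passing through $\OPT$, one can just write $\frac{\delta}{2}\OPT \leq \sum_{i=1}^r \Rev_i(q_i) \leq r\,\Rev_1(q_1) = r R_1 \OPT$.

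The only point needing a word of care is that dividing by $R_1$ is legitimate, i.e.\ $R_1 > 0$; this is immediate since $\sum_{i=1}^r \Rev_i(q_i) \geq \frac{\delta}{2}\OPT > 0$ forces $r \geq 1$ and $\Rev_1(q_1) = \max_i \Rev_i(q_i) > 0$. So there is no real obstacle here: the claim is a one-line consequence of the sorting of the $\Rev_i(q_i)$ together with the already-proven lower bound on their partial sum over $[r]$.
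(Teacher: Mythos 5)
Your argument is correct and is essentially identical to the paper's: both use the sorting $R_i \leq R_1$ together with $\sum_{i=1}^r R_i \geq \delta/2$ to conclude $\delta/2 \leq r R_1$. The brief remark that $R_1 > 0$ is a harmless extra point of care.
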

\begin{proof}
    Indeed, we have
    \[
        r
        = \sum_{i=1}^r R_i \frac{1}{R_i}
        \geq \frac{1}{R_1} \sum_{i=1}^r R_i
        \geq \frac{\delta}{2 R_1}.
    \]
    The first inequality is because $R_i \leq R_1$ for all $i$.
    The second inequality is because $\sum_{i=1}^r \Rev_i(q_i) \geq \frac{\delta \OPT}{2}$
    is equivalent to $\sum_{i=1}^r R_i \geq \frac{\delta}{2}$.
\end{proof}
Hence, we have
\begin{equation*}
    \expect{X} \geq k - \frac{\delta}{8R_1} + \frac{3r}{4} \geq k - \frac{\delta}{8R_1} + \frac{3\delta}{8R_1} = k + \frac{\delta}{4R_1}.
\end{equation*}
Since $R_1 \leq \delta / 4$, we have that $\expect{X} \geq k + 1$.

Now observe that $X$ is a sum of independent Bernoulli random variables, albeit with non-identical means.
Combining \Theorem{Hoeffding} and \Fact{median}, we conclude that at least $k + 1$ bidders bid at least $\frac{\delta \OPT}{8k}$
with probability at least $1/2$.
Hence, in this case VCG obtains a revenue of at least $\frac{\delta \OPT}{16}$.

\paragraph{Case 2: There exists a set $H \subseteq [n]$ with $|H| = k$ such that $\val_i(\beta) \geq \frac{\gamma}{k} \OPT$.}
Note that for $i \in H$, we have $\Rev_i(\beta) \geq \frac{\beta \gamma}{k} \OPT$.
We claim that this implies $\val_i(3/4) \geq \frac{\beta \gamma \OPT}{3k}$.
Indeed, if $\beta \geq 3/4$ then $\val_i(3/4) \geq \val_i(\beta) = \Rev_i(\beta) / \beta \geq \Rev_i(\beta) \geq \frac{\beta \gamma}{k} \OPT$.
On the other hand, if $\beta \leq 3/4$ then $\Rev_i(3/4) \geq \frac{\beta \gamma}{4k} \OPT$ by \Claim{reg}.
Hence, $\val_i(3/4) = \frac{\Rev_i(3/4)}{3/4} \geq \frac{\beta \gamma \OPT}{3k}$.

Now consider obtaining $k$ duplicates of any bidder in $H$.
Then the average number of bidders from $H$ and the duplicates that bid at least $\frac{\beta \gamma\OPT}{3k}$ is at least
$\frac{3}{4} \cdot 2k = \frac{3k}{2} \geq k+1$.
By \Theorem{Hoeffding} and \Fact{median} again, this implies that VCG obtains a revenue of at least $\frac{\beta \gamma}{6} \OPT$.

\paragraph{Case 3: With probability at least $1/2$ there exists at least $k + 1$ bidders who bid at least $\frac{\gamma}{k} \OPT$.}
The assumption implies that VCG without duplicates already obtains a revenue of at least $\frac{\gamma}{2} \OPT$;
duplicating bidders will never hurt the revenue.

Let $\cA = \left\{ (\beta, \gamma, \delta) \ : \ \frac{(1-\gamma)(1-\beta) - \delta}{\gamma} \geq \frac{3}{2} \right\}$.
Putting all the cases together, we get the revenue of VCG with $k$ duplicates is at least
\begin{equation*}
    \OPT \cdot \max_{(\beta, \gamma, \delta) \in \cA} \min\left\{ \frac{\delta}{32}, \frac{\beta \gamma}{6}, \frac{\gamma}{2} \right\}.
\end{equation*}
By setting $\beta = 0.377, \delta = 0.3, \gamma = 0.15$, we see that this is at least $0.009 \cdot \OPT$.
\end{proof}
\begin{remark}
    One can also modify the above proof so that the choice of the bidder is independent of $k$, albeit with a decay in the approximation factor.

    Indeed, notice that in case 1 of the above proof, one can duplicate any bidder in
    \[
        \argmax_{i \in [n]} \max_{q_i \in [\beta, 1]} \Rev_i(q_i)
    \]
    and that this choice is independent of $k$.
    By reordering, suppose that bidder $1$ is in the argmax of the previous expression.

    Note that it may \emph{not} be the case that $\val_1(\beta) \geq \frac{\gamma}{k} \OPT$
    so we may not be able to duplicate bidder 1 in case 2 of the above proof.
    However, if $q_1^* \in \argmax_{q_1 \in [\beta, 1]} \Rev_1(q_1)$ then $\Rev_1(q_1^*) \geq \max_{i \in [n]}\Rev_i(\beta) \geq \frac{\beta\gamma}{k} \OPT$.
    Hence,
    \[
        \val_1(\beta) \geq \val_1(q_1^*) \geq q_1^* \val_1(q_1^*) = \Rev_1(q_1^*) \geq \frac{\beta \gamma}{k} \OPT.
    \]
    So one can repeat the argument in case 2 of the above proof with a set $H \subseteq [n]$ and $|H| = k$ such that
    $1 \in H$ and $\val_i(\beta) \geq \frac{\beta \gamma}{k} \OPT$ for all $i \in H$.
    Since we can duplicate any bidder in $H$, we can duplicate bidder $1$ as well which would give $\frac{\beta^2 \gamma}{6}$-approximation.

    Case 3 of the above proof remains unchanged.

    Hence, the approximation ratio is at least $\min\left\{ \frac{\delta}{32}, \frac{\beta^2 \gamma}{6}, \frac{\gamma}{2} \right\}$
    (provided that $\beta, \gamma, \delta$ satisfy the conditions of \Theorem{kUnitFree}).
    Choosing $\beta = 0.4, \delta = 0.2, \gamma = 0.19$ gives an approximation ratio of $> 0.005$.
\end{remark}

\subsection{Proof of \Theorem{kUnitApprox2}}
\SubsectionName{kUnitApprox2}
\begin{proof}[Proof of \Theorem{kUnitApprox2}]
We will consider the three cases in the conclusion of \Lemma{kItemsCasesNew}.

\paragraph{Case 1: There exists $q_1, \ldots, q_n \in (0,1)$ and a set $H \subseteq [n]$ with $|H| \leq k$ such
    that $\sum_{i \in H} \Rev_i(q_i) \geq \delta \cdot \OPT$.}
Suppose we duplicate every bidder in $H$ and run $\VCG$ with just these bidders and their duplicates,
while ensuring that a bidder and her duplicate do not both win.
In this setting, the VCG auction simply runs a single-item second-price auction for each pair.
Observe that if the auctioneer had posted a price of $\Rev_i(q_i) / q_i = \vali(q)$ for bidder $i$ then the revenue
extracted from bidder $i$ would have been at least $\Rev_i(q_i)$.
Applying \Theorem{BK}, it follows that the single-item auction with bidder $i$ and her duplicate yields a revenue of at least $\Rev_i(q_i)$.
Hence, the revenue of VCG is at least $\sum_{i \in H} \Rev_i(q_i) \geq \delta \cdot \OPT$.

\paragraph{Case 2: There exists a set $H \subseteq [n]$ with $|H| = k$ such that $\val_i(\beta) \geq \frac{\gamma}{k} \OPT$.}
This case is very similar to the previous case.
Again, we duplicate every bidder in $H$ and note that VCG with only the bidders in $H$ and their duplicates is exactly
a single-item second-price auction for each pair.
Since $\Rev_i(\beta) \geq \frac{\beta \gamma}{k} \OPT$ for $i \in H$, we can apply \Theorem{BK} to assert that we extract at least
$\frac{\beta \gamma}{k} \OPT$ revenue from bidder $i$ or her duplicate.
Hence, the total revenue is at least $\beta \gamma \OPT$.

\paragraph{Case 3: With probability at least $1/2$ there exists at least $k + 1$ bidders who bid at least $\frac{\gamma}{k} \OPT$.}
In this case, the $\VCG$ mechanism without duplicates already gets a $\gamma/2$ fraction of the ex ante optimal revenue with the original $n$ bidders.
Now observe that VCG with the original $n$ bidders and their duplicates (with the constraint that a bidder and her duplicate cannot simultaneously win) earns at least as much revenue of VCG for the original $n$ bidders.
Hence, VCG in the duplicate setting also achieves at least $\gamma / 2$ fraction of the ex ante optimal revenue.

Choosing $\beta = 0.5, \delta = 0.1, \gamma = 0.2$ and checking that $\frac{(1-\gamma)(1-\beta) - \delta}{\gamma} = \frac{3}{2}$
gives the desired bound of $0.1 \cdot \OPT$.
\end{proof}

\section{Tighter Analysis for the Second Price Auction with $n$ Single Duplicated Bidders}
\label{sec:2approx}
In this section, we strengthen Theorem~\ref{thm:HR} \citep{HR09} in another direction.

\begin{theorem}
	\TheoremName{2approx}
	Consider an auction with $2$ bidders with $\vali[1], \vali[2]$ drawn independently from regular distributions $F_1$ and~$F_2$.  Let $\OPT$ be the ex ante optimal revenue.  The expected revenue of the Vickrey auction with $4$~bidders, with $\vali[1], \vali[3]$ drawn from~$F_1$ and $\vali[2], \vali[4]$ from~$F_2$, all independently, is at least $\frac{3}{4} \OPT$.
\end{theorem}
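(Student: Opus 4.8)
The plan is to prove the bound against the ex ante optimal revenue $\OPT = \Rev_1(\exanteqi[1]) + \Rev_2(\exanteqi[2])$, where $\exanteqi[1] + \exanteqi[2] \le 1$, by reducing to an extremal family of distributions with only two parameters per bidder and then analyzing that family directly. The first step is a \emph{triangularization} reduction: given regular $F_1, F_2$ with ex ante optimal quantiles $\exanteqi[1], \exanteqi[2]$, replace each $F_i$ by the distribution $T_i$ whose revenue curve is the triangle with vertices $(0,0)$, $(\exanteqi, \Rev_i(\exanteqi))$, and $(1,0)$. Regularity makes $\Rev_i$ concave, so $\Rev_i$ lies above both line segments of this triangle; hence $\Rev_{T_i} \le \Rev_i$ pointwise and therefore $\val_{T_i}(q) \le \val_{F_i}(q)$ at every quantile $q$. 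Coupling the four bidders through their quantiles, the Vickrey auction on bidders drawn from $T_1, T_2$ earns no more than the one on bidders drawn from $F_1, F_2$, while the ex ante optimal revenue is unchanged, since $(\exanteqi[1], \exanteqi[2])$ remains the (feasible) unconstrained maximizer of $\Rev_{T_1} + \Rev_{T_2}$ and $\Rev_{T_i}(\exanteqi) = \Rev_i(\exanteqi)$. So it suffices to prove the $\tfrac34$ bound when both bidders have triangular distributions.

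A triangular distribution is pinned down by the peak $(q_i^*, R_i^*)$ of its revenue curve (equivalently: an atom of mass $q_i^*$ at value $R_i^*/q_i^*$ together with an equal-revenue-type continuous tail below it), so after scaling so that $R_1^* + R_2^* = 1$ and hence $\OPT = 1$, the whole problem depends on just three parameters $q_1^*, q_2^*, R_1^*$ subject to $q_1^* + q_2^* \le 1$. I would then put the four-bidder Vickrey revenue into closed form using the elementary identity
\[
  \secmax(a_1,a_2,b_1,b_2) \;=\; \max\!\Big(\min\big(\max(a_1,a_2),\,\max(b_1,b_2)\big),\ \max\big(\min(a_1,a_2),\,\min(b_1,b_2)\big)\Big)
\]
(or, equivalently, casework on how many of the four samples land in each atom): the expected second-highest becomes an explicit elementary function of $q_1^*, q_2^*, R_1^*$. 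The crux is to show that this function, divided by $\OPT = 1$, is minimized at a corner of the three-parameter region — one bidder degenerating to a point mass ($q_i^* \to 1$) and the other to the generalized-equal-revenue form ($q_j^* \to 0$), with $R_1^* = R_2^* = \tfrac12$ — where a short calculus computation gives exactly $\tfrac34$.

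For tightness, take a deterministic bidder with value $1$ and a bidder with $\Prx{v \ge p} = 1/(p+1)$. Its ex ante optimal revenue is $2$ (the first bidder contributes $1$ at quantile $1$, and $\Rev_2(q_2) = 1 - q_2 \to 1$ as $q_2 \to 0$), while the Vickrey auction on two copies of each distribution has expected revenue exactly $\tfrac32$: the second-highest of the four equals $1$ unless both copies of the second bidder exceed $1$, which happens with probability $\tfrac14$, in which case the second-highest is the minimum of two samples of the second bidder and has conditional expectation $3$; overall $\tfrac34\cdot 1 + \tfrac14 \cdot 3 = \tfrac32 = \tfrac34\OPT$, matching Hartline and Roughgarden's lower bound against this (stronger) benchmark. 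I expect the multivariate optimization over triangular pairs to be the main obstacle: one must identify which corner of the three-parameter region is worst and rule out every better interior configuration, which requires tracking the signs of the partial derivatives of the explicit revenue formula through a handful of cases before the final one-parameter computation. The triangularization reduction and the closed-form revenue computation, by contrast, are routine once the $\secmax$ identity is in hand.
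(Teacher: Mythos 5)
Your reduction to triangular revenue curves is exactly right and matches the paper's first step (its Lemmas on dominated revenue curves and on the ex ante optimum for triangles): concavity gives pointwise dominance of the triangle by the true revenue curve, hence stochastic dominance of values, hence the expected second order statistic can only decrease, while the ex ante optimum is preserved because the triangles peak at the feasible point $(\exanteqi[1],\exanteqi[2])$. Your tightness example and its value $\tfrac32$ against $\OPT=2$ are also correct and match Hartline--Roughgarden's.

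The gap is the entire analytic core: you never carry out the minimization of the expected second-highest over the three-parameter triangle family, and you yourself flag it as ``the main obstacle.'' As written, the claim that the minimum sits at the corner $q_1^*\to 1$, $q_2^*\to 0$, $R_1^*=R_2^*$ is an unverified guess, not a proof. The closed form is genuinely unpleasant: a triangle with peak $(q^*,R^*)$ is an atom of mass $q^*$ at $R^*/q^*$ plus a continuous tail below it, so the survival functions are piecewise, the expected second order statistic splits into cases according to the relative order of the two atom values $R_1^*/q_1^*$ and $R_2^*/q_2^*$ and the overlap of supports, and one must then exclude interior critical points and every other face of $\{q_1^*+q_2^*\le 1\}$. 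A complete proof has to actually do this. The paper sidesteps it with one further idea you are missing: it lower-bounds the Vickrey auction with duplicates by Ronen's lookahead auction on the original two bidders (via an area-under-the-revenue-curve argument in the style of Dhangwatnotai, Roughgarden, and Yan). On triangles the lookahead revenue is simply $R_1+sR_2$ for an explicit probability $s$ read off the geometry of the two curves; the worst case forces $\exanteqi[2]=1-\exanteqi[1]$, and the ratio collapses to the one-variable function $\frac{1+\alpha+\alpha^2}{(1+\alpha)^2}$ with $\alpha=R_2/R_1$, minimized at $\alpha=1$ with value $\tfrac34$. Either supply a surrogate bound of that kind, or actually execute the multivariate optimization you have only sketched; until then the argument is incomplete at its decisive step.
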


The proof of \Theorem{2approx} appears in \Subsection{2approx}.
This ratio matches a lower bound given by \citet{HR09} (see Example~\ref{ex:lb-HR}), and therefore is tight for $n = 2$.
We first show that the Vickrey auction with $n$ duplicates generates at least as much revenue as the \emph{lookahead auction} without the duplicates.  We will use the following auction as a middle step, whose revenue obviously lower bounds the Vickrey auction with duplicates.

\begin{definition}[The Second Price Auction with Late Duplicate (SPALD)]
Solicit bids from bidders $1$ to~$n$, and let $i^*$ be the highest bidder among them.  Then run the second price auction for bidders $1, 2, \ldots, n$ and $n + i^*$.
\end{definition}

\begin{lemma}
\LemmaName{spa-lookahead}
Let $\LA$ denote the revenue of the lookahead auction with $n$ bidders, where each bidder~$i$'s value is independently drawn from a regular distribution~$F_i$.  The Vickrey auction with $n$ duplicates
extracts a revenue at least as much as~$\LA$.
\end{lemma}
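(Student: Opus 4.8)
The plan is to show that the SPALD auction (which, as noted, lower-bounds the Vickrey auction with $n$ duplicates) already extracts at least $\LA$. The key observation is that, conditioned on a bid profile $\bval$ for the original $n$ bidders with highest bidder $i^*$, the lookahead auction offers bidder $i^*$ the optimal take-it-or-leave-it price subject to $\vali[i^*] \geq \max_{j \neq i^*} \vali[j] =: w$, whereas SPALD runs a second-price auction among bidders $1,\ldots,n$ and a fresh copy $n+i^*$ drawn from $F_{i^*}$. So it suffices to compare, for each fixed $i^*$ and each fixed threshold $w = \max_{j \neq i^*}\vali[j]$, the expected revenue of these two mechanisms as a function of the randomness in $\vali[i^*]$ (conditioned on $\vali[i^*] \geq w$, which happens with the same probability in both) and in the duplicate $\vali[n+i^*]$.

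Concretely, I would fix $i^*$ and $w$ and write $F = F_{i^*}$, and condition on the event $\{\vali[i^*] \ge w\}$, which carries the same probability in both mechanisms; it remains to show the conditional expected revenue of SPALD is at least that of the lookahead auction on this event. In SPALD, given $\vali[i^*] \ge w$, the winner pays $\max\{w, \vali[n+i^*]\}$ whenever $\vali[i^*] \ge \max\{w, \vali[n+i^*]\}$ (and otherwise the duplicate wins and pays $\vali[i^*]$, which is at least $w$ — so in all cases the price paid is at least $w$ and the sale goes through). The lookahead auction's revenue here is $\max_{p \ge w} p \cdot \Pr[\vali[i^*] \ge p \mid \vali[i^*] \ge w]$; in revenue-curve language, writing $q_w = 1 - F(w)$ for the quantile of $w$, this is $\frac{1}{q_w}\max_{0 \le q \le q_w}\Rev(q)$. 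Meanwhile the conditional expected revenue of SPALD, after integrating out the duplicate $\vali[n+i^*]$ and $\vali[i^*]$, should come out to $\frac{1}{q_w}$ times an expression involving $\Rev$ evaluated at quantiles at most $q_w$. I would compute the SPALD conditional revenue as an integral over the quantile of the duplicate (or equivalently use that the expected second-highest-price revenue among two i.i.d.\ draws restricted above $w$ can be written in terms of the revenue curve), and then invoke concavity of $\Rev$ on $[0,1]$ for regular $F$ (Claim on regularity, \Claim{reg}) to lower-bound that integral by the single best point $\max_{q \le q_w}\Rev(q)$.

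The cleanest route is probably to observe that SPALD, restricted to bidders $i^*$ and $n+i^*$ with reserve $w$, \emph{is exactly} a Bulow--Klemperer-type comparison: two i.i.d.\ bidders from $F$ with an anonymous reserve $w$, versus one bidder from $F$ with the optimal reserve above $w$ — and the ``one extra i.i.d.\ bidder beats the optimal mechanism'' phenomenon of \Theorem{BK}, applied to the distribution $F$ truncated below at $w$ (which remains regular since truncation of a regular distribution from below preserves regularity), gives precisely that the two-bidder second-price revenue dominates the optimal single-bidder revenue on that truncated distribution. One has to be slightly careful that $w$ itself is a valid reserve and that the truncated distribution's optimal revenue is exactly the lookahead revenue, but these are routine. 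Summing the per-$i^*$, per-$w$ inequality back over the distribution of $\bval_{-i^*}$ and over the choice of $i^*$ (using that the events ``$i^*$ is the highest original bidder'' partition the probability space) yields $\text{SPALD} \ge \LA$, hence the Vickrey auction with $n$ duplicates has revenue at least $\LA$.

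\textbf{Main obstacle.} The delicate step is making the conditioning rigorous: the identity of $i^*$ and the threshold $w$ are both random and correlated, so I must argue that conditioning on $(i^*, \bval_{-i^*})$ the duplicate $\vali[n+i^*]$ is still a fresh $F_{i^*}$ draw and that $\vali[i^*]$ conditioned on being the max has the right truncated-from-below law — and then that the per-instance inequality, which is where Bulow--Klemperer (or the direct revenue-curve concavity computation) does the real work, holds pointwise so it can be integrated. I expect the revenue-curve concavity computation to be the technical heart: expressing the expected second-price revenue of two i.i.d.\ truncated draws as an integral of $\Rev$ over quantiles and comparing it to $\max_{q \le q_w}\Rev(q)$.
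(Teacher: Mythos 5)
Your overall plan — lower-bound the Vickrey auction with duplicates by SPALD, condition on the identity of the highest original bidder $i^*$ and on the other original values with maximum $w$, and compare the lookahead's revenue from $i^*$ against SPALD's revenue from the pair $\{i^*, n+i^*\}$ — is exactly the paper's decomposition, and your first route (express SPALD's revenue from the pair as an integral of $\Rev_{i^*}$ over the duplicate's quantile and finish by concavity) is the paper's actual argument in outline. However, the route you designate as ``cleanest'' and lean on — reducing to Theorem~\ref{thm:BK} applied to $F_{i^*}$ truncated below at $w$ — has a genuine gap, and the direction in which it fails matters. Write $q_w = 1 - F(w)$ and let $G = F|_{\geq w}$. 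Conditioned on $\vali[i^*] \geq w$, bidder $i^*$'s value has law $G$, but the duplicate is a fresh draw from the \emph{untruncated} $F$: with probability $F(w)$ it falls below $w$ and contributes only the floor price $w$. A direct computation gives
\[
\Ex{\max\bigl(w, \min(\vali[i^*], \vali[n+i^*])\bigr) \given \vali[i^*] \geq w}
\;=\; F(w)\, w \;+\; q_w \,\Ex{\min(X,Y)},
\]
where $X, Y$ are i.i.d.\ from $G$. Theorem~\ref{thm:BK} applied to $G$ (which is indeed regular) yields $\Ex{\min(X,Y)} \geq L$, where $L$ is the lookahead's conditional revenue; but since $w \leq L \leq \Ex{\min(X,Y)}$, the left-hand side above is a convex combination of $w$ and $\Ex{\min(X,Y)}$ and these two facts alone do not force it to be at least $L$. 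Put differently, SPALD's pair revenue is weakly \emph{smaller} than the second-price revenue of two i.i.d.\ draws from $G$, so a lower bound on the latter — which is all Bulow--Klemperer provides — does not transfer. The inequality you need is strictly stronger than BK-on-$G$.

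The step you defer as ``the technical heart'' is therefore not optional: it is the whole proof. Carrying out your first route, SPALD's (unconditional over $\vali[i^*]$ and the duplicate) revenue from the pair equals $2\int_0^{q_w} \Rev_{i^*}(t)\,\dd t + (1-q_w)\Rev_{i^*}(q_w)$, which is twice the area under the concave curve obtained from $\Rev_{i^*}$ by replacing its restriction to $[q_w,1]$ with the chord to $(1,0)$; twice the area under a concave curve vanishing at $0$ and $1$ dominates its maximum, which here is $\max_{q \leq q_w}\Rev_{i^*}(q)$, i.e.\ the lookahead's expected revenue from $i^*$. (When $w$ exceeds the monopoly reserve the comparison is even pointwise, as you note.) So the proposal is repairable by committing to the computation you sketched, but as written the argument you actually invoke does not establish the lemma.
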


The proof of Lemma~\ref{lem:spa-lookahead} is in \Subsection{spald_proof}.
It draws ingredients from both \citet{R01}'s analysis of the lookahead auction and \citet{DRY15}'s use of the revenue curve.\footnote{In fact,
    the case $n = 1$ is the Bulow-Klemperer Theorem for a single bidder for which \citet{DRY15} gave an elegant proof.
    The proof of \Lemma{spa-lookahead} can be seen as generalizing the proof of \citet{DRY15} to more than a single bidder.}
Conditioning on $\valsmi$, the profile of values except bidder~$i$'s,
it expresses SPALD's revenue from bidder~$i$ and her duplicate in terms of an area under a curve closely related to bidder~$i$'s revenue curve.
The comparison with the lookahead auction's revenue then follows from the concavity of the curve.

\begin{remark}
    \Theorem{lookahead} and \Lemma{spa-lookahead} can be used to give a short and transparent proof of \Theorem{HR}.
    Indeed, \Theorem{lookahead} asserts that the revenue of the lookahead auction is a 2-approximation to the optimal revenue and \Lemma{spa-lookahead}
    implies that the Vickrey auction with $n$ duplicates extracts at least as much revenue as the lookahead auction.
    Hence, the Vickrey auction with $n$ duplicates is a 2-approximation to the optimal revenue with the original $n$ bidders.
\end{remark}

To prove Theorem~\ref{thm:2approx},
we again switch our benchmark to the stronger ex ante optimal revenue.
It was observed in \citet{AHNPY15} that, given an optimal solution $\exanteqi[1], \ldots, \exanteqi[n]$ to the ex ante revenue maximization problem,
the solution remains the same if one were to replace the revenue curves by triangles that each peaks at $\exanteqi$ with revenue~$\Rev_i(\exanteqi)$.
On the other hand, for the Vickrey auctions,
the triangle revenue curves are also the worst-case distributions (Lemma~\ref{lem:spa-flat}).
This allows us to focus on triangle revenue curves and then analytically minimize the ratio between the SPALD's revenue and the ex ante optimal revenue.

We recall the example from \citet{HR09} showing that this ratio is tight for $n = 2$: 

\begin{example}
	\label{ex:lb-HR}
	Let $F_1$ be the point mass on $\vali[1] = 1$, and let $F_2$ be the ``slanted'' equal revenue distribution $F_2(\val) = 1 - \frac{1}{\val + 1}$. If we offer a price of $H \gg 1$ to bidder $2$ and, if not taken, offer a price of $1$ to bidder $1$ then we obtain a revenue of $H/(H+1) + 1 - (1/(H+1)) \approx 2$.
On the other hand, in the second price auction with duplicates of each distribution, if we denote by $w$ the second highest value, then the revenue is 
\begin{align*}
	\Ex{w} = \int_0^{\infty} \Prx{w \geq v} \:\dd v = 1 + \int_1^{\infty} \left( \frac 1 {v + 1} \right)^2 \: \dd v = \frac 3 2.
\end{align*}
\end{example}

\begin{remark}
	\label{rem:lb-HR}
	Note that in Example~\ref{ex:lb-HR}, the two revenue curves are two triangle revenue curves with $R_1 = R_2 = 1$, $\exanteqi[1] = 1$ and $\exanteqi[2] = 0$.\footnote{The point mass can be seen to have a triangle revenue curve peaking at $q = 1$ by being approximated by a uniform distribution on $[1, 1+\eps]$ with arbitrarily small~$\eps$.}
\end{remark}

\begin{remark}
	\label{rem:lb-single}
    Example~\ref{ex:lb-HR} also provides a lower bound on the approximation of the revenue of the Vickrey auction
    with a \emph{single} duplicate which is \emph{greater} than with both duplicates.
    Indeed, if one duplicates only the first bidder, then the resulting Vickrey auction has revenue~$1$.
    On the other hand, duplicating the second bidder extracts revenue strictly less than $1/2$.
    To see this, note that if both of the duplicates bid strictly less than $1$ than the revenue extracted in this case is strictly less than $1$
    (whereas if we duplicate both bidders then the revenue would still be $1$ in this case).
    With more detailed calculations, we show in Appendix~\ref{sec:app-2approx} that the revenue is $\ln 4$ and hence, the approximation ratio is
    $\approx 1.44$.
    This is the largest gap we know for the Vickrey auction with a single duplicate.
    \Xcomment{
	Example~\ref{ex:lb-HR} also provides a lower bound on the Vickrey auction's revenue with a \emph{single} duplicate: by replicating the first bidder,
    the resulting Vickrey auction has revenue~$1$,
    half of the optimal revenue without duplicates; and by replicating the second bidder, the Vickrey auction has revenue $\log 4$.
    So the approximation ratio is $1.44$.
    This is also the largest gap we know for the Vickrey auction with a single duplicate.
    The detailed calculations can be found in Appendix~\ref{sec:app-2approx}.
    }
\end{remark}

We believe that the Vickrey auction with $n$ duplicates in general give a $4/3$-approximation to the optimal revenue,
although the following example shows that for $n > 2$ it does \emph{not} $4/3$-approximate the \emph{ex ante} optimal revenue.  
	
\begin{example}
    \label{ex:n=3}
    Consider three distributions, the first having a triangle revenue curve peaking at (0, 1) (i.e., having cdf $F_1(\val) = 1 - \frac 1 {\val + 1}$),
    and the other two having triangle revenue curve peaking at $(\frac 1 2, \frac 1 2)$.
    The ex ante optimal revenue is~$2$,
    but we show in Appendix~\ref{sec:app-2approx} that the Vickrey auction with $3$ duplicates has a revenue strictly less than~$1.5$.
\end{example}

\subsection{Proof of Lemma~\ref{lem:spa-lookahead}.}
\SubsectionName{spald_proof}
\begin{proof}[Proof of Lemma~\ref{lem:spa-lookahead}.]

	We say bidder~$n+i$ is bidder~$i$'s duplicate.  We will show that SPALD has a revenue no less than~$\LA$.  Let us condition on the realization of $\vali[1], \ldots, \vali[i^* - 1], \vali[i^* + 1], \ldots, \vali[n]$ (which henceforth we denote as $\vals_{[n] - i^*}$), and denote by $\sechigh$ the highest among them.  
We show that, conditioning on this, the lookahead auction (without duplicates) obtains no more revenue from bidder~$i^*$ than SPALD does from bidder~$i^*$ \emph{and} her duplicate, bidder $n + i^*$.

We first analyze the posted price faced by $i^*$ in the lookahead auction.  Recall that the lookahead auction posts the revenue-optimal price for $i^*$ conditioning on $\vali[i^*] \geq \sechigh$.
Conditioning on $\vali[i^*] \geq \sechigh$, the revenue by posting any price $p \geq \sechigh$ to bidder~$i^*$ is $p(1 - F_{i^*}(p)) / (1 - F_{i^*}(\sechigh))$, which is simply the unconditioned revenue at that price $p (1 - F_{i^*}(p))$ amplified by a constant factor $1 / (1 - F_{i^*}(\sechigh))$.   Therefore the shape of the revenue curve conditioned on $\vali[i^*] \geq \sechigh$ is simply bidder $i^*$'s revenue curve to the left of $q(\sechigh)$ and amplified by a constant factor.
Let $r_i^*$ denote the monopoly reserve for bidder $i$.
Since the original revenue curve is concave, it is nondecreasing to the left of~$q_{i^*}(\monres_{i^*})$, the quantile of the monopoly reserve price, and non-increasing to the right.  Therefore, if $\sechigh$ is higher than $\monres_{i^*}$, 
the lookahead auction will simply post $\sechigh$ as the take-it-or-leave-it price for bidder~$i^*$; otherwise, the lookahead auction uses $\monres_{i^*}$ as the price.\footnote{This analysis essentially shows that the lookahead auction in our setting is the second price auction with monopoly reserve prices.}

If $\sechigh \geq \monres_{i^*}$, the lookahead auction extracts $\sechigh$ from bidder~$i^*$ when $\vali[i^*] \geq \sechigh$; but whenever $\vali[i^*] \geq \sechigh$, SPALD extracts a revenue at least $\sechigh$ from bidders $i^*$ and $n + i^*$.

The case where $\sechigh < \monres_{i^*}$ needs more calculation.  Let $R_{i^*}$ be the monopoly revenue extractable from bidder~$i^*$ (i.e., $\monres_{i^*}(1 - F_{i^*}(\monres_{i^*})$).  The lookahead auction, by posting the monopoly reserve price $\monres_{i^*}$, extracts revenue $R_{i^*}$ from bidder~$i^*$.  
We show that SPALD gets at least as much from bidder~$i^*$ and her duplicate.  

Recall that the duplicate bidder's value $\vali[n + i^*]$ is drawn from the same distribution $F_{i^*}$ as bidder~$i^*$.
With probability $F_{i^*}(\sechigh)$, $\vali[n + i^*]$ is no more than~$\sechigh$, and SPALD extracts a revenue of $\Rev_{i^*}(q_{i^*}(\sechigh))$ from bidder~$i^*$.  When $\vali[n+ i^*] \geq \sechigh$, bidder~$i^*$ faces in SPALD a take-it-or-leave-it price of $\vali[n + i^*]$, and the expected revenue from bidder~$i^*$is precisely the height of $\Rev_{i^*}$ at the quantile of~$\vali[n+i^*]$.
Since $\vali[n+ i^*]$ is drawn from the same distribution, its quantile in $F_{i^*}$, namely, $q_{i^*}(\vali[n + i^*]) = 1 - F_{i^*}(\vali[n + i^*])$, is uniformly distributed on $[0, 1]$.  Therefore the expected revenue when $\vali[n + i^*] \geq \sechigh$ is 
simply the area under bidder~$i^*$'s revenue curve between $0$ and~$q_{i^*}(\sechigh)$.
(See Figure~\ref{fig:rev-curve} for an illustration.\footnote{This conversion from the expected revenue to an area under the revenue curve is an elegant technique from \citet{DRY15}.}) 
The expected revenue from the duplicate bidder~$n + i^*$ is the same area: when $\vali[i^*]$ is at least~$\sechigh$, bidder~$n + i^*$ faces a take-it-or-leave-it price of $\vali[i^*]$, whose quantile in $F_{n + i^*}$ is uniform between $0$ and~$q_{n+i^*}(\sechigh)$, and therefore the expected revenue from bidder~$n+i^*$ is exactly the area under the revenue curve~$\Rev_{i^*}$ between quantiles $0$ and~$q_{i^*}(\sechigh)$.

Summing up the parts, the revenue of SPALD from bidders $i^*$ and~$n + i^*$ when $\sechigh \leq \monres_{i^*}$ is twice the area under the revenue curve $\Rev_{i^*}$ between $0$ and~$q_{i^*}(\sechigh)$,
plus the area of a rectangle with height $\Rev_{i^*}(q_{i^*}(\sechigh))$ and width $1 - q_{i^*}(\sechigh)$.  
If one replaces the part of the revenue curve $\Rev_{i^*}$ between $q_{i^*}(\monres_{i^*})$ and~$1$ by a straight line, the resulting curve is still concave, with the highest value at~$R_{i^*}$.  
Twice the area under this new curve is precisely SPALD's revenue from $i^*$ and~$n + i^*$.  By the concavity of the curve, this is still at least $R_{i^*}$.  This completes the proof.
\begin{figure}
\centering
\begin{tikzpicture}
\fill [fill = gray!30] (0, 0) -- (3, 0) -- (3, 1.166) -- (1, 3.5) -- cycle;
\draw [thick, ->] (0, 0) -- (5, 0) node[anchor = north west] {$q$};
\draw [thick, ->] (0, 0) -- (0, 4) node[anchor = south east] {$\Rev$};
\draw [thick] (0, 0) node[below]{$1$}  -- (1, 3.5) -- (4, 0) node[below] {$0$};
\draw [dashed] (1, 0) node[below]{$q(\monres)$} -- (1, 3.5);
\draw [dashed] (3, 0) node[below]{$q(\sechigh)$} -- (3, 1.166);
\end{tikzpicture}
\caption{Illustration of a concave revenue curve.  In this figure $\sechigh < \monres$, and the shaded area is the expected revenue from this bidder when both she and her duplicate bidder bid above $\sechigh$.}
\label{fig:rev-curve}
\end{figure}
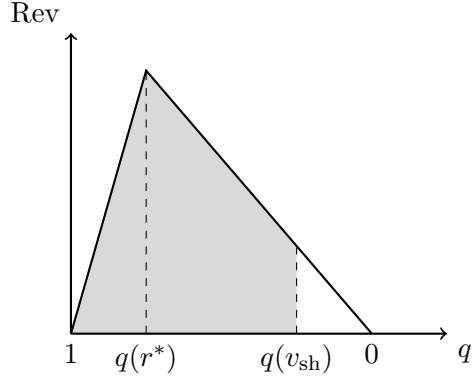
\end{proof}

\subsection{Proof of \Theorem{2approx}}
\SubsectionName{2approx}
\begin{proof}[Proof of Theorem~\ref{thm:2approx}]
	Given any two regular distributions with revenue curves $\Rev_1$ and~$\Rev_2$, let $\exanteqi[1], \exanteqi[2]$ be the solution to the ex ante revenue maximization problem, and let $R_1$ be $\Rev_1(\exanteqi[1])$ and $R_2$ be~$\Rev_2(\exanteqi[2])$.  Then the ex ante optimal revenue is $R_1 + R_2$.

	If we replace the two distributions so that bidder~$i$'s revenue curve is a triangle with the three vertices at $(0, 0), (\exanteqi, R_i)$ and $(1, 0)$, then by Lemma~\ref{lem:spa-flat}, the expected revenue of the second price auction with two duplicates is smaller whereas, by Lemma~\ref{lem:ex-ante-flat}, the ex ante optimal revenue remains the same.  We therefore only need to show that the second price auction with duplicates extracts at least $0.75$ fraction of the ex ante optimal revenue in this setting with triangular revenue curves.

	By Lemma~\ref{lem:spa-lookahead}, it suffices to analyze the revenue of the lookahead auction (without duplicates) in this setting.  Without loss of generality, suppose $\frac{R_1}{\exanteqi[1]} \geq \frac{R_2}{\exanteqi[2]}$.  By the same analysis as in the proof of Lemma~\ref{lem:spa-lookahead}, bidder~$1$ always faces a take-it-or-leave-it price of~$R_1 / \exanteqi[1]$, by which the auction extracts a revenue of~$R_1$ from her.  As for bidder~$2$, when $\vali[1] < R_2 / \exanteqi[2]$, she is faced with a price at $R_2 / \exanteqi[2]$; otherwise she faces a price of $\vali[1] > \exanteqi[2]$, which she can afford with probability~$0$.  Therefore the expected revenue extracted from bidder~$2$ is $R_2$ times $\Prx{\vali[1] \leq R_2 / \exanteqi[2]}$.  This probability can be easily calculated from the geometry of the revenue curves: let this probability be~$\unknown$, then 
	\begin{align*}
		\frac{\unknown R_1}{1 - \exanteqi[1]} = \frac{(1 - \unknown) R_2}{\exanteqi[2]} \quad \Rightarrow \unknown = \frac{\alpha (1 - \exanteqi[1])}{\exanteqi[2] + (1 - \exanteqi[1]) \alpha},
	\end{align*}
	where $\alpha$ denotes the ratio $\frac{R_2}{R_1}$.  The ratio between  the revenue of the lookahead auction and that of the ex ante optimal revenue is therefore
	\begin{align*}
		\frac{R_1 + R_2 \unknown}{R_1 + R_2} = \frac{1 + \frac{\alpha^2 (1 - \exanteqi[1])}{\exanteqi[2] + (1 - \exanteqi[1])\alpha}}{1 + \alpha}.
	\end{align*}
	We are interested in the minimum value of this ratio.  It is evident that, everything else fixed, the ratio decreases with $\exanteqi[2]$.  Therefore, given the constraint $\exanteqi[1] + \exanteqi[2] \leq 1$, we may take $\exanteqi[2] = 1 - \exanteqi[1]$.  This simplifies the ratio to
	\begin{align*}
		\frac{1 + \frac{\alpha^2 \exanteqi[2]}{\exanteqi[2] + \alpha \exanteqi[2]}}{1 + \alpha} = \frac{1 + \alpha + \alpha^2}{(1 +\alpha)^2}.
	\end{align*}
	This ratio is minimized at $\alpha = 1$, and evaluates to~$\frac 3 4$.
\end{proof}

\section{Conclusion and Open Questions}
\label{sec:conclusion}
In this work we closed the gap in the number of duplicates needed for the VCG auction to yield a constant fraction of the optimal revenue.  For both the single-item and $k$-item auctions, we showed tight bounds: instead of duplicating all bidders as required by \citet{HR09}, only one (or $k$, respectively) duplicate is needed.  These match the numbers in the i.i.d.\@ settings in Bulow and Klemperer's original result, and show that the only loss, when one removes symmetry from the i.i.d.\@ setting, is a constant fraction of revenue, without need to grow the number of duplicates (although one needs to be more careful in choosing a duplicate).

We leave several interesting questions for future study.  
We have focused on obtaining a constant approximation, and 
our technique is unlikely to yield the tightest approximation achieveable by a single duplicate.  In fact, in the single-item auction, the worst revenue loss with a \emph{single} duplicate we know of is only $0.31$ fraction of the optimal revenue (see Remark~\ref{rem:lb-single} in Section~\ref{sec:2approx}).  We believe the approximation ratio is closer to $2$ than to~$10$.  

We extended our results to $k$-items auctions in two settings.  In the ``free'' setting, a bidder can be duplicated multiple times and all duplicates participate as all other bidders.  This setting is most natural for a $k$-items auction, but cannot be extended to settings with more complex feasibility constraints.  The ``constrained'' setting allows each bidder to be duplicated at most once, but allows at most one of a bidder and her duplicate to win.  This can be generalized to domains such as matroid settings. \citet{HR09} showed their $2$-approximation for $n$-duplicates in any matroid setting.  It would be interesting to know whether $k$ duplicates always suffice for an auction with a rank~$k$ matroid feasibility costraint.

We made progress on the ten-year-old open problem of the approximation ratio of the Vickrey auction with $n$ duplicates.  We showed that $\tfrac 4 3$ is the correct bound for $n = 2$ but as we remarked in Section~\ref{sec:2approx}, new techniques would be needed to tighten the analysis for $n > 2$.

\newpage

\bibliographystyle{apalike}
\bibliography{bibs}

\begin{thebibliography}{}

\bibitem[Alaei et~al., 2015]{AHNPY15}
Alaei, S., Hartline, J.~D., Niazadeh, R., Pountourakis, E., and Yuan, Y.
  (2015).
\newblock Optimal auctions vs. anonymous pricing.
\newblock In {\em {IEEE} 56th Annual Symposium on Foundations of Computer
  Science, {FOCS} 2015, Berkeley, CA, USA, 17-20 October, 2015}, pages
  1446--1463.

\bibitem[Allouah and Besbes, 2018]{AB18}
Allouah, A. and Besbes, O. (2018).
\newblock Prior-independent optimal auctions.
\newblock In {\em Proceedings of the 2018 {ACM} Conference on Economics and
  Computation, Ithaca, NY, USA, June 18-22, 2018}, page 503.

\bibitem[Bulow and Klemperer, 1996]{BK96}
Bulow, J. and Klemperer, P. (1996).
\newblock Auctions versus negotiations.
\newblock {\em The American Economic Review}, 86(1):180--194.

\bibitem[Bulow and Roberts, 1989]{BR89}
Bulow, J. and Roberts, J. (1989).
\newblock The simple economics of optimal auctions.
\newblock {\em Journal of Political Economy}, 97(5):1060--90.

\bibitem[Chawla et~al., 2013]{CHMS13}
Chawla, S., Hartline, J.~D., Malec, D.~L., and Sivan, B. (2013).
\newblock Prior-independent mechanisms for scheduling.
\newblock In {\em Symposium on Theory of Computing Conference, STOC'13, Palo
  Alto, CA, USA, June 1-4, 2013}, pages 51--60.

\bibitem[Dhangwatnotai et~al., 2015]{DRY15}
Dhangwatnotai, P., Roughgarden, T., and Yan, Q. (2015).
\newblock Revenue maximization with a single sample.
\newblock {\em Games and Economic Behavior}, 91:318--333.

\bibitem[Eden et~al., 2017]{EFFTW17}
Eden, A., Feldman, M., Friedler, O., Talgam{-}Cohen, I., and Weinberg, S.~M.
  (2017).
\newblock The competition complexity of auctions: {A} bulow-klemperer result
  for multi-dimensional bidders.
\newblock In {\em Proceedings of the 2017 {ACM} Conference on Economics and
  Computation, {EC} '17, Cambridge, MA, USA, June 26-30, 2017}, page 343.

\bibitem[Feldman et~al., 2018]{FFR18}
Feldman, M., Friedler, O., and Rubinstein, A. (2018).
\newblock 99{\%} revenue via enhanced competition.
\newblock In {\em Proceedings of the 2018 {ACM} Conference on Economics and
  Computation, Ithaca, NY, USA, June 18-22, 2018}, pages 443--460.

\bibitem[Fu et~al., 2015]{FILS15}
Fu, H., Immorlica, N., Lucier, B., and Strack, P. (2015).
\newblock Randomization beats second price as a prior-independent auction.
\newblock In {\em Proceedings of the Sixteenth {ACM} Conference on Economics
  and Computation, {EC} '15, Portland, OR, USA, June 15-19, 2015}, page 323.

\bibitem[Hartline, 2017]{MDnA}
Hartline, J. (2017).
\newblock {\em Mechanism Design and Approximation}.

\bibitem[Hartline and Roughgarden, 2009]{HR09}
Hartline, J.~D. and Roughgarden, T. (2009).
\newblock Simple versus optimal mechanisms.
\newblock In {\em Proceedings 10th {ACM} Conference on Electronic Commerce
  (EC-2009), Stanford, California, USA, July 6--10, 2009}, pages 225--234.

\bibitem[Hoeffding, 1956]{Hoeffding56}
Hoeffding, W. (1956).
\newblock On the distribution of the number of successes in independent trials.
\newblock {\em Ann. Math. Statist.}, 27(3):713--721.

\bibitem[Kaas and Buhrman, 1980]{KB80}
Kaas, R. and Buhrman, J.~M. (1980).
\newblock Mean, median and mode in binomial distributions.
\newblock {\em Statistica Neerlandica}, 34(1):13--18.

\bibitem[Myerson, 1981]{M81}
Myerson, R. (1981).
\newblock Optimal auction design.
\newblock {\em Mathematics of Operations Research}, 6(1):pp. 58--73.

\bibitem[Ronen, 2001]{R01}
Ronen, A. (2001).
\newblock On approximating optimal auctions.
\newblock In {\em ACM Conference on Electronic Commerce}, pages 11--17.

\bibitem[Roughgarden et~al., 2012]{RTY12}
Roughgarden, T., Talgam{-}Cohen, I., and Yan, Q. (2012).
\newblock Supply-limiting mechanisms.
\newblock In {\em {ACM} Conference on Electronic Commerce, {EC} '12, Valencia,
  Spain, June 4-8, 2012}, pages 844--861.

\bibitem[Sivan and Syrgkanis, 2013]{SS13}
Sivan, B. and Syrgkanis, V. (2013).
\newblock Vickrey auctions for irregular distributions.
\newblock In {\em Web and Internet Economics - 9th International Conference,
  {WINE} 2013, Cambridge, MA, USA, December 11-14, 2013, Proceedings}, pages
  422--435.

\bibitem[Vickrey, 1961]{Vic61}
Vickrey, W. (1961).
\newblock Counterspeculation, auctions and competitive sealed tenders.
\newblock {\em Journal of Finance}, pages 8--37.

\end{thebibliography}
\newpage

\appendix

\section{Technical results}
\label{sec:technical}
\begin{claim}
\ClaimName{reg}
Suppose that $0 \leq q \leq q' \leq \beta \leq 1$.
Then for any regular distribution, we have $\Rev(q') \geq (1 - \beta) \Rev(q)$.
\end{claim}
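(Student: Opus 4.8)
The plan is to use nothing about the revenue curve beyond two facts established in the preliminaries: for a regular distribution, $\Rev$ is concave on $[0,1]$, and (without loss of generality) $\Rev(1) = 0$. The target inequality only involves values of $\Rev$ at the three points $q \le q' \le \beta$ together with the right endpoint $1$, so concavity applied to a single convex combination should suffice.

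Concretely, first I would dispose of the trivial case $q' = 1$ (then $\beta = 1$ and $\Rev(q') = \Rev(1) = 0$, while the right-hand side is $0$), so assume $q' < 1$, hence also $q < 1$. Since $q \le q' \le 1$, write $q'$ as the convex combination $q' = \lambda q + (1-\lambda)\cdot 1$ with $\lambda = \tfrac{1-q'}{1-q} \in [0,1]$. Concavity of $\Rev$ then gives
\[
\Rev(q') \;\ge\; \lambda\,\Rev(q) + (1-\lambda)\,\Rev(1) \;=\; \lambda\,\Rev(q) \;=\; \frac{1-q'}{1-q}\,\Rev(q),
\]
using $\Rev(1) = 0$ in the middle step.

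Finally I would bound the coefficient: since $q' \le \beta$ we have $1 - q' \ge 1 - \beta$, and since $q \ge 0$ we have $1 - q \le 1$; as $\Rev(q) \ge 0$, it follows that $\tfrac{1-q'}{1-q}\,\Rev(q) \ge (1-\beta)\,\Rev(q)$, which completes the proof. There is essentially no obstacle here — the only things to be careful about are the degenerate endpoints ($q' = 1$, or $q = q'$, which just makes $\lambda = 1$) and the sign of $\Rev(q)$, both of which are handled above; the content is entirely the single use of concavity anchored at the zero of the revenue curve at quantile $1$.
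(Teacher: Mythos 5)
Your proof is correct and is essentially identical to the paper's: a single application of concavity writing $q'$ as a convex combination of $q$ and $1$, yielding $\Rev(q') \ge \tfrac{1-q'}{1-q}\Rev(q) \ge (1-\beta)\Rev(q)$. The only cosmetic difference is that you invoke $\Rev(1)=0$ where the paper only needs $\Rev(1)\ge 0$, and you spell out the degenerate endpoint cases.
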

\begin{proof}
By the concavity of the revenue curves, we have
\[
    \Rev(q') \geq \frac{1-q'}{1-q} \Rev(q) + \frac{q'-q}{1-q} \Rev(1) \geq (1-\beta) \Rev(q),
\]
since $\Rev(1) \geq 0$.
\end{proof}


\begin{claim}
\ClaimName{reg_delta2}
Let $q \in [0,1/2]$ and $q' \in [0,1]$ be such that $|q - q'| \leq \eps q$ for some $\eps \in [0,1]$.
Then $(1-\eps) \Rev(q) \leq \Rev(q') \leq \frac{1}{1-\eps} \Rev(q)$.
\end{claim}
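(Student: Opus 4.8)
The plan is to reduce everything to two facts already in play: the concavity of the revenue curve (the same property invoked in \Claim{reg}) together with $\Rev(0)\ge 0$ and $\Rev(1)\ge 0$, and the elementary consequence of $q\le 1/2$ that $\frac{q}{1-q}\le 1$. First I would rewrite the hypothesis as $q(1-\eps)\le q'\le q(1+\eps)$ and note $q(1+\eps)\le 2q\le 1$, so $q'$ is a legitimate quantile. It is also worth recording up front that $1-\eps\le\frac{1}{1+\eps}$, so any bound of the form $\Rev(a)\ge\frac{1}{1+\eps}\Rev(b)$ automatically upgrades to $\Rev(a)\ge(1-\eps)\Rev(b)$; this lets me always aim for the cleaner $\frac{1}{1+\eps}$ factor.

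For the lower bound $\Rev(q')\ge(1-\eps)\Rev(q)$, I would split on the sign of $q'-q$. If $q\le q'$, concavity on $[q,1]$ says the chord through $(q,\Rev(q))$ and $(1,\Rev(1))$ lies below the curve, so $\Rev(q')\ge\frac{1-q'}{1-q}\Rev(q)$, and $\frac{1-q'}{1-q}\ge\frac{1-q(1+\eps)}{1-q}=1-\frac{\eps q}{1-q}\ge 1-\eps$ using $q\le 1/2$. If instead $q'<q$, concavity on $[0,q]$ says the chord through $(0,\Rev(0))$ and $(q,\Rev(q))$ lies below the curve, so $\Rev(q')\ge\frac{q'}{q}\Rev(q)\ge(1-\eps)\Rev(q)$ directly from $q'\ge q(1-\eps)$.

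For the upper bound, I would prove the equivalent statement $\Rev(q)\ge(1-\eps)\Rev(q')$, again in two cases. If $q\le q'$, concavity on $[0,q']$ gives $\Rev(q)\ge\frac{q}{q'}\Rev(q')\ge\frac{q}{q(1+\eps)}\Rev(q')=\frac{1}{1+\eps}\Rev(q')$, which upgrades to $(1-\eps)\Rev(q')$. If $q'<q$, concavity on $[q',1]$ gives $\Rev(q)\ge\frac{1-q}{1-q'}\Rev(q')$, and since $1-q'\le 1-q(1-\eps)=1-q+\eps q$ we get $\frac{1-q}{1-q'}\ge\frac{1-q}{1-q+\eps q}=\frac{1}{1+\frac{\eps q}{1-q}}\ge\frac{1}{1+\eps}$, again using $q\le 1/2$, and this upgrades as before. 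Combining the four cases yields the claim.

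I do not expect a genuine obstacle here; the only things to watch are keeping the direction of each chord inequality straight (the curve lies \emph{above} the chord on each interval considered) and invoking $q\le 1/2$ precisely where $\frac{\eps q}{1-q}\le\eps$ is needed. One could shorten the write-up by observing that the statement is symmetric enough to follow from a single lemma applied twice, but since the hypothesis bounds $|q-q'|$ by a multiple of $q$ rather than $q'$, spelling out the four short cases is the cleanest route.
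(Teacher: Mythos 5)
Your proof is correct and follows essentially the same route as the paper's: the same four-way case split on the sign of $q'-q$, the same chord inequalities from concavity anchored at quantiles $0$ and $1$, and the same use of $q\le 1/2$ to get $\frac{\eps q}{1-q}\le\eps$ (your detour through the factor $\frac{1}{1+\eps}$ before relaxing to $1-\eps$ is only a cosmetic difference). No gaps.
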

\begin{proof}
First suppose that $q' > q$.
By concavity of the revenue curve, we have
\begin{align*}
\Rev(q') & \geq \frac{1 - q'}{1-q} \Rev(q) + \frac{q - q'}{1-q} \Rev(1) \\
& \geq \frac{1-q'}{1-q} \Rev(q) \\
& \geq \frac{1-q+q-q'}{1-q} \Rev(q) \\
& = \left( 1 - \frac{q' - q}{1-q} \right) \Rev(q).
\end{align*}
Since $q \leq 1/2$, we have $q \leq 1 - q$.
So
\begin{align*}
    \Rev(q')
    & \geq \left( 1 - \frac{q' - q}{1-q} \right) \Rev(q) \\
    & \geq \left( 1 - \frac{\eps q}{1-q} \right) \Rev(q) \\
    & \geq \left( 1 - \frac{\eps (1-q)}{1-q} \right) \Rev(q) \\
    & = (1-\eps) \Rev(q).
\end{align*}
This proves the lower bound on $\Rev(q')$.
To get the upper bound, we use concavity again and note that
\begin{align*}
    \Rev(q)
    & \geq \frac{q}{q'} \Rev(q') + \frac{q'-q}{q'} \Rev(0) \\
    & \geq \frac{q}{q'} \Rev(q') \\
    & \geq \frac{q' - \eps q}{q'} \Rev(q') \\
    & = (1 - \eps q / q') \Rev(q') \\
    & \geq (1-\eps) \Rev(q'),
\end{align*}
where the last inequality is because $q < q'$.
So $\Rev(q') \leq \frac{1}{1-\eps} \Rev(q')$.

The case of $q' < q$ follows from calculations which are similar to those above.
Indeed
\begin{align*}
\Rev(q') & \geq \frac{q'}{q} \Rev(q) + \frac{q-q'}{q} \Rev(0) \\
& \geq \frac{q'}{q} \Rev(q) \\
& \geq \frac{(1-\eps)q}{q} \Rev(q) \\
& = (1-\eps)\Rev(q)
\end{align*}
and
\[
\Rev(q)
\geq \left( 1 - \frac{q - q'}{1-q'} \Rev(q') \right).
\]
Since $q' < q \leq 1/2$, we have $q \leq 1 - q \leq 1 - q'$.
So $q - q' \leq \eps q \leq \eps(1-q')$ and this shows that $\Rev(q) \geq (1-\eps) \Rev(q')$ as desired.
\end{proof}

Let $\Ber(p)$ denote the Bernoulli distribution with mean $p$ and $\Binom(n,p)$ be the binomial distribution.
\begin{theorem}[Theorem~4 in \cite{Hoeffding56}]
\TheoremName{Hoeffding}
Let $X_1, \ldots, X_n$ be random variables where $X_i \sim \Ber(p_i)$.
Let $S = \sum_{i=1}^n X_i$ and set $p = n^{-1} \Ex S$.
Then for all $c \leq np - 1$, $\Pr[S \leq c] \leq \Pr[\Binom(n, p) \leq c]$.
\end{theorem}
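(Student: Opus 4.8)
The plan is to derive this from the extremality of the binomial distribution among all sums of independent Bernoulli variables with a prescribed total mean, which is exactly the content of Hoeffding's Theorem~4; so the cleanest option is simply to cite \cite{Hoeffding56}, but here is the argument I would otherwise give. Fix $n$ and the total mean $s \coloneqq \sum_i p_i = np$, and regard $F(p_1,\ldots,p_n) \coloneqq \Pr[\sum_i X_i \le c]$ as a function on the compact slice $\{(p_1,\ldots,p_n) \in [0,1]^n : \sum_i p_i = s\}$; the theorem says $F$ is maximized at the balanced point $(p,\ldots,p)$, where $\sum_i X_i \sim \Binom(n,p)$. Since the balanced point is the minimum of this slice in the majorization order, and a symmetric function is maximized there once it is Schur-concave, it suffices to show that replacing any pair $(p_i,p_j)$ by its average $\bigl(\tfrac{p_i+p_j}{2},\tfrac{p_i+p_j}{2}\bigr)$, holding all other coordinates fixed, does not decrease $F$, and then to iterate such equalizations until every coordinate equals $p$.

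To carry out one equalization step I would condition on $W \coloneqq \sum_{k \neq i,j} X_k$. Using $\Pr[X_i + X_j = 0] = 1 - (p_i+p_j) + p_i p_j$ and $\Pr[X_i+X_j = 1] = (p_i+p_j) - 2 p_i p_j$, a direct computation shows that $F$ is affine in $p_i p_j$ with slope $\Pr[W = c] - \Pr[W = c-1]$; since $p_i p_j$ strictly increases under the equalization, the step helps precisely when $\Pr[W = c] \ge \Pr[W = c-1]$. This is where the hypothesis is used. The variable $W$ is itself a sum of independent Bernoullis, so its p.m.f.\ is log-concave (convolution preserves log-concavity) and hence unimodal, with a mode in $\{\lfloor \Ex{W}\rfloor, \lceil \Ex{W}\rceil\}$ by Darroch's theorem, where $\Ex{W} = s - p_i - p_j$; so $\Pr[W=c] \ge \Pr[W=c-1]$ whenever $c \le \Ex{W}$, i.e.\ whenever $p_i + p_j \le s - c$, and the assumption $c \le np - 1$ guarantees $s - c \ge 1$, so at least the pairs with $p_i + p_j \le 1$ can always be equalized.

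The main obstacle is making the iteration go through cleanly: $s - c \ge 1$ alone need not admit an equalizing pair at every intermediate vector when some coordinates are large, so one has to be careful about the order of equalizations (for example, always involve the current smallest coordinate) or, alternatively, argue at a maximizer of $F$ on the slice that no equalization can strictly increase $F$ and deduce from the slope computation and Darroch's theorem that such a maximizer must be balanced. Since this bookkeeping is precisely what \cite{Hoeffding56} carries out, in the paper I would state the theorem with the citation and not reproduce the argument.
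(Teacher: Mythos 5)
The paper gives no proof of this statement: it is quoted verbatim as Theorem~4 of \cite{Hoeffding56} and used as a black box, which is exactly what you conclude one should do. Your sketch of the equalization/Schur-concavity argument is a fair outline of Hoeffding's own variational method (and you correctly flag that the iteration bookkeeping is the delicate part his paper handles), but since the paper simply cites the result, citing it is the right call.
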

\begin{fact}[Corollary~1 in \cite{KB80}]
\FactName{median}
The median of $\Binom(n,p)$ is at least $\lfloor np \rfloor$.
\end{fact}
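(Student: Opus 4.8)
Write $X \sim \Binom(n,p)$ and $m \coloneqq \lfloor np\rfloor$. Since $X$ is integer-valued, the assertion ``the median of $\Binom(n,p)$ is at least $m$'' is equivalent to $\Pr[X \ge m] \ge \tfrac12$, and this is the inequality I would prove. The case $m = 0$ is immediate, so assume $m \ge 1$, whence $p \ge 1/n$.

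The plan is to pass to uniform order statistics. Let $U_1,\dots,U_n$ be i.i.d.\ uniform on $[0,1]$; then $X \coloneqq \#\{i : U_i \le p\}$ has law $\Binom(n,p)$, and the event $\{X \ge m\}$ is exactly $\{U_{(m)} \le p\}$, where $U_{(m)}$ denotes the $m$-th smallest of the $U_i$. As $U_{(m)} \sim \mathrm{Beta}(m,\,n+1-m)$, it suffices to show that the median of $\mathrm{Beta}(m, n+1-m)$ is at most $p$. Two anchors pin down this median: its mean $\tfrac{m}{n+1} \le \tfrac{np}{n+1} < p$, and (when $2 \le m \le n-1$, so the density is unimodal with an interior mode) its mode $\tfrac{m-1}{n-1}$, which is at most $p$ because $m-1 \le np-1 \le np - p = p(n-1)$. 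I would then split on the skewness of $\mathrm{Beta}(m, n+1-m)$: if $m < n+1-m$ it is right-skewed and its median lies below its mean, hence below $p$; if $m > n+1-m$ it is left-skewed and its median lies below its mode, hence below $p$; and $m = n+1-m$ forces $p \ge \tfrac{n+1}{2n} \ge \tfrac12$, with the median equal to $\tfrac12$. The boundary cases $m \in \{1, n\}$, where $\mathrm{Beta}$ has no interior mode, I would dispose of by hand, using $\mathrm{median}(\mathrm{Beta}(1,n)) = 1 - 2^{-1/n} < 1/n \le p$ and observing that $m = n$ forces $p = 1$.

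The single non-elementary ingredient, and the main obstacle, is the mean/median/mode ordering invoked for the $\mathrm{Beta}$ family above (equivalently, a direct proof that $\Pr[U_{(m)} \le p] \ge \tfrac12$); this is, in essence, the content of Kaas and Buhrman's Corollary~1 and may simply be cited as such. For a self-contained treatment I would instead use the identity $U_{(m)} \stackrel{d}{=} G/(G + G')$ with independent $G \sim \mathrm{Gamma}(m,1)$ and $G' \sim \mathrm{Gamma}(n+1-m,1)$, reducing the claim to $\Pr[(1-p)\,G \le p\,G'] \ge \tfrac12$, and then apply the (known) two-sided Chen--Rubin bounds on medians of Gamma laws; or, alternatively, argue directly from the binomial mass-function ratio $\Pr[X = k+1]/\Pr[X = k] = \tfrac{(n-k)p}{(k+1)(1-p)}$, which shows $\Pr[X = \cdot]$ is log-concave and unimodal with mode at least $m$, and then compare $\Pr[X \le m-1]$ against $\Pr[X \ge m]$ exploiting log-concavity.
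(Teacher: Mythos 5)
The paper does not actually prove this fact: it is imported verbatim as Corollary~1 of Kaas and Buhrman (1980), so there is no internal argument to compare yours against. Judged on its own, your reduction is correct in every elementary step --- the equivalence of the median statement with $\Pr[X \ge m] \ge \tfrac12$, the identity $\{X \ge m\} = \{U_{(m)} \le p\}$, the law $U_{(m)} \sim \mathrm{Beta}(m,\,n+1-m)$, the bounds $\tfrac{m}{n+1} < p$ and $\tfrac{m-1}{n-1} \le p$, and the boundary cases $m \in \{0,1,n\}$ and $2m = n+1$ all check out. But, as you yourself flag, the whole weight of the argument then rests on the mean/median/mode ordering for the $\mathrm{Beta}(a,b)$ family, which is a genuinely nontrivial classical theorem of essentially the same depth as the fact being proved. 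Invoking it by citation trades one black box for another; the proposal is therefore a sound and instructive reduction, but not a self-contained proof.

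The fallback sketches do not close this gap. Log-concavity of the binomial pmf together with the mode being at least $m$ is not sufficient: the pmf proportional to $(1, 1.01, 1.02)$ on $\{0,1,2\}$ is log-concave and nondecreasing, hence has mode $2$, yet places mass about $0.663$ on $\{0,1\}$, so its median is $1$, strictly below the mode. A correct finish along those lines needs the term-by-term comparison $\Pr[X = m+j] \ge \Pr[X = m-1-j]$ for all $j \ge 0$, which uses the exact value of the ratio $\tfrac{(n-k)p}{(k+1)(1-p)}$ and is, in substance, Kaas and Buhrman's own argument. Likewise, separate median bounds on $G$ and $G'$ do not combine to give $\Pr[(1-p)G \le pG'] \ge \tfrac12$; for independent variables that style of argument only yields a $\tfrac14$ bound, so the Chen--Rubin route also needs a further idea. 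If the goal is merely to justify the fact as used in the paper, citing Kaas and Buhrman directly (as the paper does) remains the cleanest option.
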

Let $X_1, \ldots, X_n$ be random variables where $X_i \sim \Ber(p_i)$, $S = \sum_{i=1}^n X_i$,a nd $p = n^{-1} \Ex S$.
Since $np - 1 \leq \lfloor np \rfloor$, \Theorem{Hoeffding} and \Fact{median} imply that
\[
    \prob{S \leq np-1} \leq \prob{S < \lfloor np \rfloor} \leq 1/2.
\]
Hence, $\prob{S \geq np} \geq 1/2$.




\section{Additional Results for $k$-item Auctions}
\label{sec:app-k-item}
\label{sec:theoremKUnitApprox2}
\label{sec:k-items-cases-proof}
\label{sec:KUnitRobustness}
\label{sec:compare-rev}

\Theorem{kUnitApprox2} only guarantees the existence of at most $k$ bidders which can be duplicated so that $\VCG$ in the duplication
environment approximates the ex-ante optimal revenue.
The next result shows that under mild assumptions and (noisy) knowledge about the revenue curve,
the auctioneer can determine which bidders ought to be duplicated.

\begin{theorem}
\TheoremName{k-item-beta-noisy}
Consider a $k$-item auction with $n$ bidders whose values are drawn independently from (non-identical) regular distributions $F_1, \ldots , F_n$.
Suppose that the auctioneer has access to the following oracle for each bidder $i$ for some $\beta \in (0,1/2]$ and $\eps \in (0,1)$:
the oracle returns $\Rev_i(\beta_i')$ with the promise that $\abs{\beta - \beta_i} \leq \eps \beta$.
Let $S$ be the $k$ highest bidders in terms of $\Rev_i(\beta_i')$.
Then duplicating every bidder in $S$ and running the $\VCG$ mechanism
(restricting that a bidder and her duplicate cannot both win) on the $n+k$
bidders, the auctioneer can extract a revenue of at least $c(\beta, \epsilon) \cdot \OPT$ where
$c(\beta, \epsilon)$ is a constant depending only on $\beta$ and $\epsilon$.
Moreover, $c(0.5, \eps) \geq (1-\eps)^3 \cdot 0.1$.
\end{theorem}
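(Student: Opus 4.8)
The plan is to follow the proof of \Theorem{kUnitApprox2} almost verbatim, with one essential change: wherever that proof ``duplicates the set $H$'' supplied by \Lemma{kItemsCasesNew}, we instead duplicate the oracle-chosen set $S$ of the $k$ bidders with the largest $\Rev_i(\beta_i')$, and the extra work is to absorb the resulting $(1-\eps)$-factor losses using \Claim{reg_delta2}.

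First I would record two facts that let $S$ stand in for any set $H$ with $|H|\le k$. (i) If we duplicate the bidders of $S$ and run $\VCG$ with the constraint that a bidder and her duplicate may not both win, then, exactly as in the proof of \Theorem{kUnitApprox2}, this mechanism's revenue is at least that of the same mechanism run on $S$ and its duplicates alone, which decomposes into an independent single-item second-price auction on each pair consisting of a bidder $i\in S$ and her copy; by \Theorem{BK} and the fact that posting the price $\val_i(\beta_i')$ to a single copy of $F_i$ collects at least $\Rev_i(\beta_i')$, the total is at least $\sum_{i\in S}\Rev_i(\beta_i')$. (ii) Since $S$ holds the $k$ largest values of $\Rev_i(\beta_i')$ and every revenue curve is nonnegative, $\sum_{i\in S}\Rev_i(\beta_i')\ge\sum_{i\in H}\Rev_i(\beta_i')$ for every $H$ with $|H|\le k$. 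So it suffices to lower bound $\sum_{i\in H}\Rev_i(\beta_i')$ for the set $H$ handed to us by \Lemma{kItemsCasesNew}.

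Next I would run through the three cases of \Lemma{kItemsCasesNew}, with $\beta$ as given and $\gamma,\delta$ chosen later subject to $\frac{(1-\gamma)(1-\beta)-\delta}{\gamma}\ge\frac32$. In case~1 we have $q_i\in[\beta,1)$ for $i\in H$ with $\sum_{i\in H}\Rev_i(q_i)\ge\delta\,\OPT$; since $q_i\ge\beta$ we get $\Rev_i(\beta)=\beta\,\val_i(\beta)\ge\beta\,\val_i(q_i)\ge\beta\,\Rev_i(q_i)$, and $\Rev_i(\beta_i')\ge(1-\eps)\Rev_i(\beta)$ by \Claim{reg_delta2} (the sub-case $\beta_i'>q_i$ is dealt with by a direct concavity estimate on $[q_i,1]$ and is no worse for $\beta=\tfrac12$), so $\sum_{i\in S}\Rev_i(\beta_i')\ge(1-\eps)\,\beta\,\delta\,\OPT$. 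In case~2, $|H|=k$ and $\val_i(\beta)\ge\frac\gamma k\OPT$ for $i\in H$, hence $\Rev_i(\beta)\ge\frac{\beta\gamma}{k}\OPT$ and $\Rev_i(\beta_i')\ge(1-\eps)\frac{\beta\gamma}{k}\OPT$, so $\sum_{i\in S}\Rev_i(\beta_i')\ge(1-\eps)\,\beta\,\gamma\,\OPT$. In case~3, $\VCG$ without duplicates already collects at least $\frac\gamma2\OPT$ (with probability $\ge\tfrac12$ the $(k+1)$-st highest original bid is at least $\frac\gamma k\OPT$), and adding duplicates never hurts. Taking the minimum over the cases and optimizing $\gamma,\delta$ — and carrying the one or two further $(1-\eps)$ factors that appear once every threshold is phrased through the reported $\beta_i'$ rather than $\beta$ — yields $c(\beta,\eps)$; setting $\beta=\tfrac12$ gives $c(\tfrac12,\eps)\ge(1-\eps)^3\cdot 0.1$.

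The main obstacle is case~1: the quantity we are promised, $\sum_{i\in H}\Rev_i(q_i)$, lives at quantiles $q_i$ that may be anywhere in $[\beta,1)$, whereas the auctioneer can only evaluate $\Rev_i$ near the quantile $\beta$; converting between the two with only a bounded multiplicative loss is precisely where concavity of the revenue curve (\Claim{reg}, \Claim{reg_delta2}) together with $q_i\ge\beta$ is used, and pinning down the cleanest constant needs a little care about the sub-case $\beta_i'>q_i$. Everything else — the reduction to one second-price auction per duplicated pair, \Theorem{BK}, monotonicity of constrained-$\VCG$ revenue under adding bidders, and case~3 — carries over unchanged from \Theorem{kUnitApprox2}.
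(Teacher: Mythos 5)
Your architecture is the right one — the three cases of \Lemma{kItemsCasesNew}, the reduction of the constrained $\VCG$ on $S$ and its duplicates to independent per-pair second-price auctions, \Theorem{BK} on each pair, and revenue monotonicity when the remaining bidders are added back — and it does establish the qualitative claim that some constant $c(\beta,\eps)$ works. But your handling of Case~1 cannot deliver the stated bound $c(0.5,\eps)\ge(1-\eps)^3\cdot 0.1$. By routing everything through $\sum_{i\in S}\Rev_i(\beta_i')\ge\sum_{i\in H}\Rev_i(\beta_i')$ and then converting $\Rev_i(\beta_i')$ back to $\Rev_i(q_i)$, you are forced to pay the factor $\beta$ in the chain $\Rev_i(\beta)\ge\beta\,\val_i(q_i)\ge\beta\,\Rev_i(q_i)$ (concavity gives only $\Rev_i(\beta)\ge(\beta/q_i)\Rev_i(q_i)$, which is no better when $q_i$ is close to $1$ — and the lemma only promises $q_i\in[\beta,1)$). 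So your Case~1 yields $(1-\eps)\beta\delta\cdot\OPT$ rather than $\delta\cdot\OPT$. Under the constraint $\frac{(1-\gamma)(1-\beta)-\delta}{\gamma}\ge\frac32$ with $\beta=\tfrac12$ (i.e.\ $\delta\le\tfrac12-2\gamma$), the best value of $\min\{\tfrac12\delta,\tfrac12\gamma,\tfrac\gamma2\}$ over feasible $(\gamma,\delta)$ is $\tfrac1{12}$, attained at $\gamma=\delta=\tfrac16$; so even ignoring all $(1-\eps)$ factors your argument tops out around $0.083\,\OPT$, short of $0.1\,\OPT$.

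The paper's proof avoids this loss by using the oracle outputs only to \emph{locate} $H$, not to lower-bound revenue. It splits on whether $\val_k(\beta)\ge(1-\eps)^2\frac{\gamma}{k}\OPT$ (indices sorted by $\Rev_i(\beta)$). If not, the Case-1 set $H$ satisfies $\val_i(\beta)\ge\frac{\gamma}{k}\OPT$ for $i\in H$ while every $j\ge k$ has $\val_j(\beta)<(1-\eps)^2\frac{\gamma}{k}\OPT$; two applications of \Claim{reg_delta2} separate the reported values $\Rev_i(\beta_i')$ of these two groups, which forces $H\subseteq S$. Once $H\subseteq S$, each pair $\{i,\text{duplicate}\}$ with $i\in H$ yields, by \Theorem{BK}, at least the monopoly revenue of $F_i$, hence at least $\Rev_i(q_i)$ \emph{at the lemma's own quantile} $q_i$ — no conversion to quantile $\beta$ is needed — giving the full $\delta\cdot\OPT$ in Case~1. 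The $(1-\eps)^3$ then comes only from the (re-thresholded) Case~2, and the choice $\beta=0.5$, $\delta=0.1$, $\gamma=0.2$ gives $(1-\eps)^3\cdot 0.1$. To repair your write-up, replace your Case-1 estimate with this containment argument (and adjust the case decomposition accordingly, since the modified cases are no longer the literal alternatives of \Lemma{kItemsCasesNew}).
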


\begin{proof}
Let $\gamma, \delta$ satisfy $\frac{1}{\gamma}((1-\beta)(1-\gamma) - \delta) \geq \frac{3}{2}$.
By shuffling the indices, let us assume that $\Rev_1(\beta) \geq \ldots \geq \Rev_n(\beta)$.

Again, we consider the following three cases given by \Lemma{kItemsCasesNew}.
However, we will have to be a little careful with how we set up the cases (note that the cases are not necessarily a partition).

\paragraph{Case 1: $\val_k(\beta) < (1-\eps)^2 \frac{\gamma}{k}$ but there exists $q_1, \ldots, q_n \in [\beta, 1)$
    and a set $H \subseteq [n]$ with $|H| < k$ such that $\val_i(\beta) \geq \frac{\gamma}{k}$ for all $i \in H$
    and $\sum_{i \in H} \Rev_i(q_i) \geq \delta \cdot \OPT$.}
The assumptions in the case imply that we can assume $H = \{ i \in S \ : \ \val_i(\beta) \geq \frac{\gamma \OPT}{k} \}$.
Let $S$ be the top $k$ highest bidders in terms of $\Rev_i(\beta_i')$.
We claim that $H$ is a subset of $S$.
\begin{claim}
    \ClaimName{HsubsetS}
    It holds that $H \subseteq S$.
\end{claim}
\begin{proof}
    Suppose $i \in H$.
    Then $\val_i(\beta) \geq \frac{\gamma}{k} \OPT$ so $\Rev_i(\beta) \geq \frac{\gamma \beta}{k} \OPT$.
    By \Claim{reg_delta2}, this implies that $\Rev_i(\beta_i') \geq (1-\eps) \frac{\gamma \beta}{k} \OPT$.

    On the other hand, suppose that $i \geq k$.
    Then $\val_i(\beta) < (1-\eps)^2\frac{\gamma}{k} \OPT$ so $\Rev_i(\beta) < (1-\eps)^2\frac{\gamma \beta}{k} \OPT$.
    Hence, applying \Claim{reg_delta2} again, we have that $\Rev_i(\beta_i') < (1-\eps) \frac{\gamma \beta}{k} \OPT$.

    The two previous paragraphs imply that if $i \in H$ then $\Rev_i(\beta_i') \geq \Rev_j(\beta_j')$ for all $j \geq k$.
    Since $S$ contains the top $k$ bidders in terms of $\Rev_i(\beta_i')$ this implies that $S$ contains $H$ completing the proof.
\end{proof}

Since we duplicate every bidder in $H$, \Theorem{BK} implies that a lower bound on the revenue of VCG on the resulting environment
is $\sum_{i \in H} \Rev_i(\beta) \geq \delta \cdot \OPT$.

\paragraph{Case 2: There exists a set $H \subseteq [n]$ with $|H| \geq k$ such that $\val_i(\beta) \geq (1-\eps)^2\frac{\gamma}{k} \OPT$.}
In this case, we have $\Rev_i(\beta) \geq (1-\eps)^2\frac{\beta \gamma}{k} \OPT$ for $1 \leq i \leq k$.
Let $H$ be the $k$ highest bidders in terms of $\Rev_i(\beta_i')$.
Then
\begin{align*}
    \sum_{i \in H} \Rev_i(\beta_i')
    & \geq \sum_{i=1}^k \Rev_i(\beta_i') \\
    & \geq \sum_{i=1}^k (1-\eps) \Rev_i(\beta) \quad \text{(by \Claim{reg_delta2})} \\
    & \geq \sum_{i=1}^k (1-\eps)^3 \frac{\beta \gamma}{k} \OPT \\
    & \geq (1-\eps)^3 \beta \gamma \OPT.
\end{align*}
Hence, duplicating the bidders in $H$ obtains and running VCG extracts a revenue of at least $(1-\eps)^3 \beta \gamma \OPT$.

\paragraph{Case 3: With probability at least $1/2$ there exists at least $k + 1$ bidders who bid at least $\frac{\gamma}{k} \OPT$.}
This case is straightforward since without duplicating, VCG already extracts a revenue of $\frac{\gamma}{2} \OPT$.

Combining the three cases we see that we can obtain a revenue of at least $\min \{ (1-\epsilon) \beta \delta, (1 - \epsilon) \beta \gamma, \gamma(1/2 )   \} \OPT$. In particular, one can take
\[
c(\beta, \eps) ~=~ \max_{\substack{\gamma, \delta \in [0,1] : \\ \frac{1}{\gamma}((1- \beta)(1-\gamma) - \delta) \geq 3/2} }
\min \left\{ \delta , (1-\eps)^3 \gamma \beta, \gamma / 2 \right\}.
\]
Finally, observe that if one chooses $\beta = 0.5, \delta = 0.1, \gamma = 0.2$ then one obtains that $c(0.5, \eps) \geq (1-\eps)^3 \cdot 0.1$.
\end{proof}


\section{Missing proofs from Section~\ref{sec:2approx}}
\label{sec:app-2approx}

\begin{lemma}
	\LemmaName{rev-dominate}
	Let $R$ and $\tilde R$ be the revenue curves of two value distributions $F$ and~$\tilde F$.  $F$ stochastically dominates~$\tilde F$ if and only if $R$ pointwise dominates~$\tilde R$ in the sense that for any $q \in [0, 1]$, $R(q) \geq \tilde R(q)$.
\end{lemma}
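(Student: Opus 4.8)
The plan is to route everything through the quantile (value) function $v(q)=F^{-1}(1-q)$, which carries all the content, and observe that multiplying by $q$ is harmless. First I would record the elementary reformulation: by definition of the revenue curve, $R(q)=q\,v(q)$ and $\tilde R(q)=q\,\tilde v(q)$, so for every $q\in(0,1)$ the inequality $R(q)\ge \tilde R(q)$ is equivalent to $v(q)\ge \tilde v(q)$ (divide by $q>0$); at the endpoints $q\in\{0,1\}$ both curves equal $0$ under the normalization convention $R(0)=R(1)=0$, so they contribute nothing. Hence ``$R$ pointwise dominates $\tilde R$ on $[0,1]$'' is equivalent to ``$v(q)\ge \tilde v(q)$ for all $q\in(0,1)$'', and substituting $y=1-q$ this is the same as ``$F^{-1}(y)\ge \tilde F^{-1}(y)$ for all $y\in(0,1)$''.

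It then remains to establish the classical duality ``$F(w)\le \tilde F(w)$ for all $w$'' $\iff$ ``$F^{-1}(y)\ge \tilde F^{-1}(y)$ for all $y\in(0,1)$'', recalling that $F$ stochastically dominating $\tilde F$ means precisely $F(w)\le \tilde F(w)$ (equivalently $\Prx{V>w}$ is larger under $F$) for every $w$. For the forward implication I would fix $y$ and note that $F\le\tilde F$ pointwise gives the inclusion $\{w: F(w)\ge y\}\subseteq\{w:\tilde F(w)\ge y\}$ (if $F(w)\ge y$ then $\tilde F(w)\ge F(w)\ge y$), so taking infima with the generalized inverse $F^{-1}(y)=\inf\{w: F(w)\ge y\}$ yields $F^{-1}(y)\ge \tilde F^{-1}(y)$. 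For the converse I would argue by contraposition: if $F(w_0)>\tilde F(w_0)$ for some $w_0$, pick $y\in(\tilde F(w_0),F(w_0))\subseteq(0,1)$; then $F(w_0)\ge y$ gives $F^{-1}(y)\le w_0$, while $\tilde F(w)\le\tilde F(w_0)<y$ for all $w\le w_0$, together with right-continuity of $\tilde F$, forces $\tilde F^{-1}(y)>w_0$, so $\tilde F^{-1}(y)>F^{-1}(y)$, contradicting the hypothesis.

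Chaining these equivalences gives the lemma. There is no serious obstacle; the only delicate points are (a) distributions that are not strictly increasing or have atoms, which is why I work throughout with the generalized inverse and invoke right-continuity of the cdf in the converse, and (b) the boundary quantiles $q=0,1$, handled by the stated convention $R(0)=R(1)=0$. If one prefers brevity, the second paragraph can instead be replaced by a citation of the standard equivalence between first-order stochastic dominance and pointwise domination of quantile functions, reducing the whole proof to the one-line observation that $R=q\,v$ with $q\ge 0$.
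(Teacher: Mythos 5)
Your proof is correct and follows essentially the same route as the paper's: both directions reduce to the identity $R(q)=q\,F^{-1}(1-q)$ together with the standard duality between pointwise dominance of cdfs and of quantile functions. You are somewhat more careful than the paper in the converse direction (generalized inverses, right-continuity, atoms), but the underlying argument is the same.
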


\begin{proof}
	Suppose $F$ stochastically dominates~$\tilde F$, for any quantile~$q$, $F^{-1}(1-q) \geq \tilde F^{-1}(1-q)$, and so $R(q) = q F^{-1}(1-q) \geq q \tilde F^{-1}(1-q) = \tilde R(q)$.   Conversely, if $R$ pointwise dominates~$\tilde R$, for any value~$\val$, let the quantile of~$\val$ in~$F$ be~$q$, and that in~$\tilde F$ be~$\tilde q$. Then $\Prx[w \sim F]{w \geq \val} = \frac{R(q)}{ \val} \geq \frac{\tilde R(q)} {\val} = \Prx[w \sim \tilde F]{w \geq \val}$, which shows that $F$ stochastically dominates~$\tilde F$.
\end{proof}

\begin{lemma}
	\LemmaName{spa-flat}
	In a second price auction where all bidders' values are drawn independently, if we replace bidder~$i$ by another bidder~$i'$ whose revenue curve is dominated by that of bidder~$i$'s, the new auction extracts no more expected revenue than the original.
\end{lemma}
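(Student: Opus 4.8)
The plan is to reduce the statement to a one-line coupling argument via stochastic dominance. First I would invoke \Lemma{rev-dominate}: since bidder~$i'$'s revenue curve is pointwise dominated by bidder~$i$'s, the distribution $F_{i'}$ is stochastically dominated by $F_i$, equivalently $F_i^{-1}(u) \geq F_{i'}^{-1}(u)$ for every $u \in (0,1)$. This lets me build a \emph{monotone coupling}: fix a realization $\bval_{-i}$ of all the other bidders' values, draw a single $U$ uniform on $[0,1]$ independently, and set $\vali = F_i^{-1}(U)$ in the original auction and $\val_{i'} = F_{i'}^{-1}(U)$ in the new one. Then $\vali$ has law $F_i$, $\val_{i'}$ has law $F_{i'}$, both are independent of $\bval_{-i}$, and $\val_{i'} \leq \vali$ with probability one.

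Next I would use the elementary fact that the revenue of the second price auction on a bid profile equals the second-largest coordinate of that profile (the winner pays the second-highest bid, and since values are nonnegative this is a well-defined nonnegative quantity). The key structural observation is that the second-largest order statistic is nondecreasing in each coordinate: it can be written as $\max_{j \neq \ell}\min(\val_j,\val_\ell)$, a maximum of minima of the coordinates, hence coordinatewise monotone.

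Putting these together: under the coupling the bid profile $(\bval_{-i},\val_{i'})$ in the new auction is coordinatewise at most the bid profile $(\bval_{-i},\vali)$ in the original auction, so the second-highest bid — and therefore the realized revenue — in the new auction is at most that in the original, pointwise. Taking expectations over $U$ and $\bval_{-i}$ gives that the expected revenue of the new auction is at most that of the original, which is exactly the claim. The argument is short; the only points requiring care are citing \Lemma{rev-dominate} correctly to pass from revenue-curve dominance to stochastic dominance, and justifying the coordinatewise monotonicity of the second-order statistic, so I do not anticipate a genuine obstacle here.
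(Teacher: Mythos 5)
Your proposal is correct and follows essentially the same route as the paper's proof: pass from revenue-curve dominance to stochastic dominance via \Lemma{rev-dominate}, then observe that the second-highest bid is an order statistic that can only decrease. You simply make explicit (via the monotone coupling and the identity for the second-order statistic) the step the paper dismisses as "immediate."
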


\begin{proof}
	The revenue of the second price auction is the expected value of the second highest bid.  By Lemma~\ref{lem:rev-dominate}, the operation in the lemma statement replaces one value distribution by another stochastically dominated by the former.  It is immediate that this decreases the expected value of any order statistics, including the revenue of the second price auction.
\end{proof}
%
\begin{lemma}
	\LemmaName{ex-ante-flat}
	In a single item auction setting with values independently drawn from regular distributions whose revenue curves are $\Rev_1, \cdots, \Rev_n$, suppose the optimal solution to the ex ante revenue maximization problem are the quantiles $\exanteqi[1], \exanteqi[2], \ldots, \exanteqi[n]$, these remain an optimal solution (to the problem of ex ante revenue maximization) when each bidder~$i$'s value distribution is replaced by one whose revenue curve is a triangle whose three vertices are $(0, 0), (1, 0)$ and~$(\exanteqi, \Rev_i(\exanteqi))$.
\end{lemma}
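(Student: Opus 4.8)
The plan is to verify directly that the point $\exanteqi[1],\dots,\exanteqi[n]$ is both feasible and optimal for the \emph{new} ex ante program. Feasibility is immediate, since the feasible region $\{q : \sum_i q_i \le 1,\ q_i\in[0,1]\}$ does not change. Write $T_i$ for the triangular revenue curve of the $i$-th replacement bidder, i.e.\ the concave piecewise-linear function on $[0,1]$ with vertices $(0,0)$, $(\exanteqi,\Rev_i(\exanteqi))$, and $(1,0)$. The whole argument then reduces to two facts about $T_i$ versus $\Rev_i$: (a) $T_i(q)\le\Rev_i(q)$ for every $q\in[0,1]$, and (b) $T_i(\exanteqi)=\Rev_i(\exanteqi)$.

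Fact (b) is immediate from the definition of $T_i$. For fact (a) I would use the normalization $\Rev_i(0)=\Rev_i(1)=0$ adopted in Section~\ref{sec:prelim}: under it, the graph of $T_i$ on $[0,\exanteqi]$ is exactly the secant line of $\Rev_i$ joining the on-curve points $(0,\Rev_i(0))$ and $(\exanteqi,\Rev_i(\exanteqi))$, and on $[\exanteqi,1]$ it is the secant line joining $(\exanteqi,\Rev_i(\exanteqi))$ and $(1,\Rev_i(1))$. Since $\Rev_i$ is concave for a regular distribution, it lies above each of these secant lines on the corresponding subinterval, which is precisely $T_i\le\Rev_i$ on $[0,1]$. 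The degenerate cases ($\exanteqi\in\{0,1\}$, or $\Rev_i(\exanteqi)=0$) collapse $T_i$ onto the $q$-axis, so $T_i\equiv 0\le\Rev_i$ and $T_i(\exanteqi)=\Rev_i(\exanteqi)=0$ still hold trivially. I would also remark, for legitimacy of the replacement, that a triangular revenue curve is realized by a genuine regular distribution — its slope is non-increasing, hence so is the induced virtual value — which can be seen via the limiting uniform-distribution argument already used in Remark~\ref{rem:lb-HR}; this plays no role in the optimization argument itself.

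With (a) and (b) in hand the conclusion is a one-line chain: for every feasible $q$,
\[
\sum_{i=1}^n T_i(q_i)\ \le\ \sum_{i=1}^n \Rev_i(q_i)\ \le\ \sum_{i=1}^n \Rev_i(\exanteqi)\ =\ \sum_{i=1}^n T_i(\exanteqi),
\]
where the first inequality is pointwise domination (a), the second is optimality of $\exanteqi[1],\dots,\exanteqi[n]$ for the original ex ante program, and the equality is the matching of the peaks (b). Hence $\exanteqi[1],\dots,\exanteqi[n]$ maximizes the new objective over the (unchanged) feasible set, proving the lemma. There is no real obstacle here; the only point that needs a little care is the bookkeeping of the boundary/degenerate cases in establishing $T_i\le\Rev_i$, and making sure the normalization $\Rev_i(0)=\Rev_i(1)=0$ is invoked so that the triangle's two edges are honest secants of $\Rev_i$.
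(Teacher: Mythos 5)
Your proposal is correct and follows essentially the same route as the paper's proof: feasibility of the old optimum plus equality of the objectives at the peaks gives one direction, and pointwise domination of the triangle by the concave revenue curve (via the secant/chord argument, using $\Rev_i(0)=\Rev_i(1)=0$) gives the other. The paper states the domination step more tersely but the argument is identical.
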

This lemma was proved and used in \citet{AHNPY15}.
For completeness we give a proof.

\begin{proof}
	Note that $\exanteqi[1], \ldots, \exanteqi[n]$ constitute a feasible solution to the ex ante revenue in the setting with triangle revenue curves.  Therefore the ex ante optimal revenue in the new setting is no less than before.  However, it cannot be more either.  This is because each revenue curve in the new setting is pointwise dominated by the original one, by the concavity of the latter, so any feasible solution gives weakly less revenue in the new setting.  This shows the optimality of $\exanteqi[1], \ldots, \exanteqi[n]$.
\end{proof}

\begin{proof}[Calculation of Remark~\ref{rem:lb-single}]
    If we duplicate the deterministic bidder then the revenue is $1$.

    On the other, suppose that we duplicate the bidder whose cdf is $F_2(v) = 1 - \frac{1}{v+1}$.
    Let us analyze the revenue of the Vickrey auction in this case.
    Let $v_1$ be the value of the deterministic bidder (so $v_1 = 1$ with probability $1$)
    and $v_2, v_3 \sim F_2$.

    If $v_2, v_3 \leq 1$ (which occurs with probability $1/4$) then the Vickrey auction extracts $\max\{v_2, v_3\}$.
    So the average revenue, conditioned on $v_2, v_3 \leq 1$ is $\int_0^1 1 - \left( 2 - \frac{2}{t+1} \right)^2 \: \dd t = 4 \ln(4) - 5$.
    Hence, this case contributes $\ln(4) - 5/4$ to the revenue.

    If $v_2 \leq 1 \leq v_3$ or $v_3 \leq 1 \leq v_2$ (each occurring with probability $1/4$) then the revenue is $1$.
    These two cases contribute $1/2$ to the revenue.

    Finally, if $v_2, v_3 \geq 1$ (which occurs with probability $1/4$) then the revenue is $1 + \int_1^\infty \frac{4}{(t+1)^2} \: \dd t = 1 + 2$.
    So this case contributes $3/4$ to the revenue.

    Hence, the average revenue is $\ln(4)$.

    The revenue of the optimal auction with the original bidders is $\approx 2$ so the approximation ratio is $\frac{2}{\ln 4} \approx 1.44$.
\end{proof}

\begin{proof}[Calculation of Example~\ref{ex:n=3}]
	Note that the highest value in the supports of the latter two distributions is~$1$.  Let $u$ be the highest value among the bidders with values drawn from the latter two distributions.  Conditioning on any~$u$, the revenue of the Vickrey auction with duplicates is at most $u + \int_u^{\infty} \frac 1 {(v+1)^2} \: \dd v = u + \frac 1 {u + 1}$.  For $u \geq 0$ this is increasing in~$u$, and is $1.5$ when $u$ is~$1$.  However, with constant probability $u$ is strictly smaller than, say, $0.9$, and therefore in expectation the Vickrey auction with duplicates extracts a revenue strictly smaller than~$1.5$.
\end{proof}

\end{document}